\documentclass[12pt]{article}
\usepackage{geometry}
\usepackage{amsmath, amsthm, amsfonts, amssymb, bbm, bm, amsbsy}
\usepackage{enumitem}
\usepackage{graphicx}
\usepackage{xypic}
\usepackage[all]{xy}
\usepackage[english]{babel}
\usepackage[font=small,format=plain,labelfont=bf,up]{caption}
\usepackage[bottom]{footmisc}

\usepackage{hyperref}       
\hypersetup{       
   pdftex,                  
   colorlinks=true,        
   pdfstartview=FitH,      
   allcolors=[rgb]{0,0.1,0.5},        
   bookmarks=true,          
	 bookmarksdepth=section, 
   bookmarksopen=false,     
   bookmarksnumbered=true, 
	 backref, pagebackref  ,
   linktocpage=true
}

\theoremstyle{plain}

\newtheorem{theorem}{Theorem}
\newtheorem{lemma}[theorem]{Lemma}
\newtheorem{proposition}[theorem]{Proposition}

\newtheorem{corollary}[theorem]{Corollary}

\theoremstyle{definition}

\newtheorem{remark}[theorem]{Remark}
\newtheorem{example}[theorem]{Example}
\newtheorem{definition}[theorem]{Definition}
\newtheorem{notation}[theorem]{Notations}

 \numberwithin{equation}{section}
 \numberwithin{theorem}{section}

\def\be#1\ee{\begin{equation}#1\end{equation}}
\def\ba#1\ea{\begin{align}#1\end{align}}

\newcommand\eps           {\varepsilon}

\newcommand\Rb            {\mathbb{R}}
\newcommand\Zb            {\mathbb{Z}}

\newcommand{\void}[1]{}

\renewcommand{\vec}[1]{\mathbf{#1}}

\begin{document}

\thispagestyle{empty}
\def\thefootnote{\fnsymbol{footnote}}
\begin{flushright}
ZMP-HH/17-23\\
Hamburger Beitr\"age zur Mathematik 678
\end{flushright}
\vskip 3em
\begin{center}\LARGE
Existence and uniqueness of solutions 
\\
to Y-systems and TBA equations
\end{center}

\vskip 2em
\begin{center}
{\large 
Lorenz Hilfiker~~and~~Ingo Runkel~\footnote{Emails: {\tt lorenz.hilfiker@uni-hamburg.de}~,~{\tt ingo.runkel@uni-hamburg.de}}}
\\[1em]
Fachbereich Mathematik, Universit\"at Hamburg\\
Bundesstra\ss e 55, 20146 Hamburg, Germany
\end{center}

\vskip 2em
\begin{center}
  July 2017
\end{center}
\vskip 2em

\begin{abstract}
We consider Y-system functional equations of the form 
$$ 
\textstyle{
Y_n(x+i)Y_n(x-i)=\prod_{m=1}^N \left(1+Y_m(x)\right)^{G_{nm}}
}
$$ 
and the corresponding nonlinear integral equations of the Thermodynamic Bethe Ansatz.
We prove an existence and uniqueness result for solutions of these equations, subject to appropriate conditions on the analytical properties of the $Y_n$, in particular the absence of zeros in a strip around the real axis. The matrix $G_{nm}$ 
must have non-negative real entries, and be irreducible and diagonalisable over $\mathbb{R}$ with spectral radius less than 2. This includes the adjacency matrices of finite Dynkin diagrams, but covers much more as we do not require 
	$G_{nm}$ 
to be integers. Our results specialise to the constant Y-system, proving existence and uniqueness of a strictly positive solution in that case.
\end{abstract}

\setcounter{footnote}{0}
\def\thefootnote{\arabic{footnote}}

\newpage

{\small

\tableofcontents

}

\newpage

\section{Introduction and summary}

In this paper we investigate the existence and uniqueness of solutions to two related sets of equations. The first set consists of algebraic equations for $N$ analytic functions $Y_n$, and is an example for so-called Y-system functional equations. The second set consists of coupled
nonlinear integral equations for $N$ functions $f_n$, called thermodynamic Bethe ansatz equations, or TBA equations for short.
Y-systems and their relation to 
	TBA equations 
first appeared in \cite{Zamo:ADE}.

We will now describe these two sets of equations in more detail and  state our existence and uniqueness results. 
Afterwards we outline the standard argument connecting the Y-system to the TBA equations and stress the points where our approach differs from earlier works.

\subsubsection*{Main results}

Let us start by fixing some notation which we need to formulate the results and which we will use throughout this paper.

\begin{notation}\label{intro-notation}~
\begin{itemize}[leftmargin=1em]
\item Let $\mathbb{K}$ stand for $\mathbb{R}$ or $\mathbb{C}$. 
We denote by \[BC(\mathbb{R},\mathbb{K})\] the functions from $\mathbb{R}$ to $\mathbb{K}$ which are continuous and bounded, and by
$$
	BC_-(\mathbb{R},\mathbb{R})  
$$
the continuous real-valued functions on $\mathbb{R}$ which are bounded from below.
\item For $a>0$ we denote by $\mathbb{S}_a := \lbrace z\in\mathbb{C} | -a<\mathrm{Im}(z)<a\rbrace$ the open horizontal strip in $\mathbb{C}$ of height $2a$, and by $\overline{\mathbb{S}}_a$ its closure.
We define the spaces of functions
$$
	\mathcal{A}(\mathbb{S}_a) ~\supset~ \mathcal{BA}(\mathbb{S}_a) \ ,
$$
where $\mathcal{A}(\mathbb{S}_a)$ is the space of $\mathbb{C}$-valued functions which are analytic on $\mathbb{S}_a$ and have a continuous extension to 
	$\overline{\mathbb{S}}_a$, 
and $\mathcal{BA}(\mathbb{S}_a)$ consists of 
	those functions in $\mathcal{A}(\mathbb{S}_a)$
which are in addition bounded on 
	$\overline{\mathbb{S}}_a$.
\item
We fix the constants
$$
	N \in \mathbb{Z}_{>0}
	\quad ,
	\quad
	s \in \mathbb{R}_{>0} \ .
$$
\item We denote by 
$$
	\mathrm{Mat}_{<2}(N) \subset \mathrm{Mat}(N,\mathbb{R})
$$	
the subset of real-valued $N\times N$
matrices which can be diagonalised over the real numbers, and whose eigenvalues lie in the open interval $(-2,2)$.
\end{itemize}
\end{notation}

We can now describe the Y-system. For $\vec{G} \in \mathrm{Mat}(N,\mathbb{R})$ (with entries $G_{nm}$) and 
	$Y_1,\dots,Y_N \in \mathcal{A}(\mathbb{S}_s)$, 
the Y-system is the set of functional equations
\be\label{Y}
Y_n(x+is)Y_n(x-is)=\prod_{m=1}^N \left(1+Y_m(x)\right)^{G_{nm}}  \ ,\tag{Y} 
\ee
for $n=1,\dots,N$ and all $x \in \mathbb{R}$.
If $\vec{G}$ is not integer valued, one needs to give a prescription how to deal with the multi-valuedness of the right hand side. We will later do this by demanding $Y_n$ to be positive and real-valued on the real axis.

If $Y_n\in\mathcal{A}(\mathbb{S}_s)$ has no zeros,
we may pick $h_n\in\mathcal{A}(\mathbb{S}_s)$ such that $Y_n(z)=e^{h_n(z)}$ for all 
	$z\in\overline{\mathbb{S}}_s$. 
Denote $h_n(z)=\log Y_n(z)$. 
	We can think of a function $a_n\in\mathcal{A}(\mathbb{S}_s)$ as capturing the asymptotic behaviour of $Y_n(z)$ if $\log Y_n(z)-a_n(z)$ is bounded on 
	$\overline{\mathbb{S}}_s$. 
This condition is independent of the branch choice
	for the logarithm.
To formulate our first main theorem, we need to single out a certain type of asymptotic behaviour.

\begin{definition}
We call $\vec{a}\in\mathcal{A}(\mathbb{S}_s)^N$ (with components $a_n(z)$) 
	a {\sl valid asymptotics for \eqref{Y}} if
\begin{enumerate}
\item for $n=1,...,N\, $, $a_n$ is real valued on $\mathbb{R}$ and the functions $e^{-a_n(x)}$ and $\frac{d}{dx}e^{-a_n(x)}$ are bounded on $\mathbb{R}$,
\item $\vec a$ satisfies
\be\label{linear-eqn-for-asympt}
\vec{a}(x+is)+\vec{a}(x-is) = \vec{G}\cdot \vec{a}(x)
 \qquad \text{ for all } x \in \mathbb{R} \ .
\ee
\end{enumerate}
\end{definition} 

Recall from Perron-Frobenius theory that
a real $N\times N$ matrix is called {\sl non-negative} if all its entries are $\ge 0$, and {\sl irreducible} if there is no permutation of the standard basis
which makes it block-upper-triangular.
Our first main result is the following existence and uniqueness statement.

\begin{theorem}\label{main-theorem-Y}
Let $\vec{G}\in\mathrm{Mat}_{<2}(N)$ be non-negative and irreducible, and $\vec{a}\in\mathcal{A}(\mathbb{S}_s)^N$ a valid asymptotics for \eqref{Y}.
Then there exists a solution
$Y_1,\dots,Y_N \in \mathcal{A}(\mathbb{S}_s)$ to \eqref{Y} which satisfies, for $n=1,\dots,N$ ,
\begin{enumerate}
\item \label{Yproperties:real}
$Y_n(\mathbb{R})\subseteq\mathbb{R}_{>0}$~,
 \hfill\emph{(real\,\&\,positive)}
\item \label{Yproperties:roots} $Y_n(z)\neq 0$ for all $z\in\overline{\mathbb{S}}_s$~.
\hfill\emph{(no roots)}
\item \label{Yproperties:asymptotics}
$ \log Y_n(z)-a_n(z) \in\mathcal{BA}(\mathbb{S}_s)$.\hfill\emph{(asymptotics)}
\end{enumerate}
Moreover, this solution is the unique one in $\mathcal{A}(\mathbb{S}_s)$ which satisfies properties \ref{Yproperties:real}--\ref{Yproperties:asymptotics}.
\end{theorem}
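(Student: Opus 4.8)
The plan is to establish that, for functions analytic and bounded on the closed strip $\overline{\mathbb{S}}_s$, the system \eqref{Y} is equivalent to a nonlinear integral equation of Thermodynamic Bethe Ansatz type, and then to solve that integral equation by the Banach fixed-point theorem.

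First I would pass from a solution $Y_1,\dots,Y_N$ of \eqref{Y} satisfying \ref{Yproperties:real}--\ref{Yproperties:asymptotics} to the ``bounded part'' of its logarithm. Property \ref{Yproperties:roots} allows one to choose analytic branches $\log Y_n\in\mathcal{A}(\mathbb{S}_s)$, normalised by property \ref{Yproperties:real} so as to be real on $\mathbb{R}$; then $g_n:=\log Y_n-a_n$ lies in $\mathcal{BA}(\mathbb{S}_s)$ by property \ref{Yproperties:asymptotics} and is real on $\mathbb{R}$. With the positivity prescription for its right-hand side, \eqref{Y} reads $\log Y_n(x+is)+\log Y_n(x-is)=\sum_m G_{nm}\log(1+Y_m(x))$ on $\mathbb{R}$; writing $1+Y_m=Y_m(1+Y_m^{-1})$ and subtracting \eqref{linear-eqn-for-asympt} turns this into the functional equation
\be\label{fe-g}
g_n(x+is)+g_n(x-is)=\textstyle\sum_{m=1}^N G_{nm}\bigl(g_m(x)+L_m(x)\bigr),\qquad x\in\mathbb{R},
\ee
where $L_m(x):=\log\bigl(1+e^{-a_m(x)-g_m(x)}\bigr)$, and the right-hand side is a bounded continuous function of $x$ because $e^{-a_m}$ and $g_m$ are bounded. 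Next I would use the injectivity of the finite-difference operator $u\mapsto u(\cdot+is)+u(\cdot-is)$ on $\mathcal{BA}(\mathbb{S}_s)$: by a gluing argument any $u\in\mathcal{BA}(\mathbb{S}_s)$ with $u(\cdot+is)+u(\cdot-is)\equiv 0$ on $\mathbb{R}$ extends to a $4is$-periodic bounded entire function, hence is constant, hence $0$. On its range this operator is inverted by convolution with the kernel $\phi(x)=\bigl(4s\cosh(\pi x/2s)\bigr)^{-1}$, which is non-negative with $\int_{\mathbb{R}}\phi=\tfrac12$, has Fourier transform $(2\cosh sk)^{-1}$, and is analytic on $\mathbb{S}_s$ with simple poles at $\mathrm{Im}\,z=\pm s$. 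Hence, within $\mathcal{BA}(\mathbb{S}_s)^N$, equation \eqref{fe-g} is equivalent to the integral equation $g_n=\sum_m G_{nm}\,\phi\star(g_m+L_m)$. Conversely, I would check that any $\vec{g}\in BC(\mathbb{R},\mathbb{R})^N$ solving this integral equation actually belongs to $\mathcal{BA}(\mathbb{S}_s)^N$: the right-hand side is analytic on $\mathbb{S}_s$, so $\vec{g}$ is real-analytic on $\mathbb{R}$, and a Sokhotski--Plemelj analysis at the poles of $\phi$ on $\mathrm{Im}\,z=\pm s$ — using precisely this interior regularity of the source — provides the bounded continuous extension to $\overline{\mathbb{S}}_s$; then $Y_n:=e^{a_n+g_n}$, upon applying the difference operator and re-adding \eqref{linear-eqn-for-asympt}, solves \eqref{Y} with properties \ref{Yproperties:real}--\ref{Yproperties:asymptotics}.

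The second and constructive step is to solve the integral equation $\vec{g}=\mathcal{T}\vec{g}$ with $\mathcal{T}\vec{g}:=\bigl(\sum_m G_{nm}\,\phi\star(g_m+L_m)\bigr)_n$ on the Banach space $X:=BC(\mathbb{R},\mathbb{R})^N$. Let $\rho<2$ be the spectral radius of $\vec{G}$; since $\vec{G}$ is non-negative and irreducible, Perron--Frobenius yields an eigenvector $\vec{v}\in\mathbb{R}_{>0}^N$ with $\vec{G}\vec{v}=\rho\vec{v}$, and I would equip $X$ with the equivalent norm $\|\vec{g}\|_{\vec{v}}:=\max_n v_n^{-1}\|g_n\|_\infty$. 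Then $\mathcal{T}$ maps $X$ into $X$ (each $g_m+L_m$ is bounded and continuous), and it is a contraction by two elementary estimates: for every $m$ and $x$ the scalar map $u\mapsto u+\log(1+e^{-a_m(x)-u})$ has derivative $(1+e^{-a_m(x)-u})^{-1}\in(0,1)$, hence is $1$-Lipschitz, so $\vec{g}\mapsto(g_m+L_m)_m$ is $1$-Lipschitz with respect to $\|\cdot\|_{\vec{v}}$; and the linear map $\vec{u}\mapsto\bigl(\sum_m G_{nm}\,\phi\star u_m\bigr)_n$ has $\|\cdot\|_{\vec{v}}$-operator norm at most $\bigl(\int_{\mathbb{R}}\phi\bigr)\max_n v_n^{-1}\sum_m G_{nm}v_m=\tfrac12\rho$, because $\sum_m G_{nm}v_m=\rho v_n$. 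Hence $\mathcal{T}$ is a contraction with constant at most $\rho/2<1$, and the Banach fixed-point theorem yields a unique $\vec{g}\in X$. Setting $Y_n:=e^{a_n+g_n}$ gives the claimed solution: \ref{Yproperties:real} and \ref{Yproperties:roots} are immediate from the exponential form with real exponent on $\mathbb{R}$, and \ref{Yproperties:asymptotics} holds since $\log Y_n-a_n=g_n\in\mathcal{BA}(\mathbb{S}_s)$. Uniqueness follows because, by the first step, any solution of \eqref{Y} with properties \ref{Yproperties:real}--\ref{Yproperties:asymptotics} gives a bounded continuous solution of the integral equation, which is unique.

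I expect the first step to be the main obstacle, namely the injectivity of the difference operator on $\mathcal{BA}(\mathbb{S}_s)$ and, in the converse direction, the control of $\phi\star(\text{source})$ up to the boundary $\mathrm{Im}\,z=\pm s$ where $\phi$ is singular; the latter is where the automatic smoothness of the source on $\mathbb{R}$, a by-product of its analyticity in the strip, is indispensable. The contraction step is then routine, its only genuine idea being to keep the term $g_m$ on the right-hand side of \eqref{fe-g} rather than absorbing it into the linear operator — doing the latter would produce a kernel with Fourier transform $(2\cosh(sk)\,\mathrm{Id}-\vec{G})^{-1}\vec{G}$ and replace the contraction constant by $\rho/(2-\rho)$, which exceeds $1$ as soon as $\rho>1$ — together with measuring in the Perron-weighted norm, which gives the contraction constant $\rho/2$ uniformly in $\rho<2$ and in the choice of asymptotics $\vec{a}$. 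The hypothesis that $\vec{G}$ be diagonalisable over $\mathbb{R}$ plays no role in this solvability argument; it enters the surrounding analysis — the matrix form of the TBA kernel together with its positivity and decay, and the companion treatment of the TBA equations — through the spectral decomposition of $\vec{G}$.
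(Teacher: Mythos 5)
Your architecture is the same as the paper's --- invert the difference operator by a Green's-function convolution, convert \eqref{Y} into a TBA-type integral equation, and apply the Banach fixed point theorem in a Perron--Frobenius-weighted supremum norm --- but you execute it with a different, and in places leaner, choice of parameters. The paper works with the whole family of kernels $\Phi_{\vec C}$, $\vec C\in\mathrm{Mat}_{<2}(N)$, and obtains its contraction at the specific value $\vec C=\tfrac12\vec G$ (constant $\lambda_{\mathrm{PF}}/(4-\lambda_{\mathrm{PF}})$); you keep $\vec C=0$ throughout, i.e.\ the explicit scalar kernel $\phi_0(x)=(4s\cosh(\tfrac{\pi}{2s}x))^{-1}$ with $\int_{\mathbb{R}}\phi_0=\tfrac12$, split the nonlinearity as $\log(e^{-a}+e^{g})=g+\log(1+e^{-a-g})$ with the $1$-Lipschitz observation, and get the constant $\lambda_{\mathrm{PF}}/2$. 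Your computation is correct (it agrees with Proposition~\ref{UniqueGroundstateNdim} at $\vec C=0$, $p=\infty$, $q=1$), and it buys two things: positivity and the $L_1$-norm of the kernel are immediate rather than requiring the Neumann-series argument of Lemma~\ref{Phinonneg} and the Fourier integrals of Appendix~\ref{Fourier_coshm}, and, as you observe, the diagonalisability hypothesis on $\vec G$ becomes superfluous for this particular theorem. Your second deviation is to obtain the implication ``functional equation $\Rightarrow$ integral equation'' from injectivity of $u\mapsto u(\cdot+is)+u(\cdot-is)$ on $\mathcal{BA}(\mathbb{S}_s)$ (glue to a $4is$-periodic bounded entire function, apply Liouville), rather than from the direct contour-shift computation in the proof of Proposition~\ref{FuncRelToNLIE}; this is valid here because your source term is automatically Lipschitz on $\mathbb{R}$, so the converse direction applies to it and injectivity then identifies any $\mathcal{BA}$-solution of the difference equation with the convolution.

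Two caveats. First, the step you rightly flag as the main obstacle --- that $\phi_0\star h$ for $h$ bounded and H\"older continuous lies in $\mathcal{BA}(\mathbb{S}_s)$ and satisfies the difference equation up to the boundary $\mathrm{Im}\,z=\pm s$, where $\phi_0$ has poles --- is exactly where the paper invests Sections~\ref{GreensChapter}--\ref{section:phiconvolution} and Appendix~\ref{AppendixSokhotski}; you identify the right tool (Sokhotski--Plemelj) but do not carry it out, so this remains the bulk of the work. Second, when you take logarithms of \eqref{Y} you assert the identity $\log Y_n(x+is)+\log Y_n(x-is)=\sum_m G_{nm}\log(1+Y_m(x))$ outright; with your chosen branch of $\log Y_n$ this only holds a priori up to an additive constant in $2\pi i\mathbb{Z}$. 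The paper removes that constant by noting that reality of $Y_n$ on $\mathbb{R}$ and Schwarz reflection make the left-hand side equal to $2\,\mathrm{Re}\log Y_n(x+is)$, hence real, forcing the integer to vanish; you should add that one line.
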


	Recall that the logarithm in property~\ref{Yproperties:asymptotics} exists on all of $\overline{\mathbb{S}}_s$ as by condition \ref{Yproperties:roots}, $Y_n$ has no zeros, and that property~\ref{Yproperties:asymptotics} is not affected by the choice of branch for the logarithm.

Even though the unique solution $Y_1,\dots,Y_N$ is initially only defined on $\overline{\mathbb{S}}_s$, using \eqref{Y} and property~\ref{Yproperties:roots}, it is easy to see that $Y_n$ can be analytically continued at least to $\overline{\mathbb{S}}_{3s}$.

\medskip

	One important valid asymptotics for \eqref{Y} is simply $\vec{a}=0$,
 in which case the $Y_n$ themselves are bounded. We will see in Corollary~\ref{unique-constant-sol} below that then in fact the $Y_n$ are constant. 
The Perron-Frobenius eigenvector $\vec{w}$ of $\vec{G}$ provides a whole family of valid asymptotics.
By our assumptions on $\vec G$, $\vec{w}$ can be chosen to have positive entries and its eigenvalue lies strictly between $0$ and $2$
(see Theorem \ref{PF}).
Then for any choice of $\gamma \in \mathbb{R}_{>0}$ such that $\vec{G}\cdot\vec{w} = 2 \cos(\gamma) \vec{w}$, the functions
\be \label{asymptoticexample}
	\vec{a}(x) = e^{\gamma x/s} \, \vec{w} 
	\qquad \text{and} \qquad
	\vec{a}(x) = e^{- \gamma x/s} \, \vec{w} 
\ee
	are valid asymptotics for \eqref{Y}.
We can also take linear combinations with positive coefficients; in particular the symmetric choice
$\vec{a}(x) = r \cosh(\gamma x/s) \, \vec{w}$ 
is considered frequently in the context of massive relativistic quantum field theory, where $\vec{w}$ takes the role of the mass vector
and $r>0$ represents the volume.

\medskip

Next we discuss the TBA equations. Let $\vec{C} \in \mathrm{Mat}_{<2}(N)$ and consider the following Fourier transform of a matrix-valued function,
	for $x \in \mathbb{R}$,
\be\label{phiC-def-intro}
\Phi_{\vec{C}}(x)~=~\frac{1}{2\pi}\int_{-\infty}^{\infty} 
\hspace{-.4em}
e^{ikx}
\, \big( 2\cosh(sk)\mathbf{1}-\vec{C} \big)^{-1} \,dk \ .
\ee
The integral is well defined since by the condition on the eigenvalues of $\vec{C}$, the components of the integrand are Schwartz-functions.
Then the 
components of $\Phi_{\vec{C}}$ are also Schwartz-functions which moreover are real and even.
See Section~\ref{section-NdimGreen} for more details on $\Phi_{\vec{C}}$. 
For $\vec{a} \in BC_-(\Rb,\Rb)^N$, $\vec G \in \mathrm{Mat}(N,\mathbb{R})$, and $\vec{C}$ as above, the TBA equation is the following nonlinear integral equation for a vector-valued function $\vec{f} \in BC(\mathbb{R},\mathbb{R})^N$:
\be\label{TBAintro}
\vec{f}(x)
~=\,
\int_{-\infty}^{\infty}
\hspace{-.5em}
\Phi_{\vec C}(x-y)\cdot\Big(\vec{G}\cdot\log\left(e^{-\vec{a}(y)}+e^{\vec{f}(y)}\right)-\vec{C}\cdot\vec{f}(y)\Big) \;dy \ . \tag{TBA}
\ee
Here we used the short-hand notation $\log\left(e^{-\vec{a}(y)}+e^{\vec{f}(y)}\right)$ to denote the function $\mathbb{R} \to \mathbb{R}^N$ with entries
\be\label{functions-of-a-vector-convention}
	\big[ \log\left(e^{-\vec{a}(y)}+e^{\vec{f}(y)}\right) \big]_j := 
	\log\left(e^{-a_j(y)}+e^{f_j(y)}\right)  \ .
\ee
The integral \eqref{TBAintro} is well-defined because the components of $\vec a$ are bounded from below, $\vec f$ is bounded and the components of $\Phi_{\vec{C}}$ are Schwartz-functions.

Recall that a function
	$f : \mathbb{R} \to \mathbb{K}$
is called \textsl{H\"older continuous} 
if there exist $0<\alpha\leq 1$ and $C>0$, such that 
\be 
	\left|f(x)-f(y) \right|\leq C \left|x-y\right|^\alpha \hspace{1cm}  \text{for all}~~ x,y\in\mathbb{R}\ .
\ee
If $\alpha=1$, then $f$ is called
 \textsl{Lipschitz continuous}.

Our second main result is:

\begin{theorem}\label{main-theorem-TBA} Let  $\vec{G} \in \mathrm{Mat}_{<2}(N)$ be non-negative and irreducible, $\vec{C} \in \mathrm{Mat}_{<2}(N)$, and $\vec{a}\in BC_-(\mathbb{R},\mathbb{R})^N$ such that the components of $e^{-\vec{a}}$ are H\"older continuous. 
Then the following holds:
\begin{enumerate}[label=\roman*)]
\item\label{mainTBAunique} The TBA equation \eqref{TBAintro} has a unique solution $\vec{f}_\star \in BC(\mathbb{R},\mathbb{R})^N$. The function $\vec{f}_\star$ is independent of the choice of $\vec{C}$.
\item\label{mainTBAsolvesY} $\vec{f}_\star$ can be extended to a function in 
$\mathcal{BA}(\mathbb{S}_s)^N$,
which we denote also by $\vec{f}_\star$.
If $\vec{a}$ can be extended to a valid asymptotics, then the functions
	$Y_n(z) = e^{a_n(z)+f_{\star,n}(z)}$, 
for $z \in \overline{\mathbb{S}}_s$ and $n=1,\dots,N$, provide the unique solution to \eqref{Y} with the properties as stated in Theorem~\ref{main-theorem-Y}.
\end{enumerate}
\end{theorem}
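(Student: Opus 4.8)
The plan is to treat the TBA equation \eqref{TBAintro} as a fixed-point problem in the Banach space $BC(\mathbb{R},\mathbb{R})^N$ and to exploit the fact that, thanks to the spectral condition on $\vec C$, the linear operator $\vec f \mapsto \int \Phi_{\vec C}(x-y)\cdot\big(\vec G\cdot(\,\cdot\,)(y)-\vec C\cdot \vec f(y)\big)\,dy$ built out of the kernel \eqref{phiC-def-intro} is a bounded, well-behaved regularising operator. First I would establish the basic mapping and regularity properties of $\Phi_{\vec C}$: its components are even real Schwartz functions (as already noted), and convolution against them turns bounded functions into bounded H\"older (indeed smooth) functions, and even extends them analytically into the strip $\mathbb{S}_s$ — this is where the choice $2\cosh(sk)$ in the denominator enters, controlling the width of analyticity. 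Granting this, for part~\ref{mainTBAunique} I would rewrite \eqref{TBAintro} as $\vec f = \mathcal{N}(\vec f)$ where $\mathcal{N}(\vec f)(x) = \int \Phi_{\vec C}(x-y)\cdot\vec G\cdot\big(\log(e^{-\vec a(y)}+e^{\vec f(y)})-\vec a(y)\big)\,dy$ after absorbing the linear $\vec C$-term using the identity $(2\cosh(sk)\mathbf{1}-\vec C)^{-1}\vec C = (2\cosh(sk)\mathbf{1}-\vec C)^{-1}2\cosh(sk)-\mathbf 1$, i.e.\ using that $\Phi_{\vec C}$ is the Green's function of the difference operator $f(x+is)+f(x-is)-\vec C f(x)$. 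The key analytic inputs are the elementary inequalities $0 \le \log(e^{-a}+e^f) - \max(f,0) \le \log 2 + |a|_-$ controlling the size, and $\big|\log(e^{-a}+e^{f_1}) - \log(e^{-a}+e^{f_2})\big| \le |f_1 - f_2|$ giving that $\vec u \mapsto \log(e^{-\vec a}+e^{\vec u})$ is $1$-Lipschitz in sup-norm.

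For existence I would set up a Schauder-type argument: build an invariant order interval or ball in $BC(\mathbb{R},\mathbb{R})^N$ (using that $\Phi_{\vec C}\cdot\vec G$ has a definite sign structure — here non-negativity and irreducibility of $\vec G$, together with positivity properties of the Perron–Frobenius data of $\vec C$, guarantee the relevant entries of $\int \Phi_{\vec C}(x-y)\,dy\cdot\vec G$ are controlled), show $\mathcal{N}$ maps it into a relatively compact set via the Arzelà–Ascoli theorem (equicontinuity from the H\"older regularity produced by the convolution, and tightness because $\Phi_{\vec C}$ decays and $e^{-\vec a}$ is bounded), and invoke Schauder's fixed point theorem. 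Alternatively, and perhaps more cleanly, one can obtain the solution by monotone iteration starting from an explicit super- and sub-solution, using that $\vec u \mapsto \vec G\cdot\log(e^{-\vec a}+e^{\vec u})$ is monotone and that $\Phi_{\vec C}$ preserves the appropriate cone. For uniqueness I would compare two solutions $\vec f_1, \vec f_2$: their difference satisfies a linear fixed-point equation with operator norm governed by the sup-norm of a convolution kernel whose Fourier transform is $(2\cosh(sk)\mathbf1 - \vec C)^{-1}\vec G$ evaluated against the $1$-Lipschitz nonlinearity; since the spectral radius of $\vec G$ is $<2$ and $2\cosh(sk)\ge 2$, one checks this operator is a strict contraction in sup-norm (the worst case is $k=0$, giving $(2\mathbf 1 - \vec C)^{-1}\vec G$ which has spectral radius $<1$ because the eigenvalues of $\vec C$ lie in $(-2,2)$ and those of $\vec G$ in $(-2,2)$ — here one must be a little careful and may need to pass to a weighted norm adapted to a common Perron eigenvector, or argue via the pointwise estimate at $k=0$ only). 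Independence of $\vec C$ then follows because the fixed-point equation for different $\vec C$ differ only by the added-and-subtracted linear term, which cancels on any genuine solution; making this precise requires checking a solution for one $\vec C$ is a solution for another, which is the identity $\Phi_{\vec C} * (\vec C - \vec C') = (\text{shift operator terms})$ applied to $\vec f_\star$.

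For part~\ref{mainTBAsolvesY}, analytic continuation of $\vec f_\star$ into $\overline{\mathbb{S}}_s$ is immediate from the formula $\vec f_\star(x) = \int \Phi_{\vec C}(x-y)\cdot(\cdots)(y)\,dy$ once we know $\Phi_{\vec C}(z-y)$ is analytic in $z$ for $\mathrm{Im}(z)\in(-s,s)$ with continuous boundary values — this follows from $\Phi_{\vec C}$ being the inverse Fourier transform of something decaying like $e^{-s|k|}$, so its Fourier representation converges for $|\mathrm{Im}(z)|<s$; boundedness on $\overline{\mathbb{S}}_s$ gives $\vec f_\star\in\mathcal{BA}(\mathbb{S}_s)^N$. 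Finally, to see that $Y_n(z)=e^{a_n(z)+f_{\star,n}(z)}$ solves \eqref{Y}: apply the difference operator $z\mapsto(\,\cdot\,)(z+is)+(\,\cdot\,)(z-is)-\vec C\cdot(\,\cdot\,)(z)$ to $\vec f_\star$; because $\Phi_{\vec C}$ is the Green's function of this operator (verified by residues: the poles of $(2\cosh(sk)\mathbf1-\vec C)^{-1}$ in $k$ are exactly where $2\cos(\cdot)$ hits an eigenvalue of $\vec C$, reproducing the kernel of the difference operator), this integral collapses and yields $\vec f_\star(x+is)+\vec f_\star(x-is)-\vec C\vec f_\star(x) = \vec G\cdot\log(e^{-\vec a}+e^{\vec f_\star}) - \vec C\vec f_\star$, i.e.\ $\vec f_\star(x+is)+\vec f_\star(x-is) = \vec G\cdot\log(e^{-\vec a(x)}+e^{\vec f_\star(x)})$; adding the linear relation \eqref{linear-eqn-for-asympt} for $\vec a$ and exponentiating gives exactly \eqref{Y} for $Y_n = e^{a_n+f_{\star,n}}$, while positivity, absence of zeros, and the asymptotics $\log Y_n - a_n = f_{\star,n}\in\mathcal{BA}(\mathbb{S}_s)$ are built in. Uniqueness within the class of Theorem~\ref{main-theorem-Y} then reduces to uniqueness for \eqref{TBAintro} via the substitution $f_n = \log Y_n - a_n$ (one checks this substitution sends any solution of \eqref{Y} with the three listed properties to a $BC(\mathbb{R},\mathbb{R})^N$-solution of \eqref{TBAintro}).

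\medskip

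\textbf{Main obstacle.} I expect the crux to be the contraction/uniqueness estimate: showing that the linearised convolution operator is genuinely a strict contraction on $BC(\mathbb{R},\mathbb{R})^N$. The naive sup-norm bound involves $\sup_k \|(2\cosh(sk)\mathbf1-\vec C)^{-1}\vec G\|$ for some matrix norm, and since $\vec C$ and $\vec G$ need not commute and need not be symmetric, diagonalisability over $\mathbb{R}$ is used to pass to a basis in which both the Perron–Frobenius structure of $\vec G$ and the resolvent of $\vec C$ are simultaneously controllable — most likely one works with the norm $\|\vec v\| = \max_n |v_n|/w_n$ for a strictly positive common eigenvector-type weight $\vec w$, reducing the matrix estimate to the scalar inequality $\frac{|\lambda_G|}{2\cosh(sk)-\lambda_C}\le \frac{|\lambda_G|}{2-\lambda_C}<1$, valid because $-2<\lambda_C<2$ and $-2<\lambda_G<2$. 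Getting the cone/weight bookkeeping exactly right, and handling the non-commutativity of $\vec C$ and $\vec G$ so that the same weight works for both, is the delicate point; everything else (Schwartz decay of $\Phi_{\vec C}$, the Lipschitz bound on $\log(e^{-a}+e^{\cdot})$, analytic continuation, and the residue computation identifying $\Phi_{\vec C}$ as a Green's function) is routine.
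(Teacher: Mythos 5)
Your overall architecture (fixed point in $BC(\mathbb{R},\mathbb{R})^N$, Green's-function correspondence with the difference equation, $\vec C$-independence by cancellation of the linear term, then transferring everything to the Y-system) matches the paper, and your treatment of part~\ref{mainTBAsolvesY} is essentially the paper's argument. But the central quantitative step — the contraction/uniqueness estimate — is wrong as you state it, and this is precisely where the paper's one genuinely non-obvious idea lives. Your claimed scalar inequality $\tfrac{|\lambda_G|}{2\cosh(sk)-\lambda_C}\le\tfrac{|\lambda_G|}{2-\lambda_C}<1$ fails for most admissible $\vec C$: already for $\vec C=\vec G$ with $\lambda_{\mathrm{PF}}$ close to $2$ one gets $\lambda_G/(2-\lambda_C)\gg 1$, and nothing in the hypotheses prevents eigenvalues of $\vec C$ from approaching $2$. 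Moreover the sup-norm of the convolution operator is $\int\|K\|$, which coincides with $\|\hat K(0)\|$ only for non-negative kernels ($\Phi_{\vec C}$ need not be non-negative for general $\vec C\in\mathrm{Mat}_{<2}(N)$), and the Lipschitz matrix of $\vec u\mapsto \vec G\log(e^{-\vec a}+e^{\vec u})-\vec C\vec u$ is $M_{ij}=\max(|C_{ij}|,|G_{ij}-C_{ij}|)$ entrywise, not $\vec G$. The paper's resolution (Proposition~\ref{TBAuniquenessDynkin}) is to prove contraction \emph{only} at the sweet spot $\vec C=\tfrac12\vec G$, where $M=\tfrac12\vec G$, $\Phi_{\frac12\vec G}\ge 0$ by Lemma~\ref{Phinonneg}, and the weighted sup-norm built from the Perron--Frobenius eigenvector of $\vec G$ yields $\kappa=\lambda_{\mathrm{PF}}/(4-\lambda_{\mathrm{PF}})<1$. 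Existence and uniqueness for all other $\vec C$ are then \emph{inherited}, not re-proved by contraction: one shows (Propositions~\ref{FuncRelToNLIE} and~\ref{uniquenessFsystem}) that any $BC$-solution of the integral equation for one $\vec C$ solves the manifestly $\vec C$-independent difference equation $\vec f(x+is)+\vec f(x-is)=\vec G\cdot\log(e^{-\vec a}+e^{\vec f})$ and conversely. That equivalence is the step you defer to a parenthetical; it is not routine — it requires the Sokhotski--Plemelj analysis of $F_d[g]$ up to the boundary of $\overline{\mathbb{S}}_s$ and the automatic Lipschitz regularity of solutions (Lemma~\ref{alwayshoelder}) to make the contour shifts legitimate.

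A secondary gap: your Schauder alternative for existence does not go through on the non-compact domain $\mathbb{R}$. Uniform boundedness plus equicontinuity gives relative compactness in $\sup$-norm only on compact sets; genuine compactness in $BC(\mathbb{R})$ would require uniform control at infinity, and the solutions do not decay (for $\vec a=0$ the solution is a nonzero constant), so the image of a ball under your operator $\mathcal{N}$ is not relatively compact (translates of a fixed profile already defeat it). The paper deliberately avoids any compactness argument by using Banach's theorem. Your monotone-iteration variant is plausible for $\vec C=0$ but is left unexecuted and would still not supply uniqueness. Finally, in reducing uniqueness for \eqref{Y} to uniqueness for \eqref{TBAintro} via $f_n=\log Y_n-a_n$, one must rule out a $2\pi i\mathbb{Z}^N$ ambiguity when taking logarithms of the Y-system; the paper does this with the Schwarz reflection principle, a small but necessary step your sketch omits.
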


In the case $N=1$, $\vec{G}=\vec{C}=1$ and $\vec{a} = r\cosh(x)$, 
the existence and uniqueness of $\vec{f}_\star$ was already shown in \cite{FringKorffSchulz} (see discussion in Section~\ref{N=1-example-section}).
Theorems~\ref{main-theorem-Y} and~\ref{main-theorem-TBA}, as well as Corollary~\ref{unique-constant-sol} below, will be proved in Section~\ref{section:uniquenessY}.

\medskip

Next we specialise Theorems~\ref{main-theorem-Y} and~\ref{main-theorem-TBA} to the case $\vec{a}=0$. From the proofs of these theorems we get the following corollary. 

\begin{corollary}\label{unique-constant-sol}
For $\vec{a}=0$, the unique solution $\vec{f}_\star$ from Theorem~\ref{main-theorem-TBA} is constant, and correspondingly, the unique solution $Y_1,\dots,Y_N$ from Theorem~\ref{main-theorem-Y} is constant.
\end{corollary}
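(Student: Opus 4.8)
The plan is to exploit the uniqueness clauses of Theorems~\ref{main-theorem-Y} and~\ref{main-theorem-TBA} rather than to re-run the fixed-point argument from scratch: once we exhibit a constant solution satisfying all the required properties, it must coincide with the unique $\vec f_\star$ (resp.\ the unique $Y_n$). Setting $\vec a=0$, observe that $\vec a$ is trivially a valid asymptotics (it is real-valued, $e^{-a_n}=1$ and its derivative $0$ are bounded, and \eqref{linear-eqn-for-asympt} reads $0=0$), and that $e^{-\vec a}=\vec 1$ is certainly H\"older (indeed Lipschitz) continuous, so both theorems apply. By Theorem~\ref{main-theorem-TBA}\ref{mainTBAsolvesY}, it therefore suffices to prove the statement for $\vec f_\star$; the corresponding statement for $Y_n=e^{f_{\star,n}}$ follows immediately, since a constant $\vec f_\star$ gives constant $Y_n$.

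So I would look for a constant solution of \eqref{TBAintro} with $\vec a=0$. For a constant vector $\vec f(x)\equiv\vec v$, the right-hand side of \eqref{TBAintro} becomes $\big(\int_{-\infty}^{\infty}\Phi_{\vec C}(u)\,du\big)\cdot\big(\vec G\cdot\log(\vec 1+e^{\vec v})-\vec C\cdot\vec v\big)$, where $e^{\vec v}$ has entries $e^{v_j}$. From the definition \eqref{phiC-def-intro}, $\int_{-\infty}^\infty\Phi_{\vec C}(u)\,du$ is the value of the Fourier transform of $\Phi_{\vec C}$ at $k=0$, which is $(2\cosh(0)\mathbf 1-\vec C)^{-1}=(2\cdot\mathbf 1-\vec C)^{-1}$; this inverse exists because the eigenvalues of $\vec C$ lie in $(-2,2)$. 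Hence the constancy equation reduces to the finite-dimensional nonlinear system
\be\label{corollary-constant-system}
(2\cdot\mathbf 1-\vec C)\,\vec v \;=\; \vec G\cdot\log(\vec 1+e^{\vec v})-\vec C\cdot\vec v \ ,
\ee
that is, $2\vec v=\vec G\cdot\log(\vec 1+e^{\vec v})$, which is precisely the \emph{constant Y-system} written additively (with $Y_n=e^{v_n}$ it reads $Y_n^2=\prod_m(1+Y_m)^{G_{nm}}$, the $x$-independent case of \eqref{Y}). I would then argue that \eqref{corollary-constant-system} has a solution: either by quoting/deriving it from the proofs of the main theorems as the paper announces — the constant Y-system solution is the natural ``seed'' for the iteration, so it should fall out of the estimates already established — or, self-containedly, by a Perron-Frobenius / monotone-iteration argument on $\mathbb R^N$ using non-negativity and irreducibility of $\vec G$ together with the spectral bound $<2$ to get a fixed point of $\vec v\mapsto \tfrac12\vec G\cdot\log(\vec 1+e^{\vec v})$ in a suitable box. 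Any such $\vec v$ yields a constant $\vec f(x)\equiv\vec v\in BC(\mathbb R,\mathbb R)^N$ solving \eqref{TBAintro}, and by the uniqueness part of Theorem~\ref{main-theorem-TBA}\ref{mainTBAunique} this constant function \emph{is} $\vec f_\star$.

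The only genuine content beyond bookkeeping is establishing that \eqref{corollary-constant-system} is solvable and that the resulting constant function lies in the uniqueness class — but the latter is automatic ($\vec f_\star$ is defined as \emph{the} element of $BC(\mathbb R,\mathbb R)^N$ solving \eqref{TBAintro}), and the former is exactly what the proof of Theorem~\ref{main-theorem-TBA} produces along the way when fed the input $\vec a=0$, since the construction there starts from (a shift of) the constant solution. I expect the main obstacle, if one insists on a proof independent of the machinery of Section~\ref{section:uniquenessY}, to be pinning down the right invariant region for the monotone iteration of $\vec v\mapsto\tfrac12\vec G\cdot\log(\vec 1+e^{\vec v})$: one must use the spectral radius $<2$ to bound large $\vec v$ (the map is eventually contractive in the relevant direction because $\log(1+e^{v})\sim v$ for $v\to\infty$ and $\|\vec G\|<2$) and non-negativity to keep the iteration order-preserving, so that it converges to a fixed point. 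Given that the paper explicitly says ``\emph{From the proofs of these theorems we get the following corollary},'' I would in the final write-up simply extract the constant fixed point from that proof and invoke uniqueness, relegating the self-contained argument to a remark.
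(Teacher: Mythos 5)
Your reduction is correct as far as it goes: using \eqref{phiC-infinf-integral}, a constant $\vec f\equiv\vec v$ solves \eqref{TBAintro} iff $(2\cdot\mathbf 1-\vec C)\vec v=\vec G\cdot\log(\vec 1+e^{\vec v})-\vec C\cdot\vec v$, i.e.\ iff $\vec v$ solves the constant Y-system $2\vec v=\vec G\cdot\log(\vec 1+e^{\vec v})$, and then uniqueness in Theorem~\ref{main-theorem-TBA}\,\ref{mainTBAunique} forces $\vec f_\star\equiv\vec v$. But the existence of such a $\vec v$ is exactly the nontrivial content of the corollary (recall Remark~\ref{remark-constantY}\,\ref{spectralradius2isbad}: for spectral radius $\ge 2$ the constant Y-system has \emph{no} positive solution, so any existence argument must genuinely use $\lambda_{\mathrm{PF}}<2$), and you do not prove it — you defer it either to an unspecified monotone-iteration/Perron-Frobenius argument whose invariant region and convergence you acknowledge you have not pinned down, or to the claim that the proof of Theorem~\ref{main-theorem-TBA} ``produces the constant solution along the way.'' The latter is not accurate as stated: the Banach iteration in Proposition~\ref{TBAuniquenessDynkin} is not seeded at a constant solution and by itself gives no information about the shape of its fixed point.

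The missing observation, which is how the paper closes this gap in one line, is that the contraction operator $I$ from \eqref{contractingintegralforg} (with $\vec C=\tfrac12\vec G$ and $\vec a=0$) maps constant functions to constant functions, again by \eqref{phiC-infinf-integral}. The constants form a closed, hence complete, subspace of $\left(BC(\mathbb{R},\mathbb{R})^N,\|\cdot\|_{\infty_p}\right)$, so the restriction of $I$ to this subspace is a contraction on a complete metric space and has a fixed point there; by uniqueness of the fixed point of $I$ on the full space, $\vec f_\star$ is that constant. Equivalently, the finite-dimensional map $\vec v\mapsto(2\cdot\mathbf 1-\tfrac12\vec G)^{-1}\cdot\left(\vec G\cdot\log(\vec 1+e^{\vec v})-\tfrac12\vec G\cdot\vec v\right)$ is itself a contraction on $\mathbb{R}^N$ in the weighted norm, by the very estimate of Section~\ref{uniqunessproofTBA}. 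Inserting this observation turns your proposal into a complete proof; without it, the existence of a constant solution remains unestablished.
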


\begin{remark}\label{remark-constantY}
~
\begin{enumerate}[label=\roman*)] \setlength{\leftskip}{-1em}
\item
The constant case is interesting in itself. The functional equations \eqref{Y} turn into the {\sl constant Y-system}
\be\label{Yconst}
	Y_n^{\,2}
~=~\prod_{m=1}^N \left(1+Y_m\right)^{G_{nm}}  \ ,
\ee
for $Y_1,\dots,Y_N \in \mathbb{C}$.
Suppose $\vec{G} \in \mathrm{Mat}_{<2}(N)$ is non-negative and irreducible as in Theorem~\ref{main-theorem-Y}.
Since a real and positive solution to \eqref{Yconst} also solves \eqref{Y}
and satisfies conditions~\ref{Yproperties:real}--\ref{Yproperties:asymptotics} 
in Theorem~\ref{main-theorem-Y} (for $\vec a =0$),
by Corollary~\ref{unique-constant-sol} the constant Y-system 
\textsl{has a unique positive solution}.
This extends a result of \cite{Nahm:2009hf,Inoue:2010a}, 
where symmetric and positive definite $\vec{G}$ were considered, as well as adjacency matrices of finite Dynkin diagrams.
\item
If $\vec{G}$ is the adjacency matrix of a finite Dynkin diagram,
explicit trigonometric expressions for the unique positive solution to 
the constant Y-system (and more general versions thereof)
are known or conjectured, see \cite{Kirillov:1989} and \cite[Sec.\,14]{KunibaNakanishiSuzuki}.
\item\label{spectralradius2isbad}
If $\vec{G}$ has spectral radius
	$\ge 2$, 
it is shown in \cite[Sec.\,4]{Tateo:DynkinTBAs} that the constant Y-system
\eqref{Yconst} does not possess a real positive solution at all.
This shows that for $\vec a=0$, the condition on the spectral radius in Theorems~\ref{main-theorem-Y} and~\ref{main-theorem-TBA} is sharp.
\end{enumerate}
\end{remark}

\subsubsection*{Background on Y-systems and TBA equations}

The Thermodynamic Bethe Ansatz 
was developed to study thermodynamic properties of a gas of particles moving on a circle \cite{YangYang69}. The version for relativistic particles  whose scattering matrix factorises into a product of two-particle scattering matrices was given in \cite{Zamolodchikov:1989cf}. The reformulation as a Y-system was first described in \cite{Zamo:ADE}. A review of Y-systems and their applications can be found for example in \cite{KunibaNakanishiSuzuki}. Below we sketch the transformations linking the Y-system and the TBA equation, see \cite{Zamo:ADE} and e.g.\ \cite{Tateo:DynkinTBAs,Dorey:2007zx,vanTongeren:TBAreview}.

We note that while our proof follows the standard steps,
we are not aware of a previous proof of the correspondence between Y-systems and TBA equations in the literature, 
in the sense that all analytic questions are carefully addressed. To provide all these details was one of the motivations to write the present paper.

\medskip

Rewrite $Y_n$ in \eqref{Y} as $Y_n(z)=\exp(f_n(z)+a_n(z))$, where $f_n(z)$ are bounded functions and $a_n(z)$
are valid asymptotics for \eqref{Y}.
Upon taking the logarithm, one verifies that the $a_n(z)$ cancel out and one remains with the set of finite difference equations 
\be
f_n(x+is)+f_n(x-is)=\sum_{m=1}^N G_{nm}\log\left(e^{-a_m(x)}+e^{f_m(x)}\right)\ .
\ee 
Even though it looks like a trivial modification of the above equation, it will be crucial for us to add a linear term in $\vec{f}$ to both sides (we switch to vector notation to avoid too many indices, recall also the convention \eqref{functions-of-a-vector-convention}),
\be\label{intro-nonlinear-rhs}
\vec{f}(x+is)+\vec{f}(x-is) -\vec{C}\cdot\vec{f}(x) 
= 
\vec{G}\cdot\log\left(e^{-\vec{a}(y)}+e^{\vec{f}(y)}\right)-\vec{C}\cdot\vec{f}(x) 
 \ . 
\ee
To get rid of the nonlinearity for a while, we replace the $\vec f$-dependent function on the right hand side simply by a suitable function $\vec{g}$,
\be\label{intro-f+f-Cf=g}
\vec{f}(x+is)+\vec{f}(x-is) -\vec{C}\cdot\vec{f}(x) = \vec{g}(x) \ . 
\ee
The difference equation can now be solved by a Greens-function-like approach. Namely, the function $\Phi_{\vec C}$ from \eqref{phiC-def-intro} gives rise to a representation of the 
Dirac $\delta$-distribution (see Lemma~\ref{Phiproperties} for details):
\be
\Phi_{\vec C}(x+is)+\Phi_{\vec C}(x-is)-\vec{C}\cdot \Phi_{\vec C}(x) = \delta(x)\mathbf{1}_N \ .\ee
This allows one to write a solution to the functional equation \eqref{intro-f+f-Cf=g} as an integral,
\be\label{intro_convolution-to-solve-difference}
\vec{f}(x)=\int_{-\infty}^{\infty}\Phi_{\vec C}(x-y)\cdot\vec{g}(y) \;dy \ .
\ee
The only detailed proof of this that we are aware of is \cite[Lem.\,2]{TraceyWidom}, which treats the
case
$N=1$ and $\vec{C}=0$ and imposes 
	a decay condition on $\vec{f}(x)$ for $x\rightarrow\pm\infty$. 
	Therefore, in Section~\ref{section-soldefeqn} we give a proof in the generality we require.

Substituting the right hand side of \eqref{intro-nonlinear-rhs} for $\vec g$ in \eqref{intro_convolution-to-solve-difference}
produces \eqref{TBAintro}.

\begin{remark} \label{remark-S-matrix-relation}
In the case $\vec{C}=0$, the matrix $\Phi_{\vec C}(x)$ is
proportional to the identity matrix
and corresponds to the \textsl{standard kernel} 
(often denoted by $s$) which produces the
\textsl{universal} or \textsl{simplified TBA equations}
of the physics literature.
The case $\vec{C}=\vec{G}$ yields the canonical TBA equations which emphasise the relation to the relativistic
scattering matrix $\vec{S}(x)$ if such is available (see Remark~\ref{remark-Smatrix-exists}).
Specifically, we have (see e.g.\ \cite{Dorey:2007zx})
\be\label{PhiG-logS}
\left[\Phi_{\vec G}(x)\cdot \vec{G}\right]_{nm} = \frac{i}{2\pi}\frac{d}{dx}\log\left(S_{nm}(x)\right)\ ,
\ee
More details and references can be found e.g.\ in \cite{Zamo:ADE,Tateo:DynkinTBAs,vanTongeren:TBAreview}.
Note that our Green's function $\Phi_{\vec G}(x)$ has to be multiplied with $\vec{G}$ to obtain the canonical kernel used in the physics TBA-literature.
In this paper, we consistently treat the factor $\vec{G}$ not as part of the kernel, but absorb it in the function $\vec{g}(x)$. This is a natural convention for $\vec{C}=0$, and we preserve this convention for general $\vec{C}$.
\end{remark}

Allowing for arbitrary $\vec{C} \in \mathrm{Mat}_{<2}(N)$,
not just $\vec C = 0$ or $\vec G$, is one place where we work in greater generality 
than the physics literature we are aware of.
For us it will be important to choose $\vec{C} = \tfrac 12 \vec{G}$, as this will allow us to apply the Banach Fixed Point Theorem to find a unique solution to the integral equation (Proposition~\ref{TBAuniquenessDynkin}).

The other place where we allow for greater generality than considered before is in the choice of the matrix $\vec G$, which can have non-negative real entries, rather than 
	just
integers. 
In the non-negative integral case, the $\vec{G}$ which fit our assumptions include in particular the adjacency matrix of finite Dynkin diagrams  
or tadpole graphs ($T_N=A_{2N}/\mathbb{Z}_2$) -- these are called \textsl{Dynkin TBAs} in \cite{Tateo:DynkinTBAs}.

\medskip

Existence of a solution to equations similar to \eqref{TBAintro} has in some cases also been argued constructively. In \cite{YangYang69} and \cite{Lai} solutions to some specific TBA equations (albeit with $\Phi_{\vec{C}}(x)$ replaced by functions substantially different from ours) are constructed from a specific starting function by iterating the equations and showing uniform convergence.

A different approach to existence and uniqueness of solutions to the TBA equation \eqref{TBAintro} is suggested in a footnote in \cite[Sec.\,3.2]{KlassenMelzer91}, where the authors propose to use a fixed point theorem due to Leray, Schauder and Tychonoff. A detailed proof is, however, not provided.

Various methods to solve equations of the general form \eqref{TBAintro}, so called Hammerstein equations, are discussed in \cite{AppellChen}, including several fixed point theorems. In particular their example 1 is similar in spirit to \eqref{TBAintro}.

There are also other types of nonlinear integral equations relevant to the study of integrable models, which often share many features with TBA equations.
Existence and uniqueness of solutions to such an  equation of Destri-de-Vega type in the XXZ model has been investigated in \cite{Kozlowski}.

\subsubsection*{Structure of paper}

In Section~\ref{section-soldefeqn} we give a detailed statement and proof of the above claim that \eqref{intro_convolution-to-solve-difference} solves \eqref{intro-f+f-Cf=g}, see Proposition~\ref{FuncRelToNLIE}. In Section~\ref{section-unique-integral-soln} we apply the Banach Fixed Point Theorem to obtain a unique solution to \eqref{TBAintro} in the case $\vec C = \tfrac12 \vec G$, see Proposition~\ref{TBAuniquenessDynkin}. Section~\ref{section:uniquenessY} contains the 
	proofs of Theorems~\ref{main-theorem-Y}, \ref{main-theorem-TBA} and of Corollary~\ref{unique-constant-sol}. 
Finally, in Section~\ref{section-outlook} we discuss some open questions.

\subsubsection*{Acknowledgements}
We would like to thank
	Nathan Bowler,
	Andrea Cavagli\`a,
	Patrick Dorey,
	Clare Dunning,
	Andreas Fring,
	Annina Iseli,
	Karol Kozlowski,
	Louis-Hadrien Robert,
	Roberto Tateo,
	J\"org Teschner,
	Stijn van Tongeren,
	Alessandro Torrielli,
	Beno\^\i t Vicedo,
and
	G\'erard Watts
for helpful discussions and comments.
LH is supported by the 
SFB 676 ``Particles, Strings, and the Early Universe'' of the German Research Foundation (DFG).

\section{Solution to a set of difference equations}
\label{section-soldefeqn}

For two functions $\vec{F} : \mathbb{C} \to \mathrm{Mat}(N,\mathbb{C})$ with components $F_{nm}$ and  $\vec{g}:\mathbb{R}\rightarrow\mathbb{C}^N$ with components $g_n$ let us formally denote by
\be\left(\vec{F}\star \vec{g}\right)(z):=\int_{-\infty}^\infty \vec{F}(z-t)\cdot \vec{g}(t)\ dt
\ee
the convolution product, where the components of the integrand are
$[\vec{F}(z-t)\cdot \vec{g}(t)]_n = \sum_{m=1}^N F_{nm}(z-t) g_m(t)$.
In this section we will prove the following important proposition which will allow us to relate finite difference equations and integral equations.  Recall from Notations~\ref{intro-notation} the definition of the subset
$\mathrm{Mat}_{<2}(N) \subset \mathrm{Mat}(N,\mathbb{R})$, and that we fixed constants $N \in \mathbb{Z}_{>0}$ and $s \in \mathbb{R}_{>0}$. Recall also the definition of $\Phi_{\vec C}$ from \eqref{phiC-def-intro}.

\begin{proposition}\label{FuncRelToNLIE}
Let $\vec{C}\in\mathrm{Mat}_{<2}(N)$,
 $\vec{f} :\mathbb{R}\rightarrow\mathbb{C}^N$ and $\vec{g} \in BC(\mathbb{R},\mathbb{C})^N$.
Consider the following two statements:
\begin{enumerate}
\item \label{ffg-funrel} $\vec{f}$ is real analytic and can be analytically continued to a function $\vec{f}\in \mathcal{BA}(\mathbb{S}_s)^N$ satisfying the functional equation
\be\label{ffuncrel}
\vec{f}(x+is)+\vec{f}(x-is) -\vec{C}\cdot\vec{f}(x) = \vec{g}(x) 
\qquad \text{for all}~~ x\in\mathbb{R}\ .
\ee
\item \label{ffg-inteq} $\vec{f}$ and $\vec{g}$ are related via the convolution
\be\label{fconv}
\vec{f}(x) = \left(\Phi_{\vec C}\star \vec{g}\right)(x)\hspace{1cm} \qquad \text{for all}~~ x\in\mathbb{R} \ .
\ee
\end{enumerate}
We have that \ref{ffg-funrel} implies \ref{ffg-inteq}. If the components of $\vec g$ are in addition H\"older continuous, then \ref{ffg-inteq} implies \ref{ffg-funrel}.
\end{proposition}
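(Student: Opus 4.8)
The plan is to prove the two implications separately, using Fourier analysis as the main tool.

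For the direction \ref{ffg-funrel} $\Rightarrow$ \ref{ffg-inteq}, I would start from a function $\vec f \in \mathcal{BA}(\mathbb{S}_s)^N$ satisfying \eqref{ffuncrel} and show it must coincide with $\Phi_{\vec C}\star\vec g$. The idea is to work on the Fourier side: one needs to make sense of the Fourier transform of $\vec f$, which is only bounded, hence a tempered distribution. The key computational fact is that $\Phi_{\vec C}$ as defined in \eqref{phiC-def-intro} satisfies the distributional identity $\Phi_{\vec C}(x+is)+\Phi_{\vec C}(x-is)-\vec C\cdot\Phi_{\vec C}(x)=\delta(x)\mathbf 1_N$ (this is the content of Lemma~\ref{Phiproperties}, which I may assume). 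Taking Fourier transforms of \eqref{ffuncrel}, the operator $\vec f\mapsto \vec f(\,\cdot+is)+\vec f(\,\cdot-is)-\vec C\cdot\vec f(\,\cdot)$ becomes multiplication by the matrix symbol $(2\cosh(sk)\mathbf 1-\vec C)$, which is invertible for all real $k$ by the spectral condition on $\vec C$. Hence $\hat{\vec f}(k)=(2\cosh(sk)\mathbf 1-\vec C)^{-1}\hat{\vec g}(k)=\widehat{\Phi_{\vec C}}(k)\hat{\vec g}(k)$, so that $\vec f=\Phi_{\vec C}\star\vec g$ as distributions, and since the right side is a genuine bounded continuous function (as $\Phi_{\vec C}$ is Schwartz and $\vec g$ bounded continuous), the identity holds pointwise. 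The delicate point here is justifying the Fourier manipulations for a merely bounded (not integrable or decaying) function $\vec f$; I would phrase this carefully in the language of tempered distributions, or alternatively test against a dense class of Schwartz functions and use that $\vec f$ restricted to the strip is analytic and bounded so that the shifts $x\mapsto \vec f(x\pm is)$ interact well with test functions via contour shifting.

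For the converse \ref{ffg-inteq} $\Rightarrow$ \ref{ffg-funrel}, assume in addition that the components of $\vec g$ are H\"older continuous and set $\vec f:=\Phi_{\vec C}\star\vec g$. First I would check that $\vec f$ extends to $\overline{\mathbb{S}}_s$: since $\Phi_{\vec C}(z)=\frac{1}{2\pi}\int e^{ikz}(2\cosh(sk)\mathbf 1-\vec C)^{-1}dk$ converges and is analytic precisely for $|\mathrm{Im}(z)|<s$ (the poles of $(2\cosh(sk)\mathbf 1-\vec C)^{-1}$ in $k$ being located so that $e^{ikz}$ forces $|\mathrm{Im}(z)|$ up to $s$, with a continuous boundary extension), the convolution $\Phi_{\vec C}\star\vec g$ inherits analyticity on $\mathbb{S}_s$ and a continuous bounded extension to $\overline{\mathbb{S}}_s$ from boundedness of $\vec g$ and integrability/decay of $\Phi_{\vec C}$ and its shifts. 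This gives $\vec f\in\mathcal{BA}(\mathbb{S}_s)^N$. Then I would verify the functional equation \eqref{ffuncrel} directly:
$$
\vec f(x+is)+\vec f(x-is)-\vec C\cdot\vec f(x)=\big((\Phi_{\vec C}(\,\cdot+is)+\Phi_{\vec C}(\,\cdot-is)-\vec C\cdot\Phi_{\vec C}(\,\cdot))\star\vec g\big)(x),
$$
and the bracket equals $\delta(x)\mathbf 1_N$ by Lemma~\ref{Phiproperties}, so the right side is $\vec g(x)$. The subtlety here is that $\Phi_{\vec C}(x\pm is)$ is only a \emph{boundary value} of the analytic function — these boundary kernels need not be integrable (they typically decay only like $1/x$), so the convolution $\Phi_{\vec C}(\,\cdot\pm is)\star\vec g$ against a bounded $\vec g$ requires care and is where the H\"older hypothesis on $\vec g$ enters: one splits $\vec g(y)=\vec g(x)+(\vec g(y)-\vec g(x))$ near $y=x$, the constant part pairing with the (principal-value) integral of the kernel and the difference part being integrable against the $1/x$-type singularity thanks to the H\"older estimate.

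The main obstacle I expect is precisely this last analytic point: controlling the non-absolutely-convergent boundary convolutions $\Phi_{\vec C}(\,\cdot\pm is)\star\vec g$ and rigorously establishing that their (regularized) sum reproduces the $\delta$-function identity when paired with $\vec g$. Equivalently, one must show that the formal contour-shifting argument ``$\vec f(x\pm is)=(\Phi_{\vec C}\star\vec g)(x\pm is)$'' is legitimate despite the loss of decay at the edge of the strip. I would handle this by a limiting argument, evaluating $\vec f$ at $x\pm i(s-\epsilon)$ where everything converges absolutely, verifying the shifted functional relation there (using the $\delta$-like property of $\Phi_{\vec C}$ at height $s-\epsilon$, or rather a smoothed version), and then letting $\epsilon\to 0$ using the continuous extension to $\overline{\mathbb{S}}_s$ established above together with a dominated-convergence or equicontinuity argument powered by the H\"older modulus of $\vec g$.
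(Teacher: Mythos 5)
Your direction \ref{ffg-inteq} $\Rightarrow$ \ref{ffg-funrel} is essentially the paper's argument: analyticity of $\Phi_{\vec C}\star\vec g$ on $\mathbb{S}_s$, a Sokhotski--Plemelj-type boundary analysis (splitting $\vec g(y)=\vec g(x)+(\vec g(y)-\vec g(x))$ against the simple pole of $\phi_d$ at $\pm is$, which is exactly where the H\"older hypothesis enters), and then a limit $y\nearrow s$ combined with the $\delta$-representation of Lemma~\ref{Phiproperties}. One small correction there: the obstruction at the boundary is the \emph{local} $1/(x-x_0)$ singularity of $\Phi_{\vec C}(\,\cdot\pm is)$ coming from the first-order poles of $\phi_d$ at $\pm is$, not slow decay at infinity ($\phi_d(x\pm is)$ still decays rapidly as $x\to\pm\infty$).

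The gap is in \ref{ffg-funrel} $\Rightarrow$ \ref{ffg-inteq}. The Fourier-multiplier argument you propose does not go through for a merely bounded $\vec f$: the symbol of the shift operator is $2\cosh(sk)\mathbf 1-\vec C$, which grows like $e^{s|k|}$, and a tempered distribution cannot in general be multiplied by an exponentially growing function. Equivalently, the identity $\langle\widehat{\vec f(\cdot\pm is)},\varphi\rangle=\langle\hat{\vec f},e^{\mp sk}\varphi\rangle$ does not make sense because $e^{\mp sk}\varphi(k)$ is not Schwartz when $\varphi$ is (take $\varphi(k)=1/\cosh(sk/2)$). So the key equation $\hat{\vec f}=(2\cosh(sk)\mathbf 1-\vec C)^{-1}\hat{\vec g}$ is not established, and with it neither the formula nor the implicit uniqueness statement. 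Your fallback (``test against Schwartz functions and contour-shift'') is the right instinct, but it is not an optional refinement --- it has to become the proof, and in the paper it does, in real space: one convolves the functional equation \eqref{ffuncrel} with the Schwartz kernel $\Phi_{\vec C}$; for $0<y<s$ one moves the imaginary shift from $\vec f$ onto the kernel by a contour deformation that stays strictly inside the open strip, $\int\phi_d(x-t)\,\vec f(t\pm iy)\,dt=\int\phi_d(x\pm iy-t)\,\vec f(t)\,dt$ (the vertical contour pieces vanish by Corollary~\ref{boundaryintegrals}); then one lets $y\nearrow s$, using dominated convergence on one side and the $\delta$-representation $\lim_{y\nearrow s}\bigl(\Phi_{\vec C}^{[+y]}+\Phi_{\vec C}^{[-y]}-\vec C\cdot\Phi_{\vec C}\bigr)\star\vec f=\vec f$ on the other. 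This route never multiplies a distribution by an exponentially growing symbol and never convolves $\vec f$ against the non-locally-integrable boundary kernel $\Phi_{\vec C}(\,\cdot\pm is)$, which is precisely how the difficulty you flagged is circumvented. As written, your plan for this direction identifies the obstacle but does not resolve it.
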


Such results have been widely used in the physics literature, but the only rigorous proof we are aware of is in \cite[Lem.\,2]{TraceyWidom} for the special case $N=1$ and
 $\vec{C}=0$, and under
	a decay condition on $\vec{f}(x)$.
  Hence, we will give a complete proof here.

The basic reasoning of our proof is the same as in the physics literature, as outlined in the introduction. We take care to prove all the required analytical properties, which to our knowledge has not been done before in this generality. We also note that the observation that Proposition~\ref{FuncRelToNLIE} applies to all $\vec{C} \in \mathrm{Mat}_{<2}(N)$
(rather than just $\vec{C}=0$ and
adjacency matrices of certain graphs) seems to be new.
The proof relies on a number of ingredients developed in Sections~\ref{GreensChapter}--\ref{section:phiconvolution}. The proof 
of Proposition~\ref{FuncRelToNLIE} itself is given in Section~\ref{GreenSoln}.

\subsection{The Green's function $\phi_d(z)$}
\label{GreensChapter}

In this subsection we introduce a family of meromorphic functions $\phi_d(z)$, parametrized by a real number
\be\label{d_in_(-2,2)}
	d\in(-2,2) \ .
\ee	 
In this section we adopt the convention that whenever the parameter $d$ appears, it is understood that it is chosen from the above range.

The function $\phi_d$ will be central to our problem since it plays, in analogy with the theory of differential equations, the role of a Green's functions for the difference operator
\be\label{diffop}
D[f](x) \,:=\, f(x+is) + f(x-is) - d\cdot f(x) \ .
\ee

We start by defining the function $\phi_d$ on the real axis in terms of a Fourier integral representation.

\begin{definition}
The function $\phi_d:\mathbb{R}\rightarrow\mathbb{R}$ is defined by
\be\label{deltastandardrep}
\phi_d(x):=\frac{1}{2\pi}\int_{-\infty}^{\infty} \frac{e^{ikx}}{2\cosh(sk)-d} \, dk .
\ee
\end{definition}
Note that $\phi_d(x)$ is real and even, since it is the Fourier transform of a real and even function. Moreover, $(2\cosh(sk)-d)^{-1}$ is in the Schwartz space of rapidly decaying functions; thus also $\phi_d(x)$ is a Schwartz function, and Fourier inversion holds.

\begin{example} \label{phi_0} Consider the case $d=0$. The Fourier integral can then be computed explicitly, with the result
\be
\phi_0(z) = \frac{1}{4s\cosh\left(\frac{\pi}{2s}z\right)}\ .
\ee
This function is called the \textsl{universal kernel} or \textsl{standard kernel} in the physics literature.
It has a meromorphic continuation to the whole complex plane, with poles of first order in $z=is(1+2\mathbb{Z})$.
\end{example}

Explicit expressions for $\phi_d$ in terms of hyperbolic functions for other specific values of $d$ can be found e.g.\ in~\cite[App.\,D]{Dorey:2007zx} and \cite[Eqn.\,4.22]{BLZ4}. A general expression is given in Remark~\ref{rem:should-have-found-this-earlier} below.
Here we will proceed by deducing the analytic structure of $\phi_d(z)$ from its definition in \eqref{deltastandardrep}.

Recall that smoothness of a function is related to the rate of decay of its Fourier transform. If the decay is exponential, analytic continuation is possible; the faster the exponential decay, the further one can analytically continue: 

\begin{lemma} \label{fourieranalytic} Let $f(x)$ be a function on $\mathbb{R}$ whose Fourier transform $\hat{f}(k)$ exists. Suppose there exist constants $A,a>0$ such that $|\hat{f}(k)|\leq A\exp(-a|k|)$ for all $k\in\mathbb{R}$. Then $f(x)$ has an analytic continuation to the strip $\mathbb{S}_a$.
\end{lemma}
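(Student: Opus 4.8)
The plan is to prove Lemma~\ref{fourieranalytic} directly from the Fourier inversion formula, by writing
\be
	f(x) = \frac{1}{2\pi}\int_{-\infty}^{\infty} e^{ikx}\,\hat{f}(k)\,dk
\ee
and then replacing the real argument $x$ by a complex variable $z = x+iy$ with $|y| < a$. First I would define, for $z \in \mathbb{S}_a$, the candidate continuation
\be
	F(z) := \frac{1}{2\pi}\int_{-\infty}^{\infty} e^{ikz}\,\hat{f}(k)\,dk \ ,
\ee
and check that the integral converges absolutely: on $\mathbb{S}_a$ we have $|e^{ikz}| = e^{-k\,\mathrm{Im}(z)} \le e^{|k|\,|\mathrm{Im}(z)|}$, so the integrand is bounded in modulus by $A\,e^{-(a-|\mathrm{Im}(z)|)|k|}$, which is integrable in $k$ for any fixed $z$ with $|\mathrm{Im}(z)| < a$. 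This shows $F$ is well-defined on $\mathbb{S}_a$ and, by continuity of $x \mapsto \hat f$ together with dominated convergence, agrees with $f$ on $\mathbb{R}$.

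Next I would establish analyticity of $F$ on $\mathbb{S}_a$. The cleanest route is Morera's theorem combined with Fubini: for any closed triangle (or rectangle) $\Delta$ contained in a slightly smaller substrip $\mathbb{S}_{a'}$ with $a' < a$, the double integral $\int_{\Delta}\int_{\mathbb{R}} e^{ikz}\hat f(k)\,dk\,dz$ is absolutely convergent by the uniform bound $A\,e^{-(a-a')|k|}$ on $\mathbb{S}_{a'}$, so one may swap the order of integration; for fixed $k$ the function $z \mapsto e^{ikz}$ is entire, so $\oint_{\partial\Delta} e^{ikz}\,dz = 0$ by Cauchy's theorem, hence $\oint_{\partial\Delta} F(z)\,dz = 0$. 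Since every point of $\mathbb{S}_a$ lies in some such substrip, Morera gives holomorphy of $F$ on all of $\mathbb{S}_a$. Alternatively one can differentiate under the integral sign, justified by the same local uniform exponential bound. Either way, $F$ is analytic on $\mathbb{S}_a$ and restricts to $f$ on the real line, which is exactly the claimed analytic continuation.

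I do not expect a serious obstacle here: the only point requiring care is the interchange of integration (integral over $k$ versus contour integral over $z$, or the limit defining the derivative), and that is controlled entirely by the hypothesis $|\hat f(k)| \le A\,e^{-a|k|}$, which yields a dominating function $A\,e^{-(a-a')|k|} \in L^1(\mathbb{R})$ on each closed substrip $\overline{\mathbb{S}}_{a'}$, $a'<a$. One should also note in passing that the argument in fact shows $F$ is bounded on each such substrip and that its boundary values as $|\mathrm{Im}(z)| \to a$ need not be controlled — the lemma only claims continuation to the open strip $\mathbb{S}_a$, so no boundary analysis is needed. This is the statement we will apply to $\hat f(k) = (2\cosh(sk)-d)^{-1}$, whose decay is $e^{-s|k|}$ up to constants, giving continuation of $\phi_d$ to $\mathbb{S}_s$.
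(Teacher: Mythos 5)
Your argument is correct and coincides with the standard proof: the paper does not prove this lemma itself but refers to \cite[Ch.\,4,\,Thm.\,3.1]{SteinShakarchi}, whose proof is precisely the inversion-integral definition of $F(z)=\frac{1}{2\pi}\int e^{ikz}\hat f(k)\,dk$ together with the dominating bound $A\,e^{-(a-a')|k|}$ on each substrip $\overline{\mathbb{S}}_{a'}$ and Morera/differentiation under the integral. The one point worth stating explicitly is that identifying $F|_{\mathbb{R}}$ with $f$ requires the Fourier inversion formula to hold for $f$ (mere existence of $\hat f$ plus your dominated-convergence remark does not by itself give $F|_{\mathbb{R}}=f$); this is implicit in how the lemma is used, and the paper checks it separately for the application to $\phi_d$, where $(2\cosh(sk)-d)^{-1}$ is a Schwartz function.
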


	For a proof see for example \cite[Ch.\,4,\,Thm.\,3.1]{SteinShakarchi}.
In particular, $\phi_d(x)$ can be analytically continued to the strip $\mathbb{S}_s$. 
In fact, it can be continued to a meromorphic function with poles of order $\le 1$ in $is\mathbb{Z}$ (Lemma~\ref{phianalyticstructure} below). To get there we need some preparation. We start with:

\begin{lemma} \label{phianalyticcontinuation} $\phi_d(z)$ has an analytic continuation to the complex plane with two cuts, 
$\mathbb{C}\setminus (i\mathbb{R}_{\geq s}\cup i\mathbb{R}_{\leq -s})$.
\end{lemma}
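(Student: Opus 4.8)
The plan is to analytically continue $\phi_d$ from the real axis by deforming the contour of integration in \eqref{deltastandardrep} and exploiting the decay properties of the integrand. First I would record the location of the singularities of the integrand: the function $k\mapsto (2\cosh(sk)-d)^{-1}$ is meromorphic in $k$, and since $d\in(-2,2)$ there are no real poles; writing $d=2\cos\gamma$ with $\gamma\in(0,\pi)$, the poles occur exactly where $\cosh(sk)=\cos\gamma$, i.e.\ at $k=\tfrac{i}{s}(\pm\gamma+2\pi\mathbb{Z})$, all of which lie on the imaginary axis in the $k$-plane and are bounded away from $k=0$ (the nearest ones having $|\mathrm{Im}(k)|=\gamma/s>0$). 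The key quantitative input is the large-$k$ behaviour: for $k=u+iv$ with $u\to\pm\infty$ one has $|2\cosh(sk)-d|\ge c\,e^{s|u|}$ uniformly for $v$ in any bounded set, so the integrand decays like $e^{-s|u|}$ along horizontal lines.

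Next I would fix $z=x+iy$ and rewrite $e^{ikz}=e^{ikx}e^{-ky}$. To continue $\phi_d$ to a point with $|y|<s$, Lemma~\ref{fourieranalytic} already suffices; for general $z$ with $|y|\ge s$ I would instead shift the contour $\mathbb{R}$ in the $k$-plane to a horizontal line $\mathbb{R}+iv$, or more flexibly to a contour that stays inside a horizontal strip $|\mathrm{Im}(k)|<\gamma/s$ avoiding all poles. Concretely: for $z$ in the right half-plane $\mathrm{Re}(z)>0$ one pushes the contour downward (into $\mathrm{Im}(k)<0$), and for $\mathrm{Re}(z)<0$ upward, staying within the pole-free strip, so that the factor $e^{-kv\cdot\,}$... more carefully, one chooses the contour so that $e^{ikz}$ remains bounded and the $e^{-s|u|}$ decay keeps the integral absolutely convergent; Morera's theorem plus Fubini then shows the resulting integral is analytic in $z$. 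The obstruction to continuing across $i\mathbb{R}_{\ge s}$ and $i\mathbb{R}_{\le -s}$ is precisely that as $z$ approaches the positive imaginary axis with $\mathrm{Im}(z)\ge s$, the factor $e^{-ky}$ with $y\ge s$ grows on exactly the side of the real $k$-axis toward which one must move the contour, and one cannot move past the poles at $k=\pm i\gamma/s$ (and beyond) without picking up residues; since $\gamma$ is arbitrary in $(0,\pi)$, the union of obstructions over the relevant half-line fills out $i\mathbb{R}_{\ge s}$. Thus the maximal domain reachable by pure contour shifting without crossing poles is $\mathbb{C}\setminus(i\mathbb{R}_{\ge s}\cup i\mathbb{R}_{\le -s})$.

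In a bit more detail, I would split the argument by the sign of $\mathrm{Re}(z)$. For $\mathrm{Re}(z)=x>0$, consider contours $\Gamma_v=\mathbb{R}+iv$ for $v\in(-\gamma/s,0]$: on each such contour $|e^{ikz}|=e^{-x\,\mathrm{Im}(k)-y\,\mathrm{Re}(k)}=e^{-vx}e^{-yu}$, and combined with the $e^{-s|u|}$ decay of the kernel the integral converges absolutely and locally uniformly as long as $|y|<s$; but if instead we keep $x>0$ fixed and want $y$ large, we exploit that for $x>0$ we may also bend the contour: on a downward-tilted contour the term $e^{-x\,\mathrm{Im}(k)}$ supplies extra decay when $\mathrm{Im}(k)\to -\infty$... the cleanest route is to tilt the contour to a broken line going to $+\infty$ along $\mathrm{Im}(k)=-\epsilon$ and to $-\infty$ along $\mathrm{Im}(k)=-\epsilon$ while dipping only slightly, so that $e^{-x\mathrm{Im}(k)}$ is bounded and the only thing that matters is $e^{-yu}e^{-s|u|}$, which is integrable precisely when $|y|<s$; to reach $|y|\ge s$ with $x\ne 0$ one uses the $x$-dependent factor to compensate. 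I would make this precise by choosing, for each $z$ with $x=\mathrm{Re}(z)\ne 0$, a contour of the form $k=t + i\,\sigma(t)$ with $\sigma$ compactly supported, $\mathrm{sign}(\sigma)=-\mathrm{sign}(x)$ near $0$ if needed, and then show the integral over any two such admissible contours agree by Cauchy's theorem (no poles enclosed), and that for each such $z$ at least one admissible contour gives an absolutely convergent integral analytic in a neighbourhood; gluing these local definitions yields the claimed continuation. The main obstacle is bookkeeping: verifying absolute and locally uniform convergence on the deformed contours, checking that no poles are crossed during the deformation (this is where $d\in(-2,2)$, equivalently $\gamma\in(0,\pi)$, is used — the poles sit on $i\mathbb{R}$ at height $\ge\gamma/s$), and confirming that the union of points reachable this way is exactly $\mathbb{C}\setminus(i\mathbb{R}_{\ge s}\cup i\mathbb{R}_{\le -s})$ rather than something smaller. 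Once this is set up, analyticity follows from Morera's theorem together with dominated convergence, and independence of the contour gives a well-defined single-valued continuation on the stated cut plane.
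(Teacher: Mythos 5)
Your overall strategy (deform the contour in \eqref{deltastandardrep}, use the exponential decay of the kernel and the absence of poles off the imaginary axis, then invoke Morera plus dominated convergence) is the right one, and your identification of the poles at $k=\tfrac{i}{s}(\pm\gamma+2\pi\mathbb{Z})$ and of the lower bound $|2\cosh(sk)-d|\geq c\,e^{s|\mathrm{Re}(k)|}$ is correct. However, there is a genuine gap at the decisive step: every contour you actually commit to is asymptotically horizontal (horizontal lines $\mathbb{R}+iv$, broken lines running to $\pm\infty$ at height $-\epsilon$, graphs $k=t+i\sigma(t)$ with $\sigma$ compactly supported). On any such contour the tail of the integrand behaves like $e^{-uy-s|u|}$ times a bounded factor --- the term $e^{-vx}$ is just a bounded constant once $v$ stays in a bounded set --- so the integral converges absolutely if and only if $|\mathrm{Im}(z)|<s$, exactly as you compute yourself at one point. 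The hope that ``the $x$-dependent factor compensates'' cannot be realized on contours with horizontal asymptotes, and sending the contour genuinely to $\mathrm{Im}(k)\to\mp\infty$ is blocked by the periodic array of poles on $i\mathbb{R}$ (and has the wrong sign anyway: $|e^{ikz}|=e^{-uy-vx}$ grows as $v\to-\infty$ when $x>0$). So your construction never leaves $\mathbb{S}_s$ and does not reach the claimed domain.

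The missing idea is to \emph{rotate} the contour to the ray $e^{-i\theta}\mathbb{R}$ for a fixed $\theta\in(-\tfrac{\pi}{2},\tfrac{\pi}{2})$, so that it escapes to infinity in a non-horizontal direction. Along this ray the kernel still decays like $e^{-s\cos\theta\,|k|}$, so by Lemma~\ref{fourieranalytic} the rotated integral \eqref{phitilde} defines an analytic function on the \emph{tilted} strip $e^{i\theta}\mathbb{S}_{s\cos\theta}$; Cauchy's theorem (no poles off $i\mathbb{R}$, vanishing arcs at infinity for $z$ in the intersection of the intermediate tilted strips) shows it agrees with $\phi_d$ near the real axis, and $\bigcup_{\theta}e^{i\theta}\mathbb{S}_{s\cos\theta}=\mathbb{C}\setminus(i\mathbb{R}_{\geq s}\cup i\mathbb{R}_{\leq -s})$. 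Two further slips in your write-up: $\gamma$ is determined by $d$ via $d=2\cos\gamma$, so it is not ``arbitrary in $(0,\pi)$''; and the cuts at $\pm is$ are not caused by the poles at $k=\pm i\gamma/s$, but by the decay rate $e^{-s\cos\theta|k|}$, which caps the width of each tilted strip at $2s\cos\theta$. Indeed the lemma only asserts \emph{a} continuation to the cut plane; Lemma~\ref{phianalyticstructure} later shows $\phi_d$ in fact continues meromorphically across the cuts.
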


\begin{proof}
Take any $\theta\in(-\frac{\pi}{2},\frac{\pi}{2})$ and consider the function
\be\label{phitilde}
\tilde{\phi}_d(z)=e^{-i\theta}\frac{1}{2\pi}\int_{-\infty}^{\infty} e^{ike^{-i\theta}z}\left(2\cosh\left(ske^{-i\theta}\right)-d\right)^{-1}dk\ .
\ee
By Lemma~\ref{fourieranalytic}, this integral is analytic in $z\in e^{i\theta}\mathbb{S}_{s\cdot\cos\theta}$, a strip in the $z$-plane tilted by the angle $\theta$. We claim that $\tilde{\phi}_d(z)$ and $\phi_d(z)$ coincide in the intersection $e^{i\theta}\mathbb{S}_{s\cdot\cos\theta} \cap \mathbb{S}_s$ of their respective analytic domains. 
This can be checked via contour deformation. We will first show that for $z \in e^{i\theta}\mathbb{S}_{s\cdot\cos\theta} \cap \mathbb{S}_s$ we have
\be\label{phirotatecontour}
\phi_d(z) =\frac{1}{2\pi}\int_{e^{-i\theta}\mathbb{R}} e^{ikz}(2\cosh(sk)-d)^{-1}dk \ .
\ee
Since for $|d|<2$, the integrand $k \mapsto e^{ikz}(2\cosh(sk)-d)^{-1}$ has no poles away from the imaginary axis, rotating the contour by $-\theta$ does not pick up any residues. It remains to verify that there are no contributions from infinity.
We express $z=x+iy$ and $k=u+iv$ in real coordinates, in terms of which the absolute value of the integrand can be written as
\be\label{estimintegrand}
\left|\frac{e^{ikz}}{2\cosh(sk)-d}\right|=\left|\frac{e^{-(uy+vx)}}{e^{su}e^{isv}+e^{-su}e^{-isv}-d}\right| \ .
\ee
On the circular contour components one can parametrise $v=-u\tan(\tau)$, with $\tau$ running from 0 to $\theta$. When $u\rightarrow\pm\infty$, the right hand side of \eqref{estimintegrand} approaches
\be
\left|e^{-u(y\pm s)-vx}\right|=\left|e^{-u(y\pm s-\tan(\tau) x)}\right| \ .
\ee
Thus, if the inequalities
\begin{align}
y&>\tan(\tau) x -s  \nonumber \\
y&<\tan(\tau)  x +s 
\end{align}
are satisfied for all $\tau$ between 0 and $\theta$, then the two circular integrals do indeed vanish when the radius is taken to infinity. But these inequalities just describe the strips $e^{i\tau}\mathbb{S}_{s\cdot\cos\tau}$ in the $z$-plane, and their intersection for all $\tau \in [0,\theta]$ is precisely $\mathbb{S}_s \cap e^{i\theta}\mathbb{S}_{s\cdot\cos\theta}$. 

Now, substituting $k'=e^{i\theta}k$ in \eqref{phirotatecontour} shows that $\phi_d(z) = \tilde \phi_d(z)$ on the intersection of their domains, and hence $\tilde{\phi}_d(z)$ is the analytic continuation of $\phi_d(z)$ to the strip $e^{i\theta}\mathbb{S}_{s\cdot\cos\theta}$.
Consequently, $\phi_d(z)$ has an analytic continuation to the union of all of these strips,
\be
\bigcup_{\theta\in(-\frac{\pi}{2},\frac{\pi}{2})}e^{i\theta}\mathbb{S}_{s\cdot\cos\theta} = \mathbb{C}\setminus (i\mathbb{R}_{\geq s}\cup i\mathbb{R}_{\leq -s})\ .
\ee
\end{proof}

In order to understand the behaviour of $\phi_d(z)$ on the whole imaginary axis, it is natural to start with the case $d=0$ whose analytic structure we know explicitly (Example~\ref{phi_0}). When comparing $\phi_d(z)$ to $\phi_0(z)$ we will need to control the derivative $\frac{\partial}{\partial d}\phi_d(z)$.

\begin{lemma} \label{dderivative} For all $z \in \mathbb{C}\setminus (i\mathbb{R}_{\geq s}\cup i\mathbb{R}_{\leq -s})$,
the partial derivative $\frac{\partial}{\partial d}\phi_d(z)$ exists and is 
an analytic function on $\mathbb{C}\setminus (i\mathbb{R}_{\geq 2s}\cup i\mathbb{R}_{\leq -2s})$. For $z\in e^{i\theta}\mathbb{S}_{2s\cdot \cos\theta}$ it has the integral representation
\be\label{eqn:dderivative}
\frac{\partial}{\partial d}\phi_d(z) = e^{-i\theta}\frac{1}{2\pi} \int_{-\infty}^\infty e^{ike^{-i\theta}z} \left(2\cosh(ske^{-i\theta}) -d\right)^{-2} dk \ .
\ee
\end{lemma}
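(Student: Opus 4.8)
\textbf{Proof proposal for Lemma~\ref{dderivative}.}

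The plan is to establish the integral representation \eqref{eqn:dderivative} by differentiating under the integral sign in \eqref{phitilde}, and then to read off the domain of analyticity from the decay rate of the differentiated integrand via Lemma~\ref{fourieranalytic}. First I would fix $\theta\in(-\frac\pi2,\frac\pi2)$ and work with the tilted representation $\tilde\phi_d(z)=e^{-i\theta}\frac{1}{2\pi}\int_{-\infty}^\infty e^{ike^{-i\theta}z}(2\cosh(ske^{-i\theta})-d)^{-1}\,dk$, which by Lemma~\ref{phianalyticcontinuation} agrees with $\phi_d(z)$ on $e^{i\theta}\mathbb{S}_{s\cos\theta}$. Differentiating the integrand formally in $d$ produces the integrand of \eqref{eqn:dderivative}, namely $e^{-i\theta}\frac{1}{2\pi}e^{ike^{-i\theta}z}(2\cosh(ske^{-i\theta})-d)^{-2}$. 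To justify the interchange of $\partial_d$ and $\int$, I would produce a $d$-uniform integrable dominating function on a neighbourhood of any fixed $d_0\in(-2,2)$: the key estimate is that $|2\cosh(ske^{-i\theta})-d|$ is bounded below by a positive constant times $e^{s|k||\cos\theta|}$ once $|k|$ is large (uniformly for $d$ in a compact subinterval of $(-2,2)$), exactly as in the estimate \eqref{estimintegrand} of the previous proof, while $|e^{ike^{-i\theta}z}|$ grows at worst like $e^{s|k||\cos\theta|}$ minus a margin when $z\in e^{i\theta}\mathbb{S}_{2s\cos\theta}$. Squaring the denominator gives an extra factor $e^{-s|k||\cos\theta|}$ of decay compared with the undifferentiated case, which is what widens the strip from height $s\cos\theta$ to $2s\cos\theta$; the dominated convergence / differentiation-under-the-integral theorem then applies.

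Next, with \eqref{eqn:dderivative} in hand as an honest identity on $e^{i\theta}\mathbb{S}_{2s\cos\theta}$, I would invoke Lemma~\ref{fourieranalytic}: the Fourier transform (in the rotated variable) of $\partial_d\phi_d$ is $(2\cosh(sk)-d)^{-2}$, which for $|d|<2$ satisfies $|(2\cosh(sk)-d)^{-2}|\le A e^{-2s|k|}$ for suitable $A=A(d)$, so $\partial_d\phi_d$ extends analytically to the strip $\mathbb{S}_{2s}$, and more generally the tilted integral is analytic on $e^{i\theta}\mathbb{S}_{2s\cos\theta}$. Since these two integral expressions agree on the overlap of their domains (by the substitution $k'=e^{i\theta}k$, just as in the proof of Lemma~\ref{phianalyticcontinuation}), taking the union $\bigcup_{\theta\in(-\pi/2,\pi/2)}e^{i\theta}\mathbb{S}_{2s\cos\theta}=\mathbb{C}\setminus(i\mathbb{R}_{\ge 2s}\cup i\mathbb{R}_{\le -2s})$ yields the claimed analytic continuation. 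Finally, to see that $\partial_d\phi_d(z)$ really is \emph{the} derivative of $z\mapsto\phi_d(z)$ at points of the smaller domain $\mathbb{C}\setminus(i\mathbb{R}_{\ge s}\cup i\mathbb{R}_{\le -s})$ (not merely some analytic function with the right integral representation), I note that on each tilted strip $\phi_d$ itself has the representation \eqref{phitilde} and the difference quotient $(\phi_{d+h}-\phi_d)/h$ converges to \eqref{eqn:dderivative} by the same dominated-convergence argument.

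The main obstacle is the uniform-in-$d$ domination needed to differentiate under the integral: one must check that near the zeros structure of $2\cosh(ske^{-i\theta})-d$ — i.e.\ for $k$ in bounded regions and $d$ near $2$ — the denominator stays bounded away from $0$ uniformly for $d$ in a compact subset of $(-2,2)$, and simultaneously that for large $|k|$ the squared denominator beats the exponential growth $|e^{ike^{-i\theta}z}|$ on the \emph{widened} strip. Both are elementary once one writes $2\cosh(ske^{-i\theta})=e^{sk\cos\theta}e^{-isk\sin\theta}+e^{-sk\cos\theta}e^{isk\sin\theta}$ and separates the cases $|k|$ large versus $|k|$ bounded, reusing verbatim the geometry of the strips $e^{i\tau}\mathbb{S}_{\lambda s\cos\tau}$ that already appeared (with $\lambda=1$) in the proof of Lemma~\ref{phianalyticcontinuation}, now with $\lambda=2$.
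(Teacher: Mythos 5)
Your proposal follows essentially the same route as the paper: differentiate \eqref{phitilde} under the integral sign with a $d$-uniform integrable majorant (splitting $|k|$ large versus $|k|$ bounded, with the compactness argument for $d$ in $[-D,D]$), and then obtain analyticity on the widened tilted strips from the faster decay of the squared denominator via Lemma~\ref{fourieranalytic} together with uniqueness of analytic continuation.

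One point in your first paragraph needs tightening. You propose to "justify the interchange of $\partial_d$ and $\int$" for $z\in e^{i\theta}\mathbb{S}_{2s\cos\theta}$, but for $z$ in the widened strip outside $e^{i\theta}\mathbb{S}_{s\cos\theta}$ the \emph{undifferentiated} integrand of \eqref{phitilde} is not integrable (it decays only like $e^{-s\cos\theta|k|}$ while $|e^{ike^{-i\theta}z}|$ can grow like $e^{(2s\cos\theta-\varepsilon)|k|}$ there — note also that this growth rate, not "$e^{s|k||\cos\theta|}$ minus a margin", is the correct bound on the widened strip), so there is no integral representation of $\phi_d(z)$ to differentiate at such points. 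The interchange can only be performed for $z\in e^{i\theta}\mathbb{S}_{s\cos\theta}$, which is exactly what the paper does and what your closing difference-quotient remark in the second paragraph amounts to; the extension of the identity to $e^{i\theta}\mathbb{S}_{2s\cos\theta}$ is then obtained \emph{only} from the analyticity of the right-hand side of \eqref{eqn:dderivative} on the larger strip plus uniqueness of analytic continuation, not from a direct domination argument there. With that correction the argument is complete and coincides with the paper's proof.
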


\begin{proof}
For any $z\in e^{i\theta}\mathbb{S}_{s\cdot \cos\theta}$, consider $\phi_d(z)$  given by the integral representation \eqref{phitilde}.
Write $ke^{-i\theta}= u+iv$ with $u,v\in\mathbb{R}$. Assuming $u\geq 0$, one then estimates
\begin{align}
&\left|2\cosh\left(ske^{-i\theta}\right)-d\right|
~=~
\left|e^{su}e^{isv}+e^{-su}e^{-isv}-d\right|\nonumber\\
&\geq~
\left|e^{su}e^{isv}\right|-\left|e^{-su}e^{-isv}\right|-|d|
~\geq~ e^{su}-1-\left|d\right|
~\geq~ e^{s\cos\theta \, |k|}-3 \ .
\end{align}
The same overall estimate applies for $u<0$ as well. For $|k|$ large enough, the last expression on the right hand side becomes bigger than $\frac{1}{2}e^{s\cos\theta \,|k|}$ and we can estimate
\be\label{cosh-theta-estimate}
	\left|2\cosh\left(ske^{-i\theta}\right)-d\right|^{-1}
	~\le~
	2 \, e^{-s\cos\theta \,|k|} 
	\qquad  \text{(for $|k|$ large enough)} \ .
\ee
Next, writing $z = e^{i \theta} (x + i y)$ with 
	$x\in\mathbb{R}$ and $y \le s\cos\theta$
we obtain, for all $k \in \mathbb{R}$,
\be
	\left|e^{ike^{-i\theta}z}\right|
	~=~ e^{-ky} 
	~\le~ e^{s\cos\theta \,|k|} \ .
\ee
For $|k|$ large enough and $z\in e^{i\theta}\mathbb{S}_{s\cdot \cos\theta}$ we can now estimate
\begin{align}\left|\frac{\partial}{\partial d}e^{ike^{-i\theta}z} \left(2\cosh(ske^{-i\theta}) -d\right)^{-1}\right| &= 
	\left|e^{ike^{-i\theta}z}\right| \cdot
\left| 2\cosh(ske^{-i\theta}) -d\right|^{-2} \nonumber\\
&\leq 4 \,  e^{-s\cos\theta \,|k|} \ .
\end{align}
One can choose 
$k_0>0$ large enough, such that this estimate applies for all $d \in (-2,2)$
and $|k| \ge k_0$.

Let $\varepsilon>0$, and define $D:=2-\varepsilon$. Since for any $k_0>0$, the integrand of \eqref{eqn:dderivative} is continuous (and finite) as a function of $(k,d)$ in the compact region $[-k_0,k_0]\times [-D,D]$, it is in particular bounded. One can therefore find a constant $A>0$ such that the integrand of  \eqref{eqn:dderivative}  is bounded by $A \,e^{-s\cos\theta \,|k|}$ for all $k \in \mathbb{R}$ and $d \in [-D,D]$.

The integrand is thus majorised by the integrable function $A \,e^{-s\cos\theta \,|k|}$ for all $d \in [-D,D]$ and therefore  integration and $d$-derivative can be swapped, establishing \eqref{eqn:dderivative} for all $d \in [-D,D]$ and $z\in e^{i\theta}\mathbb{S}_{s\cdot \cos\theta}$.
Since $\varepsilon>0$ was arbitrary, this extends to all $d\in(-2,2)$,
proving the first part of the claim. 
Moreover, according to Lemma \ref{fourieranalytic}, the integral on the right-hand-side of \eqref{eqn:dderivative} is actually analytic for $z\in e^{i\theta}\mathbb{S}_{2s\cdot \cos\theta}$. By uniqueness of the analytic continuation it must also coincide with $\frac{\partial}{\partial d}\phi_d(z)$ on this larger domain.
\end{proof}

One consequence of the integral representation of the
$d$-derivative is the following functional relation for $\phi_d$.

\begin{lemma} \label{phifuneq} For all $x\in\mathbb{R}\setminus \lbrace 0 \rbrace$ we have
\be\label{eqn:phifuneq}\phi_d(x+is)+\phi_d(x-is)-d\phi_d(x)=0 \ .
\ee
\end{lemma}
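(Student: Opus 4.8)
The plan is to prove \eqref{eqn:phifuneq} by fixing $x\in\mathbb{R}\setminus\{0\}$, viewing
$\Psi_d(x):=\phi_d(x+is)+\phi_d(x-is)-d\,\phi_d(x)$
as a function of the parameter $d\in(-2,2)$, showing $\tfrac{\partial}{\partial d}\Psi_d(x)=0$, and then evaluating at $d=0$. First I would note that all three arguments $x+is$, $x-is$, $x$ avoid the cuts $i\mathbb{R}_{\ge s}\cup i\mathbb{R}_{\le -s}$ (this is where $x\neq 0$ enters), so by Lemma~\ref{phianalyticcontinuation} the quantity $\Psi_d(x)$ is well defined, and moreover all three points lie in the strip $\mathbb{S}_{2s}$.

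Hence, by Lemma~\ref{dderivative} applied with $\theta=0$, the partial derivative $\tfrac{\partial}{\partial d}\phi_d$ exists at each of these points and is given there by $\tfrac1{2\pi}\int_{-\infty}^{\infty} e^{ikz}(2\cosh(sk)-d)^{-2}\,dk$. Differentiating $\Psi_d(x)$ in $d$, using the product rule on the term $-d\,\phi_d(x)$ and linearity of the integral, I obtain
\[
\frac{\partial}{\partial d}\Psi_d(x)
=\frac1{2\pi}\int_{-\infty}^{\infty} e^{ikx}\big(e^{-sk}+e^{sk}-d\big)\big(2\cosh(sk)-d\big)^{-2}\,dk \;-\;\phi_d(x).
\]
Since $e^{-sk}+e^{sk}-d = 2\cosh(sk)-d$, one factor of $\big(2\cosh(sk)-d\big)^{-1}$ cancels and the integral becomes exactly the defining Fourier representation \eqref{deltastandardrep} of $\phi_d(x)$. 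Thus $\tfrac{\partial}{\partial d}\Psi_d(x)=\phi_d(x)-\phi_d(x)=0$ for every $d\in(-2,2)$, so $\Psi_d(x)$ is independent of $d$.

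It remains to compute $\Psi_0(x)$. Using the explicit formula $\phi_0(z)=\big(4s\cosh(\tfrac{\pi}{2s}z)\big)^{-1}$ from Example~\ref{phi_0} together with $\cosh\!\big(a\pm i\tfrac{\pi}{2}\big)=\pm\, i\sinh a$, one finds $\phi_0(x\pm is)=\pm\big(4is\sinh(\tfrac{\pi x}{2s})\big)^{-1}$, which is finite because $x\neq 0$, whence $\phi_0(x+is)+\phi_0(x-is)=0$ and therefore $\Psi_0(x)=0$. Combined with the $d$-independence just established, this gives $\Psi_d(x)=0$ for all $d\in(-2,2)$ and all $x\in\mathbb{R}\setminus\{0\}$, which is \eqref{eqn:phifuneq}.

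I do not expect a genuine obstacle: the one step that needs care — interchanging $\tfrac{\partial}{\partial d}$ with the integral, and the validity of the integral representation of $\tfrac{\partial}{\partial d}\phi_d$ at the shifted points $x\pm is$ — is precisely the content of Lemma~\ref{dderivative}, so once that lemma is available the argument reduces to the short computation above. The only thing one must not be careless about is the restriction $x\neq 0$, which is needed both for $\Psi_d(x)$ to make sense and for the base-case evaluation at $d=0$.
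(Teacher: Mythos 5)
Your proof is correct and follows essentially the same route as the paper: both fix $x\neq 0$, differentiate $\phi_d(x+is)+\phi_d(x-is)-d\phi_d(x)$ with respect to $d$ using the integral representation of Lemma~\ref{dderivative} at $\theta=0$ to see that the derivative vanishes, and then evaluate the base case $d=0$ via the explicit formula for $\phi_0$. The only difference is cosmetic bookkeeping of the cancellation inside the integrand.
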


\begin{proof}
	Fix $x \in \mathbb{R}$, $x \neq 0$, and 
	write
\be
\mathcal{L}_x(d):=\phi_d(x+is)+\phi_d(x-is)-d\phi_d(x) \ .
\ee
We will show that $\mathcal{L}_x(d)$ solves the initial value problem  \be \label{dgl36}
 \mathcal{L}_x(0)=0 
 \quad , \quad \frac{\partial}{\partial d}\mathcal{L}_x(d) =0 
 \qquad \text{for all $d\in (-2,2)$} \ .
 \ee 

The initial condition in \eqref{dgl36} is
	straightforward 
to check by recalling from Example~\ref{phi_0} that $\phi_0(z)=\left(4s\cosh\left(\frac{\pi}{2s}z\right)\right)^{-1}$. In order to prove the differential equation in \eqref{dgl36}, we use the integral representation \eqref{eqn:dderivative} for
$\theta=0$ and $z\in\mathbb{S}_{2s}$:
\begin{align}
\frac{\partial}{\partial d}\mathcal{L}_x(d) &= \frac{\partial}{\partial d}\phi_d(x+is)+\frac{\partial}{\partial d}\phi_d(x-is)-\phi_d(x)-d\frac{\partial}{\partial d}\phi_d(x) \nonumber\\
&=\frac{1}{2\pi} \int_{-\infty}^\infty e^{ikx} \frac{F(k)}{\left(2\cosh(sk) -d\right)^2}dk,
\end{align}
where
\be
F(k)=e^{-ks}+e^{ks}-(2\cosh(sk)-d)-d =0 \ .
\ee
Hence, \eqref{dgl36} follows.
\end{proof}

By Lemma \ref{phianalyticcontinuation}, the function
	$x \mapsto \mathcal{L}_x(d)$
has an analytic continuation
to all $z\in\mathbb{C}\setminus i\mathbb{R}$. The functional relation \eqref{eqn:phifuneq} thus extends to this domain:
\be\label{fun-rel-phid-for-z}
\phi_d(z+is)+\phi_d(z-is)-d\phi_d(z)=0 
\qquad 
\text{for all} \quad
z\in\mathbb{C}\setminus i\mathbb{R} \ .
\ee
Using this, we now show:

\begin{lemma} \label{phianalyticstructure} 
$\phi_d(z)$ has a meromorphic continuation to the whole complex plane which satisfies:
\begin{enumerate}[label=\roman*)]
\item \label{Poles1stOrder} The poles are all of first order and form a subset of $is\mathbb{Z}\setminus\lbrace 0\rbrace$. 
\item \label{extended_funrel}
For $z \in \mathbb{C} \setminus is\mathbb{Z}$ we have $\phi_d(z+is)+\phi_d(z-is)-d\phi_d(z)=0$.
\item \label{recurs}
For $z \in \mathbb{C} \setminus is\mathbb{Z}$ and $n \in \Zb$ we have
\be\label{eqn:phirecursive}
\phi_d(z+ ins) = \frac{\sin(n\gamma)}{\sin(\gamma)}\phi_d(z+is)-\frac{\sin((n-1)\gamma)}{\sin(\gamma)}\phi_d(z) \ ,
\ee
where $\gamma \in \mathbb{R}$ satisfies
$d=2\cos(\gamma)$.
\end{enumerate}
\end{lemma}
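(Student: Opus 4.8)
The strategy is to use the functional relation \eqref{fun-rel-phid-for-z} to propagate the known analytic structure on $\mathbb{C} \setminus i\mathbb{R}$ across the imaginary axis. First I would establish part \ref{extended_funrel} and \ref{recurs} together as a formal consequence of \eqref{fun-rel-phid-for-z}, and then use the resulting formula as the \emph{definition} of the meromorphic continuation. Concretely: from Lemma~\ref{phianalyticcontinuation} we already have $\phi_d$ analytic on $\mathbb{C}\setminus(i\mathbb{R}_{\ge s}\cup i\mathbb{R}_{\le -s})$, so in particular $\phi_d(z)$ and $\phi_d(z+is)$ are both analytic for $z$ in the open strip $\mathbb{S}_s$ minus the point $0$ (more precisely, for $z$ with $-2s<\mathrm{Im}(z)<0$, both $z$ and $z+is$ avoid the cuts). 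The relation \eqref{fun-rel-phid-for-z}, rewritten as
\be\label{plan-shift-up}
\phi_d(z+is) = d\,\phi_d(z) - \phi_d(z-is),
\ee
then expresses $\phi_d$ on a strip shifted up by $is$ in terms of its values on two strips already known to be analytic (away from the lattice $is\mathbb{Z}$). Iterating \eqref{plan-shift-up} upward and its mirror downward, one covers all of $\mathbb{C}$, obtaining a continuation that is a finite $\mathbb{R}$-linear combination of $\phi_d(z)$ and $\phi_d(z\pm is)$ with coefficients given by the Chebyshev-type recursion; this simultaneously proves \ref{recurs} by induction on $n$ (the base cases $n=0,1$ are trivial, and the recursion for the coefficients $U_n(\gamma):=\sin(n\gamma)/\sin(\gamma)$ is exactly $U_{n+1}=2\cos(\gamma)U_n - U_{n-1}$, matching \eqref{plan-shift-up} with $d=2\cos\gamma$). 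One must check the consistency of the continuation on overlapping strips, which follows from uniqueness of analytic continuation and the fact that \eqref{fun-rel-phid-for-z} already holds on the overlaps. Note the case $\gamma\in\pi\mathbb{Z}$, i.e. $d=\pm2$, is excluded since $d\in(-2,2)$, so $\sin\gamma\neq0$ and the formula \eqref{eqn:phirecursive} makes sense; for real $d$ one may take $\gamma\in(0,\pi)$.

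For part \ref{Poles1stOrder}: since $\phi_d$ is analytic on $\mathbb{C}\setminus i\mathbb{R}$, the only possible singularities of the continuation lie on $i\mathbb{R}$; and from the recursion \eqref{eqn:phirecursive}, every value $\phi_d(z+ins)$ for $z$ near $0$ is a linear combination of $\phi_d(z\pm is)$ and $\phi_d(z)$, which are analytic near $z=0$ when $n\neq 0$ makes $z+ins$ land near a lattice point --- wait, more carefully: a singularity at $i m s$ for $m\in\mathbb{Z}\setminus\{0\}$ is controlled by writing $\phi_d$ near $ims$ via \eqref{eqn:phirecursive} with $z$ near $0$, reducing it to the singularity structure of $\phi_d(z+is)$ and $\phi_d(z)$ at $z=0$. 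So the whole question reduces to: what is the nature of the singularity of (the continuation of) $\phi_d$ at the single point $z=0$? But $z=0$ is \emph{not} on the cuts $i\mathbb{R}_{\ge s}\cup i\mathbb{R}_{\le -s}$, so $\phi_d$ is already analytic at $0$ --- hence, running the recursion, $\phi_d$ is analytic at \emph{every} lattice point, which would wrongly say there are no poles at all. The resolution, and the point I must be careful about, is that \eqref{eqn:phirecursive} expresses $\phi_d(z+ins)$ for $z\in\mathbb{C}\setminus is\mathbb{Z}$, and to read off the behaviour \emph{at} $is m$ I should instead propagate across the cut: $\phi_d$ is analytic on the two sides of the cut $i\mathbb{R}_{\ge s}$ but these two analytic continuations need not agree, and \eqref{fun-rel-phid-for-z} glues the left continuation on a lower strip to values straddling the cut. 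The cleanest route is: analyse the $d=0$ case explicitly (Example~\ref{phi_0}: $\phi_0$ has simple poles exactly at $is(1+2\mathbb{Z})$), then write $\phi_d(z) = \phi_0(z) + \int_0^d \frac{\partial}{\partial c}\phi_c(z)\,dc$ and use Lemma~\ref{dderivative}, which says $\frac{\partial}{\partial d}\phi_d(z)$ is analytic on $\mathbb{C}\setminus(i\mathbb{R}_{\ge 2s}\cup i\mathbb{R}_{\le -2s})$ --- in particular analytic at $\pm is$. Hence near $z=\pm is$, $\phi_d$ differs from $\phi_0$ by an analytic function, so $\phi_d$ has a simple pole at $\pm is$ with the same residue as $\phi_0$, namely nonzero. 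Then the recursion \eqref{eqn:phirecursive} propagates simple poles to all of $is\mathbb{Z}$, except that at some lattice points the Chebyshev coefficients $\sin(n\gamma)/\sin\gamma$ may vanish, causing the pole to disappear --- this is why the pole set is only stated to be a \emph{subset} of $is\mathbb{Z}\setminus\{0\}$. One must also verify no pole at $z=0$: run \eqref{eqn:phirecursive} "backwards", $n=-1$ say, expressing $\phi_d(z-is)$ in terms of $\phi_d(z)$, $\phi_d(z+is)$; near $z=is$ this gives $\phi_d$ near $0$ as a combination of things analytic near $z=is$, but $\phi_d(z+is)$ has a pole at $z=is$... so instead use that $\phi_d$ is genuinely analytic at $0$ already from Lemma~\ref{phianalyticcontinuation} ($0\notin$ cuts), which directly excludes $0$ from the pole set.

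Assembling: (i) from Lemma~\ref{phianalyticcontinuation} plus \eqref{fun-rel-phid-for-z}, build the meromorphic continuation to $\mathbb{C}$ and prove \ref{extended_funrel} and \ref{recurs} by the Chebyshev induction; (ii) from $\phi_d - \phi_0 = \int_0^d \partial_c\phi_c\,dc$ and Lemma~\ref{dderivative}, deduce that near $\pm is$ the function $\phi_d$ has a simple pole with nonzero residue (inherited from $\phi_0$); (iii) use \ref{recurs} to see every singularity on $is\mathbb{Z}$ is at worst a simple pole, and that $z=0$ is regular since it already was in Lemma~\ref{phianalyticcontinuation}, giving \ref{Poles1stOrder}. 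The main obstacle I anticipate is step (ii) done carefully --- one must make sure the integral $\int_0^d \partial_c\phi_c(z)\,dc$ is legitimate (joint continuity / local uniform bounds in $(c,z)$ near $z=\pm is$, which Lemma~\ref{dderivative}'s proof essentially provides) and that it indeed produces an \emph{analytic} function of $z$ near $\pm is$, so that the pole order and residue of $\phi_d$ genuinely match those of $\phi_0$; and secondarily, bookkeeping the overlapping-strip consistency when defining the continuation from \eqref{fun-rel-phid-for-z}.
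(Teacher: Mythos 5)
Your proposal is correct and follows essentially the same route as the paper: extend $\phi_d$ across the imaginary axis via the Chebyshev-type recursion coming from the functional relation \eqref{fun-rel-phid-for-z} (giving parts \ref{extended_funrel} and \ref{recurs}), then control the singularities at $\pm is$ by comparing with the explicit $\phi_0$ using Lemma~\ref{dderivative}, and finally propagate the pole structure with \eqref{eqn:phirecursive}. The only cosmetic difference is that the paper turns the bound on $\frac{\partial}{\partial d}\phi_d$ into a Lipschitz estimate $|\phi_d(z)-\phi_0(z)|\le B\,d$ on $\mathbb{S}_{2s}\setminus\{\pm is\}$ (so the difference has removable singularities), whereas you integrate $\partial_c\phi_c$ in $c$ to exhibit the difference as analytic near $\pm is$ --- both hinge on the same lemma and yield the same conclusion.
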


\begin{proof}~\\
$\bullet$ {\em Relation \eqref{eqn:phirecursive} holds on
$\mathbb{C}\setminus i\mathbb{R}$}: Writing $p_n(z):=\phi_d(z+ins)$, we can rewrite \eqref{fun-rel-phid-for-z} as $p_{n+1}+p_{n-1}-dp_n=0$. 
 It is straight-forward to check that this recursion relation is solved by \be
 p_n=\frac{\sin(n\gamma)}{\sin(\gamma)}p_1-\frac{\sin((n-1)\gamma)}{\sin(\gamma)}p_0\ .
 \ee
$\bullet$ {\em $\phi_d$ has an analytic continuation to $\mathbb{C}$ minus the points $is\mathbb{Z}\setminus\lbrace 0\rbrace$}: 
The right hand side of \eqref{eqn:phirecursive} is actually analytic in $\{ z \in \mathbb{C} | -s < \mathrm{Im}(z) < 0 \}$. Hence the same holds for the left hand side, that is, $\phi_d$ is analytic on the shifted strips $\{ z \in \mathbb{C} | sn < \mathrm{Im}(z) < (n+1)s \}$ for all $n \in \mathbb{Z}$. Combining this with Lemma~\ref{phianalyticcontinuation}, which states that $\phi_d(z)$ is analytic on $\mathbb{C}\setminus (i\mathbb{R}_{\geq s}\cup i\mathbb{R}_{\leq -s})$, we obtain the claim.

In particular, \eqref{fun-rel-phid-for-z} and \eqref{eqn:phirecursive} hold on $\mathbb{C} \setminus is\mathbb{Z}$, showing parts \ref{extended_funrel} and \ref{recurs} of the lemma.

\medskip

\noindent
$\bullet$ {\em $\phi_d$ is Lipschitz continuous in $d$ on $[-D,D]$ for $D<2$:}
Consider again the case $d=0$, given by $\phi_0(z)=\left(4s\cosh\left(\frac{\pi}{2s}z\right)\right)^{-1}$: this is a meromorphic function with poles of first order located at $is(1+2\mathbb{Z})$. 
We are now going to show that in the strip $\mathbb{S}_{2s}$ the pole structure of $\phi_d(z)$ for general $d$ coincides with the pole structure of $\phi_0(z)$. 

Recall the derivative $\frac{\partial}{\partial d}\phi_d(z)$ defined inside $\mathbb{S}_{2s}$ by the integral representation \eqref{eqn:dderivative}. We 
write $y=\mathrm{Im}(z)$, where $y\in(-2s,2s)$, and obtain 
\be
\left|\frac{\partial}{\partial d}\phi_d(z)\right|\leq \frac{1}{2\pi} \int_{-\infty}^\infty\frac{e^{-ky}}{\left(2\cosh(sk)-d\right)^2}dk \ .
\ee 
	Let $\varepsilon>0$ be arbitrary so that $D:=2-\varepsilon>0$ and $Y:=2s-\varepsilon>0$. For $d\leq D$ and $|y|\leq Y$, the integrand can then be further estimated as follows:
\be\frac{e^{-ky}}{\left(2\cosh(sk)-d\right)^2}\leq \frac{2\cosh(yk)}{\left(2\cosh(sk)-d\right)^2}\leq \frac{2\cosh(Yk)}{\left(2\cosh(sk)-D\right)^2}\ee But \be 
B:=\frac{1}{2\pi}\int_{-\infty}^\infty \frac{2\cosh(Yk)}{\left(2\cosh(sk)-D\right)^2}dk\ee still converges (as $|Y|<2s$), and hence \be\left|\frac{\partial}{\partial d}\phi_d(z)\right|\leq B \ .\ee Put differently, $B$ is a Lipschitz constant for $\phi_d(z)$ understood as a function of $d$ on the interval $[-D,D]$. The Lipschitz condition reads \be\left|\phi_d(z)-\phi_0(z)\right|\leq B\cdot d \ .
\ee

\medskip

\noindent
$\bullet$ {\em The pole order of $\phi_d$ at $is\Zb \setminus \{0\}$ is 0 or 1}:
By Lipschitz-continuity in $d$, we know that $|\phi_d(z)-\phi_0(z) |\leq B\cdot d$ for all $z \in \mathbb{S}_{2s} \setminus \{ \pm is\}$. Since $\phi_0$ has first order poles at $\pm is$, it follows that so does $\phi_d$. 
Since \eqref{eqn:phirecursive} holds on $\mathbb{C} \setminus is\mathbb{Z}$, the pole structure is as claimed.
This finally proves part \ref{Poles1stOrder} of the lemma.

We remark that the Lipschitz-continuity in $d$ does not extend beyond the strip $\mathbb{S}_{2s}$. Indeed, 
$\phi_0$ has no pole at $\pm 2is$, whereas for $d\neq 0$,  \eqref{eqn:phifuneq} forces $\phi_d(z)$ to have a pole there. 
\end{proof}

Next we turn to the growth properties of $\phi_d$. We need the following result from complex analysis (see e.g.\ \cite[Ch.\,4,\,Thm.\,3.4]{SteinShakarchi} for a proof).

\begin{theorem}[Phragm\'en-Lindel\"of]\label{phrag-lind}
Let $f$ be a holomorphic function in the wedge $W = \{ z \in \mathbb{C} \,|\, -\tfrac\pi{2\beta} < \mathrm{arg}(z) < \tfrac\pi{2\beta} \}$, $\beta>\tfrac12$, which is continuous on the closure of $W$. Suppose that $|f(z)| \le 1$ on the boundary of $W$ and that there are $A,B>0$ and $0<\alpha<\beta$ such that $|f(z)| \le A e^{B |z|^\alpha}$ for all $z \in W$. Then $|f(z)| \le 1$ on $W$.
\end{theorem}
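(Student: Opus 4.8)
The plan is to run the classical damping-factor argument: one multiplies $f$ by a holomorphic factor that decays just fast enough to overwhelm the assumed growth $A e^{B|z|^\alpha}$, applies the ordinary maximum modulus principle on truncated wedges, and then lets the damping tend to zero.

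First I would fix a real exponent $\gamma$ with $\alpha < \gamma < \beta$ (possible since $\alpha < \beta$; note $\gamma > 0$), and observe that because $\beta > \tfrac12$ the half-opening angle $\tfrac{\pi}{2\beta}$ of $W$ is strictly less than $\pi$, so $W$ avoids the negative real axis and the principal branch of $z \mapsto z^\gamma$ is holomorphic on $W$ and continuous on $\overline{W}$ (with value $0$ at the vertex $z=0$, since $\gamma>0$). Writing $z = |z|\, e^{i\varphi}$ with $|\varphi| \le \tfrac{\pi}{2\beta}$ and using $|\gamma\varphi| \le \tfrac{\gamma\pi}{2\beta} < \tfrac\pi2$, I would record the key estimate
\be
	\mathrm{Re}(z^\gamma) = |z|^\gamma\cos(\gamma\varphi) \ \ge\ c\,|z|^\gamma , \qquad c := \cos\!\big(\tfrac{\gamma\pi}{2\beta}\big) > 0 ,
\ee
valid on all of $\overline{W}$.

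Next I would introduce, for $\varepsilon > 0$, the auxiliary function $g_\varepsilon(z) := f(z)\, e^{-\varepsilon z^\gamma}$, holomorphic on $W$ and continuous on $\overline{W}$. On $\partial W$ the hypothesis $|f| \le 1$ together with $|e^{-\varepsilon z^\gamma}| = e^{-\varepsilon\,\mathrm{Re}(z^\gamma)} \le 1$ gives $|g_\varepsilon| \le 1$; and inside $W$ the estimate above yields $|g_\varepsilon(z)| \le A\, e^{B|z|^\alpha - \varepsilon c|z|^\gamma} \to 0$ as $|z| \to \infty$ in $W$, since $\gamma > \alpha$. Then, fixing any $z_0 \in W$, I would choose $R > |z_0|$ large enough that $|g_\varepsilon| \le 1$ on the circular arc $\{|z| = R\} \cap \overline{W}$, and apply the maximum modulus principle on the bounded region $W \cap \{|z| < R\}$ — whose boundary consists of the two straight edges, where $|g_\varepsilon| \le 1$, together with that arc — to conclude $|g_\varepsilon(z_0)| \le 1$. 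Since $z_0 \in W$ is arbitrary, $|g_\varepsilon| \le 1$ on $W$, i.e. $|f(z)| \le e^{\varepsilon\,\mathrm{Re}(z^\gamma)}$ for all $z \in W$; letting $\varepsilon \to 0^+$ with $z$ held fixed gives $|f(z)| \le 1$ on $W$, and on $\overline{W}$ by continuity.

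I do not expect a serious obstacle here — this is a textbook argument — but the one step that must be gotten right is the choice of the damping exponent $\gamma$: it has to lie strictly between $\alpha$ (so the damping really beats the growth, making $g_\varepsilon$ vanish at infinity) and $\beta$ (so that $\mathrm{Re}(z^\gamma) \ge c|z|^\gamma$ with $c>0$ \emph{uniformly} on $\overline{W}$, which is what makes $g_\varepsilon$ bounded on the boundary). The hypotheses $\alpha < \beta$ and $\beta > \tfrac12$ enter precisely through the existence of such a $\gamma$ and through the single-valuedness of $z^\gamma$ on the wedge, respectively; every other step is routine.
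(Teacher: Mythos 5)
Your proof is correct and is precisely the classical damping-factor argument that the paper itself does not reproduce but delegates to the cited reference (Stein--Shakarchi, Ch.\,4, Thm.\,3.4); all the key points are handled properly, in particular the choice $\alpha<\gamma<\beta$ ensuring both $\mathrm{Re}(z^\gamma)\ge c|z|^\gamma$ uniformly on $\overline{W}$ and the decay of $g_\varepsilon$ at infinity. Nothing further is needed.
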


With the help of this theorem we can establish the following boundedness properties.

\begin{lemma} \label{phibounded} Let $\Theta\in[0,\frac{\pi}{2})$. Then for all $m,n\in\mathbb{Z}_{\ge 0}$ 
the function $z^m\frac{d^n}{dz^n}\phi_d(z)$ is bounded in the wedges $|\arg(z)|\leq\Theta$ and $|\arg(z)-\pi|\leq\Theta$.
\end{lemma}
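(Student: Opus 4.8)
The plan is to prove the bound first in the central wedge $|\arg z|\le\Theta$ and then deduce the case $|\arg z-\pi|\le\Theta$ by the same argument (or by noting that $\phi_d$ is even, so $\phi_d(-z)=\phi_d(z)$, which swaps the two wedges). Fix $\Theta\in[0,\tfrac\pi2)$ and pick some $\theta_0\in(\Theta,\tfrac\pi2)$. For $|\arg z|\le\Theta$ we have $z\in e^{i\theta}\mathbb{S}_{s\cos\theta}$ for every $\theta$ with $|\theta|\le\theta_0$ once $|z|$ is large enough; more precisely the wedge $|\arg z|\le\Theta$ is eventually contained in the tilted strip $e^{i\theta_0}\mathbb{S}_{s\cos\theta_0}\cap e^{-i\theta_0}\mathbb{S}_{s\cos\theta_0}$. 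On that region I will use the rotated-contour integral representation \eqref{phitilde} with a suitably chosen $\theta$ (of the same sign as $\arg z$) to get exponential decay of the integrand and hence a uniform bound on $\phi_d(z)$; differentiating under the integral sign brings down powers of $k$, which are still dominated by the exponential decay $e^{-s\cos\theta_0|k|}$ established in the proof of Lemma~\ref{dderivative}, so $\tfrac{d^n}{dz^n}\phi_d(z)$ is bounded on the far part of the wedge. The factor $z^m$ is the issue: the integral representation only gives boundedness of the derivatives, not decay, so $z^m\tfrac{d^n}{dz^n}\phi_d$ is a priori unbounded from this estimate alone.

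To control the polynomial factor $z^m$ I will use the Phragmén--Lindelöf theorem~\ref{phrag-lind}. The key observation is that, by Lemma~\ref{phianalyticstructure}, $\phi_d$ is holomorphic on $\mathbb{C}\setminus i s\mathbb{Z}$, in particular on an open wedge slightly larger than $|\arg z|\le\Theta$ (say $|\arg z|<\theta_0$, which avoids the imaginary axis), and on its closure. On the two boundary rays $\arg z=\pm\theta_0$ the function $\phi_d$ decays like $e^{-c|z|}$ with $c=s\sin\theta_0>0$: indeed on the ray $z=re^{i\theta_0}$ the integral representation \eqref{phirotatecontour} with contour rotated by exactly $-\theta_0$ has integrand $e^{ikz}(2\cosh(sk)-d)^{-1}$ with $|e^{ikz}|=e^{-r k\sin\theta_0}$ for $k$ real... (one must instead rotate so that $z$ sits strictly inside the tilted strip, giving a genuine exponential gain $e^{-\delta|z|}$). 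Hence $g(z):=e^{\varepsilon z^\alpha}\,z^m\tfrac{d^n}{dz^n}\phi_d(z)$ — or rather, working on each half-wedge $0\le\arg z\le\theta_0$ and its mirror image, the function $z^m\tfrac{d^n}{dz^n}\phi_d(z)$ times a normalising constant — satisfies the hypotheses of Theorem~\ref{phrag-lind} after rescaling the angular variable so the half-wedge becomes the standard wedge $|\arg w|<\tfrac{\pi}{2\beta}$ with $\beta>\tfrac12$: on the boundary it is bounded (polynomial times exponentially decaying function is bounded), and in the interior it grows at most like a fixed power of $|z|$ times a bounded derivative, certainly $\le A e^{B|z|^\alpha}$ for any $0<\alpha<1<\beta$. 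Phragmén--Lindelöf then gives boundedness on the closed half-wedge, hence on $|\arg z|\le\Theta$.

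The remaining bookkeeping is: (i) the bound near $z=0$, which is trivial since $\phi_d$ is analytic at $0$ (the origin is not a pole, by Lemma~\ref{phianalyticstructure}\ref{Poles1stOrder}), so $z^m\tfrac{d^n}{dz^n}\phi_d$ is continuous on any compact neighbourhood of $0$ inside the wedge; (ii) gluing the ``near $0$'' region, the ``boundary ray'' decay, and the Phragmén--Lindelöf interior bound into a single constant; (iii) the symmetric wedge about $\pi$, handled by $z\mapsto -z$. The main obstacle is step~(ii) of the Phragmén--Lindelöf setup: one must exhibit a half-wedge, strictly inside $\mathbb{C}\setminus i\mathbb{R}$, on whose two bounding rays $\phi_d$ genuinely decays exponentially in $|z|$ (not merely stays bounded), because the naive contour rotation by exactly the boundary angle only yields boundedness, not decay; the fix is to choose the boundary angle $\theta_0$ of the Phragmén--Lindelöf wedge strictly between $\Theta$ and $\tfrac\pi2$ and then, for $z$ on the ray $\arg z=\theta_0$, rotate the integration contour by an angle slightly larger than $\theta_0$ so that $z$ lies strictly inside the corresponding tilted strip and the estimate $|e^{ike^{-i\vartheta}z}|=e^{-|z|k\sin(\vartheta-\theta_0)}$ for $k>0$ (and the analogous bound for $k<0$) gives the required exponential gain, uniformly in $d\in[-D,D]$.
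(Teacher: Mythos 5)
Your overall strategy coincides with the paper's: establish a bound for $z^m\frac{d^n}{dz^n}\phi_d$ on the two boundary rays of the wedge, check sub-exponential growth in the interior via the rotated integral representations, and conclude with Phragm\'en--Lindel\"of. The gap is in the boundary estimate, which is the crux of the lemma. You try to prove \emph{exponential} decay of $\phi_d$ along rays $\arg z=\theta_0$ by rotating the integration contour to an angle $\vartheta$ slightly larger than $\theta_0$. This does not work: for $z=re^{i\theta_0}$ one has $\mathrm{Im}\big(e^{-i\vartheta}z\big)=r\sin(\theta_0-\vartheta)$, so the ray leaves the tilted strip $e^{i\vartheta}\mathbb{S}_{s\cos\vartheta}$ as soon as $r>s\cos\vartheta/\sin(\vartheta-\theta_0)$. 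Concretely, $\big|e^{ike^{-i\vartheta}z}\big|=e^{+kr\sin(\vartheta-\theta_0)}$ for $k>0$ (your displayed estimate has the wrong sign), which overwhelms the kernel's decay $e^{-s\cos\vartheta\,|k|}$ for large $r$, so the integral representation you invoke ceases to converge there and yields no bound at all, let alone a gain $e^{-\delta|z|}$. Your parenthetical acknowledgement that ``one must instead rotate so that $z$ sits strictly inside the tilted strip'' does not repair this: staying inside the strip only gives a finite integral of size $O(1/\delta)$, not exponential smallness in $|z|$. (A smaller slip: the wedge $|\arg z|\le\Theta$ is \emph{not} eventually contained in $e^{i\theta_0}\mathbb{S}_{s\cos\theta_0}\cap e^{-i\theta_0}\mathbb{S}_{s\cos\theta_0}$ --- the intersection of two tilted strips through the origin at distinct angles is a bounded parallelogram; the correct statement is that the wedge is covered by the \emph{union} of the tilted strips over $|\theta|\le\Theta$.)

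The repair is both simpler and is exactly what the paper does: exponential decay on the boundary is not needed, only boundedness of $z^m\frac{d^n}{dz^n}\phi_d(z)$ there. On the ray $z=e^{i\Theta}x$, $x\in\mathbb{R}$, the representation \eqref{phitilde} with $\theta=\Theta$ exhibits $\phi_d(e^{i\Theta}x)$ as (a constant times) the Fourier transform in $x$ of the Schwartz function $k\mapsto\big(2\cosh(ske^{-i\Theta})-d\big)^{-1}$; hence this restriction is itself a Schwartz function of $x$, and all combinations $z^m\frac{d^n}{dz^n}\phi_d(z)$ are automatically bounded on the boundary rays at angle exactly $\pm\Theta$. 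Combined with your (correct) interior estimate that $\frac{d^n}{dz^n}\phi_d$ is bounded uniformly over $|\theta|<\Theta$, Phragm\'en--Lindel\"of then closes the argument on the wedge $|\arg z|\le\Theta$ itself, with no need to enlarge the angle or to control decay rates.
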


\begin{proof}
Consider the lines $z=e^{\pm i\Theta}\mathbb{R}$, which constitute the boundary of the wedges we are interested in. The integral representation \eqref{phitilde} of $\phi_d(z)$, when restricted to these lines, yields
 functions in the Schwartz space, because $\left(2\cosh\left(ske^{\mp i\Theta}\right)-d\right)^{-1}$ are functions in the Schwartz space. By definition of the Schwartz space, the function $z^m\frac{d^n}{dz^n}\phi_d(z)$ is bounded on these two lines as well. Moreover, it is analytic in the interior of the wedges. We will now show that the growth of $z^m\frac{d^n}{dz^n}\phi_d(z)$ is less than exponential in the interior of the wedges. The statement of the lemma then follows from  Theorem~\ref{phrag-lind}.

It suffices to show that $\frac{d^n}{dz^n}\phi_d(z)$ is bounded in the wedges $|\arg(z)|<\Theta$ and $|\arg(z)-\pi|<\Theta$. Let $x\in \mathbb{R}$, $\theta\in(-\Theta,\Theta)$, and use the integral representation \eqref{phitilde} to obtain the $x$-independent estimate
\be\label{108u3}
\left|\frac{d^n}{dz^n}\phi_d(xe^{i\theta})\right| 
\leq \frac{1}{2\pi} \int_{-\infty}^{\infty} \left|\frac{k^n}{2\cosh\left(ske^{-i\theta}\right)-d}\right|dk \ .
\ee
Next we estimate the integrand by a $\theta$-independent integrable function. 
Recall from \eqref{cosh-theta-estimate} that for $|k|$ large enough we can estimate
\be
	\left|2\cosh\left(ske^{-i\theta}\right)-d\right|^{-1}
	~\le~
	2 \, e^{-s\cos\theta \,|k|} 
	~\le~
	2 \, e^{-s\cos\Theta \,|k|} \ .
\ee
But for $|k|$ large enough, we also have $|k^n|\leq e^{\frac{s}{2}\cos\Theta\, |k|}$. In other words, there exists a $k_0>0$, independent of $\theta$, such that for all $|k|\geq k_0$,
\be\label{vs98h}
\left|\frac{k^n}{2\cosh\left(ske^{-i\theta}\right)-d}\right|\leq 2e^{-\frac{s}{2}\cos\Theta\, |k|} \ .
\ee
Meanwhile, the function $w \mapsto \left|2\cosh(sw)-d\right|$ is continuous in 
	$\lbrace w\in\mathbb{C}|\, |\arg(w)|\leq\Theta, |\mathrm{Re}(w)|\leq k_0 \rbrace $ and in the corresponding wedge with $|\arg(-w)|\leq\Theta$,
and has no zeros in this
bow tie shaped
compact subset of the complex plane. Hence, it is bounded from below by a strictly positive number $M>0$.
Consequently, for all $|k|<k_0$,
\be\label{brt65}
\left|\frac{k^n}{2\cosh\left(ske^{-i\theta}\right)-d}\right|\leq \frac{k_0^n}{M} \ .
\ee
Plugging \eqref{vs98h} and \eqref{brt65} into \eqref{108u3} yields the bound \be\left|\frac{d^n}{dz^n}\phi_d(z)\right|\leq \frac{1}{2\pi}\left(\int_{-k_0}^{k_0}\frac{k_0^n}{M}dk + 2 \int_{k_0}^{\infty} 2e^{-\frac{s}{2}\cos\Theta\, k} dk \right) 
\ ,
\ee 
valid for all $z$ in the two wedges defined by $\Theta$, and 
where the right hand side is finite and independent of $\theta$.
\end{proof}

\begin{corollary} \label{phiL1} For $y\in\mathbb{R}$, let $\partial^n\phi_d^{[y]}(x):=\frac{d^n}{dz^n}\phi_d(x+iy)$ be the restrictions of $\frac{d^n}{dz^n}\phi_d(z)$ to horizontal lines. If $y\notin s\mathbb{Z}\setminus\lbrace 0 \rbrace$, then 
$(x+iy)^m\partial^n\phi_d^{[y]}(x)\in L_1(\mathbb{R})$ 
for all $n,m\in\mathbb{Z}_{\geq 0}$.
\end{corollary}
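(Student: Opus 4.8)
The plan is to split $\mathbb{R}$ into a bounded part around $0$ and two tails, control the tails using the wedge bounds of Lemma~\ref{phibounded} after gaining two extra powers of $z$, and control the bounded part using the fact that the line at height $y$ avoids the poles of $\phi_d$.

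First I would use the hypothesis $y\notin s\mathbb{Z}\setminus\{0\}$. By Lemma~\ref{phianalyticstructure}\ref{Poles1stOrder}, the poles of the meromorphic continuation of $\phi_d$ lie in $is\mathbb{Z}\setminus\{0\}$, so the horizontal line $\{x+iy\mid x\in\mathbb{R}\}$ does not meet any of them. Hence $z\mapsto\tfrac{d^n}{dz^n}\phi_d(z)$ is analytic in an open neighbourhood of this line, and therefore $x\mapsto\partial^n\phi_d^{[y]}(x)$ is continuous on all of $\mathbb{R}$. In particular, for any $R>0$ the function $(x+iy)^m\partial^n\phi_d^{[y]}(x)$ is bounded on the compact interval $[-R,R]$.

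Next I would fix some $\Theta\in[0,\tfrac\pi2)$. Since $\arg(x+iy)\to 0$ as $x\to+\infty$ and $\arg(x+iy)\to\pi$ as $x\to-\infty$, there is an $R_0>0$ (depending on $y$ and $\Theta$) such that $x+iy$ lies in the wedge $|\arg z|\le\Theta$ whenever $x\ge R_0$ and in the wedge $|\arg z-\pi|\le\Theta$ whenever $x\le -R_0$. Applying Lemma~\ref{phibounded} with the power $m+2$ in place of $m$, the function $z^{m+2}\tfrac{d^n}{dz^n}\phi_d(z)$ is bounded on these two wedges by some constant $K>0$. Hence for all $|x|\ge R_0$,
\[
\bigl|(x+iy)^m\partial^n\phi_d^{[y]}(x)\bigr|
~=~\frac{\bigl|(x+iy)^{m+2}\partial^n\phi_d^{[y]}(x)\bigr|}{|x+iy|^2}
~\le~\frac{K}{|x+iy|^2}~\le~\frac{K}{x^2}\,,
\]
which is integrable over $\{|x|\ge R_0\}$.

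Finally I would combine the two regimes: $(x+iy)^m\partial^n\phi_d^{[y]}(x)$ is bounded on $[-R_0,R_0]$ and dominated by the integrable function $K/x^2$ on the complement, so it lies in $L_1(\mathbb{R})$. The only step requiring genuine care is the first one --- ensuring the line at height $y$ avoids the poles so that no local singularity arises --- which is precisely where the assumption $y\notin s\mathbb{Z}\setminus\{0\}$ is used; the rest is a routine estimate resting on Lemma~\ref{phibounded}.
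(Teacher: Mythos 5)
Your argument is correct and is exactly the reasoning the paper leaves implicit when stating this as a corollary of Lemma~\ref{phibounded}: the tails are handled by the wedge bound applied with exponent $m+2$ to gain the integrable factor $|x|^{-2}$, and the bounded middle part is handled by continuity of the meromorphic continuation along the line $\mathbb{R}+iy$, which avoids the poles in $is\mathbb{Z}\setminus\{0\}$ precisely because of the hypothesis on $y$. Nothing to add.
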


\begin{corollary} \label{boundaryintegrals} For any $a,b\in\mathbb{R}$ and for all $n,m\in\mathbb{Z}_{\geq 0}$, \be\lim_{x\rightarrow \pm\infty}\int_a^b(x+it)^m\frac{d^n}{dz^n}\phi_d(x+it)\, dt =0 \ .\ee
\end{corollary}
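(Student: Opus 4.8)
The plan is to read this off directly from Lemma~\ref{phibounded}, using one extra power of $z$ to turn boundedness on a wedge into pointwise decay. Fix any $\Theta\in(0,\tfrac{\pi}{2})$. Applying Lemma~\ref{phibounded} with the pair $(m+1,n)$ in place of $(m,n)$ produces a constant $C>0$ such that
\[
\Big|z^{m+1}\tfrac{d^n}{dz^n}\phi_d(z)\Big|\le C
\quad\text{on both wedges}\quad
W_+=\{z:|\arg z|\le\Theta\}
\ \text{ and }\
W_-=\{z:|\arg z-\pi|\le\Theta\}.
\]
Equivalently, $\big|z^{m}\tfrac{d^n}{dz^n}\phi_d(z)\big|\le C/|z|$ for all $z\in(W_+\cup W_-)\setminus\{0\}$. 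Since the poles of the meromorphic continuation of $\phi_d$ lie on $is\mathbb{Z}\setminus\{0\}\subset i\mathbb{R}$ (Lemma~\ref{phianalyticstructure}), the integrand $(x+it)^m\tfrac{d^n}{dz^n}\phi_d(x+it)$ is continuous on the whole segment $t\in[a,b]$ as soon as $x\neq 0$, so the integral is well defined for all large $|x|$.

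For the limit $x\to+\infty$: if $t\in[a,b]$ and $x>0$ satisfies $x\ge\max(|a|,|b|)/\tan\Theta$, then $|\arg(x+it)|=|\arctan(t/x)|\le\Theta$, so $z=x+it\in W_+$, and moreover $|z|=\sqrt{x^2+t^2}\ge x$. Hence $\big|z^m\tfrac{d^n}{dz^n}\phi_d(z)\big|\le C/x$ uniformly in $t\in[a,b]$, and therefore
\[
\left|\int_a^b (x+it)^m\tfrac{d^n}{dz^n}\phi_d(x+it)\,dt\right|\le \frac{C\,|b-a|}{x}\ \xrightarrow[\ x\to+\infty\ ]{}\ 0 .
\]
The case $x\to-\infty$ is identical: for $x<0$ with $|x|$ large one writes $x+it=-(|x|-it)$, notes $|\arg(|x|-it)|=|\arctan(t/|x|)|\le\Theta$ once $|x|\ge\max(|a|,|b|)/\tan\Theta$, so $x+it\in W_-$, and again $|x+it|\ge|x|$, giving the bound $C\,|b-a|/|x|\to 0$.

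I do not expect any real obstacle here; the estimate of Lemma~\ref{phibounded} already contains all the analytic content, and the remaining points are elementary: that the vertical segment $x+i[a,b]$ carries no poles of $\phi_d$ for $x\neq0$, and that it eventually enters the closed wedge $W_+$ (respectively $W_-$) as $x\to+\infty$ (respectively $x\to-\infty$). The only choice to be made is a harmless one, namely any fixed $\Theta\in(0,\tfrac{\pi}{2})$.
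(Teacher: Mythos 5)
Your proof is correct and follows exactly the route the paper intends: Corollary~\ref{boundaryintegrals} is stated as an immediate consequence of Lemma~\ref{phibounded}, and your use of the exponent $m+1$ to convert wedge-boundedness into the uniform decay $C/|x|$ on the vertical segment (which eventually lies in the wedge for $|x|$ large) is precisely the standard argument. No gaps.
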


To understand at which points of $is\mathbb{Z} \setminus \{0\}$ the function $\phi_d$ has a first order pole and at which points the singularity can be lifted, we compute the residues.

\begin{lemma} \label{phiresidue} 
For $n \in \mathbb{Z}$, the residue of $\phi_d(z)$ in $z=isn$  is given by 
\be
\mathrm{Res}_{isn}(\phi_d) = \frac{1}{2\pi i} \frac{\sin(n\gamma)}{\sin(\gamma)}\ ,
\ee 
where $\gamma \in \mathbb{R}$ satisfies $d=2\cos(\gamma)$.
\end{lemma}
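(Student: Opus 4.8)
The plan is to obtain all of the residues from the single value $\mathrm{Res}_{is}(\phi_d)$ and then propagate them along the imaginary axis using the recursion \eqref{eqn:phirecursive} of Lemma~\ref{phianalyticstructure}. Throughout I fix $\gamma\in(0,\pi)$ with $d=2\cos(\gamma)$, which is possible since $d\in(-2,2)$; in particular $\sin(\gamma)\neq 0$, and since $\sin(n\gamma)/\sin(\gamma)$ is unchanged under $\gamma\mapsto-\gamma$ and $\gamma\mapsto\gamma+2\pi$, the right-hand side of the claimed formula does not depend on this choice.

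First I would settle the base cases $n=0$ and $n=1$. For $n=0$ there is nothing to do: by Lemma~\ref{phianalyticstructure}\,\ref{Poles1stOrder} the poles of $\phi_d$ lie in $is\mathbb{Z}\setminus\{0\}$, so $\phi_d$ is holomorphic at $0$ and $\mathrm{Res}_0(\phi_d)=0=\tfrac{1}{2\pi i}\tfrac{\sin 0}{\sin\gamma}$. For $n=1$, I would first compute the residue of $\phi_0$ from the explicit formula $\phi_0(z)=\bigl(4s\cosh(\tfrac{\pi}{2s}z)\bigr)^{-1}$ of Example~\ref{phi_0}: since $\tfrac{d}{dz}\cosh(\tfrac{\pi}{2s}z)\big|_{z=is}=\tfrac{i\pi}{2s}$, this gives $\mathrm{Res}_{is}(\phi_0)=\tfrac{1}{4s}\cdot\tfrac{2s}{i\pi}=\tfrac{1}{2\pi i}$. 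To pass to general $d$, I would invoke the Lipschitz estimate $|\phi_d(z)-\phi_0(z)|\le B\,d$ on $\mathbb{S}_{2s}\setminus\{\pm is\}$ already established in the proof of Lemma~\ref{phianalyticstructure}: it shows that $\phi_d-\phi_0$ stays bounded near $\pm is$, hence extends holomorphically across those points, so $\phi_d$ and $\phi_0$ have the same principal parts, and in particular the same residues, there. Thus $\mathrm{Res}_{is}(\phi_d)=\tfrac{1}{2\pi i}$, which is the claim for $n=1$.

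For general $n\in\mathbb{Z}$ I would work near $z=0$ (with $z\notin is\mathbb{Z}$) and use \eqref{eqn:phirecursive} in the form
\[
\phi_d(z+ins)=\frac{\sin(n\gamma)}{\sin(\gamma)}\,\phi_d(z+is)-\frac{\sin((n-1)\gamma)}{\sin(\gamma)}\,\phi_d(z).
\]
Here $\phi_d(z)$ is regular at $z=0$, while $\phi_d(z+is)$ has at most a simple pole there with $\lim_{z\to 0}z\,\phi_d(z+is)=\mathrm{Res}_{is}(\phi_d)=\tfrac{1}{2\pi i}$. Taking the residue at $z=0$ of both sides, and noting that $\mathrm{Res}_{z=0}\bigl[\phi_d(\,\cdot\,+ins)\bigr]=\mathrm{Res}_{isn}(\phi_d)$ (substitute $w=z+ins$ in the contour integral defining the residue), yields $\mathrm{Res}_{isn}(\phi_d)=\tfrac{1}{2\pi i}\,\sin(n\gamma)/\sin(\gamma)$, as desired.

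The only step that is not pure bookkeeping is the base case $n=1$, and even there the substance — the Lipschitz-in-$d$ bound forcing $\phi_d$ and $\phi_0$ to have identical principal parts at $\pm is$ — is already available from the proof of Lemma~\ref{phianalyticstructure}; what remains is the elementary residue computation for $\phi_0$ and a single application of the recursion \eqref{eqn:phirecursive}.
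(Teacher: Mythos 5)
Your proof is correct, and the inductive step (propagating the residue from $z=is$ to $z=isn$ via \eqref{eqn:phirecursive} together with $\mathrm{Res}_{0}\phi_d=0$) is exactly what the paper does. Where you genuinely diverge is in establishing the base case $\mathrm{Res}_{is}(\phi_d)=\tfrac{1}{2\pi i}$. The paper computes this directly by writing $2\pi i\,\mathrm{Res}_{is}(\phi_d)$ as the difference of two horizontal line integrals at heights $\tfrac{s}{2}$ and $\tfrac{3s}{2}$, using the functional relation of Lemma~\ref{phianalyticstructure}\,\ref{extended_funrel} to push everything to the real axis, and then evaluating $(2-d)\int_{\mathbb{R}}\phi_d = (2-d)(2\cosh(sk)-d)^{-1}\big|_{k=0}=1$; this requires the integrability and no-contribution-from-infinity statements of Corollaries~\ref{phiL1} and~\ref{boundaryintegrals}. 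You instead compute $\mathrm{Res}_{is}(\phi_0)=\tfrac{1}{2\pi i}$ from the closed form of Example~\ref{phi_0} and transfer it to general $d$ via the bound $|\phi_d(z)-\phi_0(z)|\le B|d|$ near $\pm is$ from the proof of Lemma~\ref{phianalyticstructure}, concluding by Riemann's removable singularity theorem that the principal parts coincide. Both arguments are sound. Your route is shorter and avoids the contour gymnastics, but it leans on an estimate that appears only inside the proof of Lemma~\ref{phianalyticstructure} (not in its statement), so strictly speaking you are reusing an intermediate result rather than a quotable lemma; the paper's contour computation is self-contained given the stated corollaries and also exhibits the pleasant identity that the residue is governed by the $k=0$ value of the Fourier symbol. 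One cosmetic point: the Lipschitz bound should read $B|d|$ rather than $B\,d$ for negative $d$ (a slip already present in the paper), but this does not affect the boundedness conclusion you draw from it.
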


\begin{proof}
We start by computing the residue at $is$:
\begin{align}
2 \pi i \, \mathrm{Res}_{is}(\phi_d) &
\overset{(a)}=\int_{\mathbb{R}+\frac{1}{2}is}\phi_d(z)\,dz - \int_{\mathbb{R}+\frac{3}{2}is}\phi_d(z)\,dz \nonumber\\
&= \int_{-\infty}^{\infty}\phi_d\left(x+\tfrac{is}{2}\right)dx -\int_{-\infty}^{\infty}\phi_d\left(x+\tfrac{3is}{2}\right)dx \nonumber\\
&\overset{(b)}=\int_{-\infty}^{\infty}\phi_d\left(x+\tfrac{is}{2}\right)dx +\int_{-\infty}^{\infty}\phi_d\left(x-\tfrac{is}{2}\right)dx-d\int_{-\infty}^{\infty}\phi_d\left(x+\tfrac{is}{2}\right)dx \nonumber\\
&\overset{(c)}=(2-d)\int_{-\infty}^{\infty}\phi_d(x)dx = (2-d)\frac{1}{2\cosh(sk)-d}\Big|_{k=0}
= 1 \ .
\end{align}
Here, all integrals exist by Corollary~\ref{phiL1}. In step (a), the circular contour is deformed to two horizontal infinite lines, making use of Corollary~\ref{boundaryintegrals} to ensure that no contribution is picked up when pushing the vertical parts of the contour to infinity. Step (b) is the functional relation in
 Lemma~\ref{phianalyticstructure}\,\ref{extended_funrel}. In step (c) all contours are moved to the real axis, using that $\phi_d$ is analytic in $\mathbb{S}_{s}$ (Lemma~\ref{phianalyticstructure}) and that by Corollary~\ref{boundaryintegrals} there are no contributions from infinity.

But from \eqref{eqn:phirecursive} we see that \be\mathrm{Res}_{isn}\phi_d = \frac{\sin(n\gamma)}{\sin(\gamma)}\mathrm{Res}_{is}\phi_d-\frac{\sin((n-1)\gamma)}{\sin(\gamma)}\mathrm{Res}_{0}\phi_d\ .\ee
Since $\mathrm{Res}_{0}\phi_d=0$ we obtain the statement of the lemma.
\end{proof}

We are now in a position to justify the notion that $\phi_d(z)$ is a Green's function for the difference operator \eqref{diffop}:

\begin{lemma} \label{Greensproperty} $\phi_d(x)$ gives rise to a representation of the Dirac $\delta$-distribution on $BC(\mathbb{R},\mathbb{C})$ via
\be
\lim_{y\nearrow  s}\left(\phi_d(x+iy) + \phi_d(x-iy) - d\phi_d(x) \right)= \delta(x)\ .
\ee
\end{lemma}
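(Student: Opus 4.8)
The plan is to show that for every test function $g \in BC(\mathbb{R},\mathbb{C})$ we have
$$
\lim_{y \nearrow s} \int_{-\infty}^\infty \big( \phi_d(x+iy) + \phi_d(x-iy) - d\,\phi_d(x) \big) g(x)\, dx = g(0) \ .
$$
The key point is that for $0 < y < s$ the function $x \mapsto \phi_d(x\pm iy)$ is the analytic continuation of a Schwartz function on the horizontal line $\mathrm{Im} = \pm y$, and by Corollary~\ref{phiL1} it is in $L_1(\mathbb{R})$, so the convolution-type integral converges absolutely and the limit makes sense. I would first record, via the Fourier representation \eqref{deltastandardrep} and analytic continuation à la Lemma~\ref{fourieranalytic}, the identity valid for $|y| < s$,
$$
\phi_d(x+iy) \;=\; \frac{1}{2\pi} \int_{-\infty}^\infty \frac{e^{ikx} e^{-ky}}{2\cosh(sk) - d}\, dk \ ,
$$
so that
$$
\phi_d(x+iy) + \phi_d(x-iy) - d\,\phi_d(x) \;=\; \frac{1}{2\pi} \int_{-\infty}^\infty e^{ikx}\, \frac{2\cosh(ky) - d}{2\cosh(sk)-d}\, dk \ =: \ \psi_y(x) \ .
$$
Thus $\psi_y$ is the inverse Fourier transform of $\hat\psi_y(k) = \dfrac{2\cosh(ky)-d}{2\cosh(sk)-d}$.

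Next I would verify that $\{\psi_y\}_{0<y<s}$ is an approximate identity as $y \nearrow s$. As $y \to s$ one has $\hat\psi_y(k) \to 1$ pointwise; moreover $\int_{-\infty}^\infty \psi_y(x)\,dx = \hat\psi_y(0) = \frac{2-d}{2-d} = 1$ for all $y$. The remaining ingredient is concentration of mass near $x=0$: I would show $\int_{|x|>\delta} |\psi_y(x)|\, dx \to 0$ as $y \nearrow s$ for each fixed $\delta > 0$. The cleanest route is to write $\hat\psi_y(k) = 1 - \dfrac{2\cosh(sk) - 2\cosh(ky)}{2\cosh(sk)-d}$, so that $\psi_y(x) = \delta(x) - \rho_y(x)$ in the distributional sense, where $\rho_y$ is the inverse Fourier transform of a function that is non-negative, integrable, and tends to $0$ in $L_1$ as $y \nearrow s$ (dominated convergence: the integrand is dominated by $\frac{2\cosh(sk)-2\cosh(k y_0)}{2\cosh(sk)-d}$ for $y$ close to $s$, hmm — one has to be slightly careful, so alternatively estimate $\|\hat\psi_y - 1\|_{L_1}$ directly using $2\cosh(sk) - 2\cosh(ky) \le 2\cosh(sk)\cdot(something)$ and the uniform exponential decay of $(2\cosh(sk)-d)^{-1}$, then use $\|\rho_y\|_\infty \le \frac{1}{2\pi}\|\hat\psi_y - 1\|_{L_1} \to 0$, together with uniform $L_1$-decay of $\rho_y$ at large $|x|$ coming from Corollary~\ref{phiL1} applied at a fixed line and Lemma~\ref{phibounded}). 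Either way, pairing $\psi_y$ against $g \in BC(\mathbb{R},\mathbb{C})$, splitting the integral into $|x|<\delta$ and $|x|>\delta$, and using continuity of $g$ at $0$ on the first piece and boundedness of $g$ times smallness of $\int_{|x|>\delta}|\rho_y|$ on the second, gives $\int \psi_y g \to g(0)$.

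The main obstacle is controlling $\psi_y$ uniformly as $y \nearrow s$: the denominator $2\cosh(sk)-d$ decays like $e^{-s|k|}$ while the numerator $2\cosh(ky)-d$ grows like $e^{y|k|}$ with $y \to s$, so $\hat\psi_y$ is only barely integrable in the limit and one cannot simply pass to the limit inside the integral. The resolution is to never take the limit at the level of $\phi_d(x\pm iy)$ individually (each piece blows up as $y \to s$ because of the poles of $\phi_d$ at $\pm is$), but only at the level of the combination $\psi_y$, where the $\pm is$ poles cancel precisely because of the functional relation of Lemma~\ref{phifuneq} — equivalently because $2\cosh(ky)-d$ stays bounded relative to $2\cosh(sk)-d$. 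Concretely I would fix $y_0 < s$ close to $s$, bound $\hat\psi_y(k) \le \hat\psi_{y_0}(k) + 1$ hmm that is false for small $k$; better: bound $0 \le 1 - \hat\psi_y(k) \le 1 - \hat\psi_{y_0}(k)$ for $y_0 \le y < s$ pointwise in $k$ (since $\cosh(ky)$ is increasing in $y$), which is an integrable dominating function, and then dominated convergence gives $\|1-\hat\psi_y\|_{L_1} \to 0$, hence $\|\rho_y\|_\infty \to 0$; the large-$|x|$ decay of $\rho_y$ needed for the split is uniform for $y \in [y_0,s)$ by the same Phragmén–Lindelöf estimates (Lemma~\ref{phibounded}, Corollary~\ref{phiL1}) applied on the line $\mathrm{Im} = y_0$ together with the functional relation. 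This makes all steps rigorous.
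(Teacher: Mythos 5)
Your overall strategy (approximate identity with unit mass plus concentration) is sound in spirit, but the central quantitative step is broken. You set $\hat\psi_y(k)=\tfrac{2\cosh(ky)-d}{2\cosh(sk)-d}$ and want to control $\rho_y$ through $\|1-\hat\psi_y\|_{L_1}$, claiming that $1-\hat\psi_{y_0}$ is an integrable dominating function. It is not: for every fixed $y<s$,
\[
1-\hat\psi_y(k)\;=\;\frac{2\cosh(sk)-2\cosh(ky)}{2\cosh(sk)-d}\;\xrightarrow{\;|k|\to\infty\;}\;1\,,
\]
so $1-\hat\psi_y\notin L_1(\mathbb{R})$ for any $y<s$, the proposed dominating function $1-\hat\psi_{y_0}$ is likewise non-integrable, and $\|1-\hat\psi_y\|_{L_1}=\infty$ identically. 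Consequently $\rho_y=\delta-\psi_y$ is not a function at all (it still contains the point mass at the origin), so $\|\rho_y\|_\infty$ and ``uniform $L_1$-decay of $\rho_y$'' are not defined, and the dominated-convergence conclusion $\|1-\hat\psi_y\|_{L_1}\to0$ is vacuous. This is precisely the obstruction the paper points out before its proof: the Fourier-side argument only yields a $\delta$-distribution on test functions whose Fourier transform is itself in $L_1$, which is far from all of $BC(\mathbb{R},\mathbb{C})$.

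The fix the paper uses is to work on the $x$-side rather than the $k$-side: by Lemma~\ref{phiresidue}, $\phi_d$ has simple poles at $\pm is$ with residues $\pm\tfrac{1}{2\pi i}$, so one writes $\phi_d(z)=\pm\tfrac{1}{2\pi i}(z\mp is)^{-1}+r_\pm(z)$ with $r_\pm$ bounded near $\pm is$, whence $\psi_y(x)=\tfrac{1}{\pi}\tfrac{s-y}{x^2+(s-y)^2}+u_y(x)$ with $u_y$ uniformly bounded. The Lorentzian is a genuine approximate identity on $BC(\mathbb{R},\mathbb{C})$, the $u_y$-term near $0$ contributes $O(\varepsilon)$, and the tail $|x|>\varepsilon$ is killed by the uniform decay $x^2\psi_y(x)\to0$ coming from the functional equation \eqref{eqn:phifuneq} together with Lemma~\ref{phibounded}. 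If you want to keep your approximate-identity framing, you would need to establish, by such $x$-space estimates, that $\int_{|x|>\delta}|\psi_y(x)|\,dx\to0$ and that $\int_{|x|<\delta}|\psi_y(x)|\,dx$ stays bounded as $y\nearrow s$; neither can be obtained from $L_1$-convergence of $\hat\psi_y$ to $1$, which fails.
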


Before we turn to the proof, we note that for $|y|<s$, 
\begin{align}
\phi_d(x+iy) + \phi_d(x-iy) - d\phi_d(x) &= \frac{1}{2\pi}\int_{-\infty}^{\infty} \frac{e^{ikx-yk}+e^{ikx+yk}-de^{ikx}}{2\cosh(sk)-d}dk \nonumber\\
&=\frac{1}{2\pi}\int_{-\infty}^{\infty}e^{ikx} \frac{2\cosh(yk)-d}{2\cosh(sk)-d}dk\ .
\end{align}
In the limit $y\nearrow s$, the integrand on the right hand side approaches $e^{ikx}$ pointwise. The usual exchange-of-integration-order argument proves that one obtains a Dirac $\delta$-distribution on $L_1$-functions whose Fourier-transformation is also $L_1$. To show that we obtain a $\delta$-distribution on $BC(\mathbb{R},\mathbb{C})$, we follow a different route.

\begin{proof}[Proof of Lemma~\ref{Greensproperty}]
Since $\phi_d(z)$ has simple poles at $z=\pm is$ of residue $\pm \frac{1}{2\pi i}$ (see Lemma \ref{phiresidue}), we can write 
\be\phi_d(z)= \pm\frac{1}{2\pi i}\frac{1}{(z\mp is)}+r_\pm(z)\ ,
\ee 
where $r_\pm(z)$ is now analytic at
$z=\pm is$. In particular, by Lemma~\ref{phibounded}
$r_+(z)$ is bounded in the upper half of $\mathbb{S}_s$ and $r_-(z)$ is bounded in the lower half of $\mathbb{S}_s$. Then
\begin{align}
\delta_y(x) &:= \phi_d(x+iy)+\phi_d(x-iy)-d\phi_d(x) \nonumber\\
&= \frac{1}{2\pi i}\left(\frac{1}{(x+iy- is)}-\frac{1}{(x-iy+ is)}\right)+r_+(x+iy)+r_-(x-iy)-d\phi_d(x) \nonumber\\
&= \tilde{\delta}_y(x) +u_y(x)\ ,
\end{align}
where \be\tilde{\delta}_y(x):=\frac{1}{\pi}\frac{s-y}{x^2 + (s-y)^2}\ee and $u_y(x):=r_+(x+iy)+r_-(x-iy)-d\phi_d(x)$.

Now suppose $f\in BC(\mathbb{R},\mathbb{C})$. We have to show that \be
\lim_{y\nearrow s} \int_{-\infty}^\infty \delta_y(x) f(x) dx = f(0).
\ee 
In order to do that, let $\varepsilon >0$ be arbitrary and split the integral, 
\be\int_{-\infty}^\infty \delta_y(x) f(x) dx = I_1(y)+I_2(y)+I_3(y)\ ,
\ee 
where 
\be
I_1(y)=\int_{\mathbb{R}\setminus (-\varepsilon,\varepsilon)} \delta_y(x) f(x) dx\ , 
\ee 
\be
I_2(y)= \int_{-\varepsilon}^\varepsilon \tilde{\delta}_y(x) f(x) \; dx, \hspace{1cm} I_3(y)= \int_{-\varepsilon}^\varepsilon u_y(x) f(x) dx\ .
\ee The \textsl{Lorentz functions} $\tilde{\delta}_y(x)$ are a well-known representation of the Dirac $\delta$-distribution on $BC(\mathbb{R},\mathbb{C})$ as $y\rightarrow s$, so independently of $\varepsilon$ we have \be\lim_{y\nearrow s} I_2(y)=f(0)\ .\ee  The functions $u_y(x)$ are uniformly bounded for $y\in[0,s]$, say by $C$. Hence, \be|I_3(y)|\leq 2\varepsilon C \left\|f\right\|_\infty \ .\ee Finally, consider $\delta_y(x)$ on $\mathbb{R}\setminus(-\varepsilon,\varepsilon)$. Due to the functional equation \eqref{eqn:phifuneq}, it converges pointwisely to zero on this domain as $y\rightarrow s$. But Lemma \ref{UniformConvergenceOfAnalyticFuns} in connection with Lemma \ref{phibounded} even ensures uniform convergence, and this is still true for the function $x^2 \delta_y(x)$. By means of the variable transformation $t=1/x$ we can recast $I_1(y)$ as an integral over a finite interval:
\be
I_1(y)=\int_{-\tfrac{1}{\varepsilon}}^{\tfrac{1}{\varepsilon}} \frac{1}{t^2}\delta_y(\tfrac{1}{t}) f(\tfrac{1}{t}) dt\ , 
\ee 
By what we just said, the integrand converges uniformly to zero on $[-\tfrac{1}{\varepsilon},\tfrac{1}{\varepsilon}]$.
Thus, integral and limit can be swapped, which results in
\be\lim_{y\nearrow s} I_1(y)=0\ .\ee 
We conclude that \be\left|f(0) -\lim_{y\nearrow s}\int_{-\infty}^\infty \delta_y(x) f(x) dx \right|\leq 2\varepsilon C\left\|f\right\|_\infty\ .\ee As $\varepsilon>0$ was arbitrary, the statement follows.
\end{proof}

\begin{remark} \label{rem:should-have-found-this-earlier}
After completion of this paper we noticed that one can actually give a simple explicit expression for $\phi_d(z)$ for arbitrary $d\in(-2,2)$:
\be\label{explicit_phi_d}
\phi_d(z)~=~\frac{1}{2s\sin(\gamma)}\cdot\frac{\sinh\left(\frac{\pi-\gamma}{s}z\right)}{\sinh\left(\frac{\pi}{s}z\right)} \ ,
\ee
where $\gamma\in(0,\pi)$ is defined via $d=2\cos(\gamma)$. This can be seen via a contour deformation argument using the analytical properties of $\phi_d$ established in this section, we will provide the details elsewhere. The explicit integral can also be found in tables,
see e.g.\ \cite[1.9\,(6)]{Erdelyi}.
We were, however, unable to obtain it in a more straight-forward fashion, circumventing the analysis carried out in this section.
\end{remark}

\subsection{The $N$-dimensional Green's function $\Phi_{\vec C}(z)$}
\label{section-NdimGreen}

Now let us investigate an $N$-dimensional version of the Green's function $\phi_d(z)$.
Recall from Notations~\ref{intro-notation} the definition of the subset $\mathrm{Mat}_{<2}(N) \subset \mathrm{Mat}(N,\mathbb{R})$, as well as from \eqref{phiC-def-intro} the definition of 
$\Phi_{\vec C}:\mathbb{R}\rightarrow\mathrm{Mat}(N,\mathbb{R})$ 
for $\vec{C}\in\mathrm{Mat}_{<2}(N)$.

\begin{lemma}\label{Phiproperties} $\Phi_{\vec C}(x)$ has the following properties:
\begin{enumerate}[label=\roman*)]
\item \label{simdiag}$\Phi_{\vec C}(x)$ and $\vec{C}$ are simultaneously diagonalisable for all $x\in\mathbb{R}$.
\item \label{matrixelements} Any matrix element of $\Phi_{\vec C}(x)$ can be written as a linear combination \be [\Phi_{\vec C}]_{nm}(x) = \sum_{j=1}^N \Omega_{nm}^j \phi_{d_j}(x)\ee of one-dimensional Green's functions, with $d_j\in(-2,2)$ given by the eigenvalues of $\vec{C}$, and $\Omega_{nm}^j$ some real coefficients. 
\item \label{delta} $\Phi(x)$ gives rise to a representation of the Dirac $\delta$-distribution on $BC(\mathbb{R},\mathbb{C})^N$:
\be\delta(x)\mathbf{1} = \lim_{y\nearrow  s}\left(\Phi_{\vec C}(x+iy) + \Phi_{\vec C}(x-iy) - \vec{C}\cdot \Phi_{\vec C}(x) \right) \ee 
\end{enumerate}
\end{lemma}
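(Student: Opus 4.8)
The plan is to diagonalise $\vec{C}$ and reduce everything to the one-dimensional statements already proved for $\phi_d$ in Section~\ref{GreensChapter}. Since $\vec{C}\in\mathrm{Mat}_{<2}(N)$, by definition there is an invertible real matrix $S$ with $S^{-1}\vec{C}S = \mathrm{diag}(d_1,\dots,d_N)$ and all $d_j\in(-2,2)$. First I would observe that, because $(2\cosh(sk)\mathbf{1}-\vec{C})^{-1} = S\,\mathrm{diag}\big((2\cosh(sk)-d_j)^{-1}\big)\,S^{-1}$, plugging this into the defining Fourier integral \eqref{phiC-def-intro} and pulling the constant matrices $S,S^{-1}$ out of the integral gives immediately
\be
\Phi_{\vec C}(x) = S\,\mathrm{diag}\big(\phi_{d_1}(x),\dots,\phi_{d_N}(x)\big)\,S^{-1} \ .
\ee
This single identity does almost all the work. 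Part~\ref{simdiag} is then clear: $\Phi_{\vec C}(x)$ is conjugate by the $x$-independent matrix $S$ to a diagonal matrix, the same $S$ that diagonalises $\vec{C}$, so the two are simultaneously diagonalisable for every $x\in\mathbb{R}$. Part~\ref{matrixelements} is equally immediate: writing out the matrix product, $[\Phi_{\vec C}]_{nm}(x) = \sum_{j=1}^N S_{nj}(S^{-1})_{jm}\,\phi_{d_j}(x)$, so one sets $\Omega_{nm}^j := S_{nj}(S^{-1})_{jm}\in\mathbb{R}$ and the $d_j$ are by construction the eigenvalues of $\vec{C}$ lying in $(-2,2)$.

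For part~\ref{delta}, I would again conjugate: for $|y|<s$,
\be
\Phi_{\vec C}(x+iy) + \Phi_{\vec C}(x-iy) - \vec{C}\cdot\Phi_{\vec C}(x)
 = S\,\mathrm{diag}\Big(\phi_{d_j}(x+iy)+\phi_{d_j}(x-iy)-d_j\phi_{d_j}(x)\Big)_{j=1}^N\,S^{-1} \ ,
\ee
using that $\vec{C} = S\,\mathrm{diag}(d_j)\,S^{-1}$ commutes through the conjugation in the expected way. By Lemma~\ref{Greensproperty}, each diagonal entry converges, as $y\nearrow s$, to the Dirac $\delta$-distribution when tested against a function in $BC(\mathbb{R},\mathbb{C})$; hence for $\vec{v}\in BC(\mathbb{R},\mathbb{C})^N$, testing component-wise against the finitely many scalar functions appearing in $S^{-1}\vec{v}$ and then re-applying $S$, the whole matrix expression converges to $S\cdot\delta(x)\mathbf{1}\cdot S^{-1} = \delta(x)\mathbf{1}$. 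One should note here that $\phi_{d_j}$ is analytic on $\overline{\mathbb{S}}_s\setminus\{\pm is\}$ (Lemma~\ref{phianalyticstructure}), so the expression $\Phi_{\vec C}(x\pm iy)$ genuinely makes sense for $|y|<s$ and the limit $y\nearrow s$ is the correct one.

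I do not expect a serious obstacle here; the only point requiring a little care is making the notion of "representation of the Dirac $\delta$-distribution on $BC(\mathbb{R},\mathbb{C})^N$" precise — presumably it means $\lim_{y\nearrow s}\int_{-\infty}^{\infty}\big(\Phi_{\vec C}(x+iy)+\Phi_{\vec C}(x-iy)-\vec{C}\cdot\Phi_{\vec C}(x)\big)\cdot\vec{v}(x)\,dx = \vec{v}(0)$ for all $\vec{v}\in BC(\mathbb{R},\mathbb{C})^N$ — and then checking that the conjugation by the constant matrix $S$ commutes with this limit, which is routine since $S$ has finitely many bounded entries and the convergence in Lemma~\ref{Greensproperty} is scalar. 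A minor subtlety worth flagging is that the decomposition is not unique (it depends on the choice of $S$, and when $\vec{C}$ has repeated eigenvalues the $\Omega_{nm}^j$ are not canonically determined), but the lemma only asserts existence of such a decomposition, so this is harmless.
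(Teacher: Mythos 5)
Your proposal is correct and follows essentially the same route as the paper: diagonalise $\vec{C}$ by a constant invertible matrix, pull it through the Fourier integral to get $\Phi_{\vec C}(x)=S\,\mathrm{diag}(\phi_{d_j}(x))\,S^{-1}$, read off parts \ref{simdiag} and \ref{matrixelements} with $\Omega^j_{nm}=S_{nj}(S^{-1})_{jm}$, and reduce part \ref{delta} to Lemma~\ref{Greensproperty} component-wise. No gaps.
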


\begin{proof}
First of all, note that the matrix inverse in the definition is well-defined since $\vec{C}$ has spectral radius smaller than 2.

\medskip

\noindent
\ref{simdiag} Let $\vec{D}\in\mathrm{Mat}(N,\mathbb{R})$ be a
diagonal matrix such that $\vec{D} = \vec{T}^{-1}\vec{C}\vec{T}$ for some invertible $\vec{T}\in \mathrm{Mat}(N,\mathbb {R})$. Then
\begin{align}\vec{T}^{-1}\Phi_{\vec C}(x)\vec{T} &~=~ \frac{1}{2\pi}\int_{-\infty}^{\infty} e^{ikx}\;\vec{T}^{-1}(2\cosh(sk)\mathbf{1}-\vec{C})^{-1}\vec{T} \;dk \nonumber\\
&~=~\frac{1}{2\pi}\int_{-\infty}^{\infty} e^{ikx}\left(2\cosh(sk)\mathbf{1}-\vec{D}\right)^{-1}dk 
~=~ \Phi_{\vec D}(x)
\end{align}
is also diagonal.

\medskip

\noindent
\ref{matrixelements} Write $\vec{D}=\mathrm{diag}(d_1,...,d_N)$. Then the matrix elements can be written as \be[\Phi_{\vec C}]_{nm}(x)=\sum_{j=1}^N T_{nj}[T^{-1}]_{jm}[\Phi_D]_{jj}(x) \ee where $\Omega_{nm}^j=T_{nj}[T^{-1}]_{jm}$ are real constants, and $[\Phi_D]_{jj}(x) = \phi_{d_j}(x)$. 
	Since $\vec{C} \in \mathrm{Mat}_{<2}(N)$,
$|d_j|<2$ holds for all $j=1,...,N$.

\medskip

\noindent
\ref{delta} Using \ref{simdiag} and \ref{matrixelements} and applying the Green's function property of $\phi_d(z)$ (Lemma \ref{Greensproperty}), one computes
\begin{align}
\lim_{y\nearrow  s}&\left[\Phi_{\vec C}(x+iy) + \Phi_{\vec C}(x-iy) - \vec{C}\cdot \Phi_{\vec C}(x)\right]_{nm} \nonumber\\ &= \sum_{j=1}^N T_{nj}[T^{-1}]_{jm}\lim_{y\nearrow  s}\left(\phi_{d_j}(x+iy) +\phi_{d_j}(x-iy) -d_j\phi_{d_j}(x)\right) \nonumber\\
&=\sum_{j=1}^N T_{nj}[T^{-1}]_{jm}\delta(x) 
=\delta_{nm}\delta(x) \ .
\end{align}
This completes the proof.
\end{proof}

From the definition of $\phi_d$ in \eqref{deltastandardrep} and from
Lemma~\ref{Phiproperties}\,\ref{matrixelements} we know that all components of $\Phi_{\vec C}(x)$ are Schwartz functions on $\mathbb{R}$ (cf.\ Lemma~\ref{phibounded}). Hence the Fourier transformation of $\Phi_{\vec C}(x)$  reproduces the integrand in \eqref{phiC-def-intro}. In particular, for $k=0$ we obtain the following integral, which we will need later:
\begin{equation}\label{phiC-infinf-integral}
\int_{-\infty}^\infty
\hspace{-.5em}
\Phi_{\vec C}(x) \,dx ~=~
\big( 2 \cdot \mathbf{1}-\vec{C} \big)^{-1} \ .
\end{equation}

\begin{lemma}\label{Phinonneg}
Suppose $\vec{C}$ is non-negative. Then the matrix $\Phi_{\vec C}(x)$ is non-negative for all $x\in\mathbb{R}$.
\end{lemma}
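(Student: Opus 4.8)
The plan is to reduce the claim to a statement about the scalar Green's functions $\phi_d$, using the spectral-resolution machinery already available. By Lemma~\ref{Phiproperties}\,\ref{matrixelements} we can write $[\Phi_{\vec C}]_{nm}(x) = \sum_j \Omega_{nm}^j \phi_{d_j}(x)$, but this decomposition by itself does not obviously give non-negativity, since the coefficients $\Omega_{nm}^j = T_{nj}[T^{-1}]_{jm}$ need not be of one sign. So instead I would look for a more direct argument that exploits Perron--Frobenius structure of $\vec C$. The key observation is the geometric-series identity: for $k$ with $2\cosh(sk) > \rho(\vec C)$ (which holds for \emph{all} real $k$ since $\rho(\vec C)<2\le 2\cosh(sk)$), one has
\be
\big(2\cosh(sk)\mathbf{1}-\vec C\big)^{-1}
= \sum_{\ell=0}^{\infty} \frac{\vec C^{\,\ell}}{(2\cosh(sk))^{\ell+1}} \ ,
\ee
and every term on the right is entrywise $\ge 0$ because $\vec C$ is non-negative and $2\cosh(sk)>0$. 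Hence the integrand $(2\cosh(sk)\mathbf 1 - \vec C)^{-1}$ is an entrywise non-negative matrix for every $k\in\mathbb R$.

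However, entrywise non-negativity of the integrand does \emph{not} immediately yield entrywise non-negativity of the Fourier integral $\frac{1}{2\pi}\int e^{ikx}(\cdots)\,dk$, because of the oscillating factor $e^{ikx}$. So the real work is to transfer positivity through the Fourier transform, and for that I would use the scalar building blocks. The natural route: by Lemma~\ref{Phiproperties}\,\ref{simdiag}--\ref{matrixelements} it suffices to reduce to showing each scalar $\phi_d(x) \ge 0$ for $x\in\mathbb R$ and $d\in(-2,2)$, \emph{and} to control the mixing matrices $\vec T$. Actually, the cleanest approach avoids diagonalising: write each matrix entry via the power-series-integrated form,
\be
\Phi_{\vec C}(x) = \sum_{\ell=0}^{\infty} \vec C^{\,\ell} \cdot \frac{1}{2\pi}\int_{-\infty}^{\infty} \frac{e^{ikx}}{(2\cosh(sk))^{\ell+1}}\,dk \ ,
\ee
provided the interchange of sum and integral is justified (it is, by dominated convergence, since $\sum_\ell \|\vec C\|^\ell (2\cosh sk)^{-\ell-1}$ converges to $(2\cosh sk - \|\vec C\|)^{-1}$, an $L^1$ function of $k$ as $\|\vec C\|<2$, here using an operator norm bounded by something $<2$ — one may instead bound entrywise by $\rho(\vec C)<2$ after passing to a suitable norm). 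Then it remains to show the scalar integrals $\frac{1}{2\pi}\int e^{ikx}(2\cosh sk)^{-\ell-1}\,dk$ are all $\ge 0$. Each of these is $\phi_0^{(\ell)}$-type: in fact $(2\cosh sk)^{-1}$ has Fourier transform $\phi_0(x) = (4s\cosh(\tfrac{\pi}{2s}x))^{-1} > 0$ (Example~\ref{phi_0}), and $(2\cosh sk)^{-\ell-1}$ is the $(\ell{+}1)$-fold convolution power of this positive $L^1$ function with itself, hence its inverse Fourier transform is a convolution power of the strictly positive function $\phi_0$, which is again non-negative (indeed strictly positive).

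Putting this together: $\Phi_{\vec C}(x) = \sum_{\ell\ge 0} \vec C^{\,\ell}\, \psi_\ell(x)$ where $\psi_\ell(x) = \phi_0^{*(\ell+1)}(x) \ge 0$ is the scalar $(\ell{+}1)$-fold convolution power of $\phi_0$, and $\vec C^{\,\ell}$ is entrywise non-negative; so $\Phi_{\vec C}(x)$ is a convergent sum of entrywise non-negative matrices and hence entrywise non-negative. I expect the main obstacle to be the analytic bookkeeping in the interchange of summation and integration — specifically getting a clean domination: one should fix a matrix norm (e.g. the max-row-sum norm after conjugating, or simply use that all entries of $\vec C^{\,\ell}$ are bounded by $C\,\rho(\vec C)^\ell$ for some constant since $\vec C$ is diagonalisable) so that $\sum_\ell \|\vec C^{\,\ell}\|_{\mathrm{entrywise}}\,(2\cosh sk)^{-\ell-1}$ is dominated by an $L^1(\mathbb R)$ function of $k$, justifying Fubini/dominated convergence. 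A secondary point is verifying that $(2\cosh sk)^{-\ell-1}$ really is the Fourier transform of $\phi_0^{*(\ell+1)}$ — this is immediate from the convolution theorem once one knows $\widehat{\phi_0}(k) = (2\cosh sk)^{-1}$, which is Example~\ref{phi_0} together with Fourier inversion (valid since everything is Schwartz). With those two points handled, non-negativity follows since convolution powers of a non-negative function are non-negative.
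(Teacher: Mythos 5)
Your proof is correct and its core is the same as the paper's: expand $(2\cosh(sk)\mathbf{1}-\vec C)^{-1}$ as the Neumann series $\sum_{\ell\ge 0}\vec C^{\,\ell}(2\cosh(sk))^{-\ell-1}$, interchange sum and integral, and reduce to non-negativity of the scalar integrals $\tfrac{1}{2\pi}\int e^{ikx}(2\cosh(sk))^{-\ell-1}\,dk$ paired with non-negativity of $\vec C^{\,\ell}$. The one place you genuinely diverge is the last step: the paper establishes strict positivity of these scalar integrals by computing them explicitly in closed form (the $\cosh^{-m}$ Fourier transform worked out in Appendix~\ref{Fourier_coshm}, quoted as \eqref{coshm-integral-strictly-positive}), whereas you observe that $(2\cosh(sk))^{-\ell-1}$ is the $(\ell+1)$-st power of $\widehat{\phi_0}$, so by the convolution theorem the integral equals $\phi_0^{*(\ell+1)}(x)$, a convolution power of the explicit strictly positive Schwartz function $\phi_0(x)=(4s\cosh(\tfrac{\pi}{2s}x))^{-1}$ from Example~\ref{phi_0}, hence strictly positive. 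Your route is cleaner and avoids the appendix computation entirely (and still delivers strict positivity, which is what Remark~\ref{remark-Smatrix-exists} later needs); the paper's explicit formula buys concrete closed-form expressions for the convolution powers, which is more information than the lemma requires. Your handling of the interchange (entrywise domination of $\vec C^{\,\ell}$ by $\mathrm{const}\cdot\rho(\vec C)^\ell$ via diagonalisability, then dominated convergence/Fubini) matches the paper's appeal to Fubini and is adequately justified. No gaps.
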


\begin{proof}
The integrand can be expanded into a Neumann series,
\begin{align}
\left(2\cosh(sk)\mathbf{1}-\vec{C}\right)^{-1} &= \left(2\cosh(sk)\right)^{-1}\left(\mathbf{1}-\frac{\vec{C}}{2\cosh(sk)}\right)^{-1}\nonumber\\
&= \sum_{j=0}^\infty \frac{\vec{C}^j}{(2\cosh(sk))^{j+1}}\ ,
\end{align} 
which converges absolutely since for $k\in\mathbb{R}$ all eigenvalues of $\left(2\cosh(sk)\right)^{-1}\vec{C}$ are strictly smaller than 1. Fubini's theorem (with counting measure on $\mathbb{Z}_{\ge0}$ and Lebesgue measure on $\mathbb{R}$) then justifies pulling the sum out of the Fourier integral, and we find that \be\label{phiseries}\Phi_{\vec C}(x)=\sum_{j=0}^\infty \left(\frac{1}{2\pi}\int_{-\infty}^\infty e^{ikx} \left(2\cosh(sk)\right)^{-j-1}dk\right) \vec{C}^j \ .\ee In Appendix~\ref{Fourier_coshm} it is shown that
\be\label{coshm-integral-strictly-positive}
\int_{-\infty}^\infty e^{ikx}\left(2\cosh(sk)\right)^{-j-1}dk = \frac{\pi}{2^{j+1} j!s}\left(\prod_{\substack{l=j-1\\ \mathrm{step -2}}}^1 \left(\frac{x^2}{s^2} + l^2\right)\right)\cdot \begin{cases} \frac{1}{\cosh\left(\frac{\pi}{2s}x\right)} & \mathrm{if} \; j \;\mathrm{even} \\ \frac{x}{s\sinh\left(\frac{\pi}{2s}x\right)} &\mathrm{if} \; j \;\mathrm{odd} \end{cases} \ ,\ee
which is a
strictly positive
function of $x\in\mathbb{R}$. Furthermore, $\vec{C}^j$ is a non-negative matrix. Hence, $\Phi_{\vec C}(x)$ is non-negative for all $x\in\mathbb{R}$.
\end{proof}

\begin{remark}\label{remark-Smatrix-exists}
Suppose $\vec G \in \mathrm{Mat}_{<2}(N)$ is non-negative and irreducible. One of the equivalent ways to characterise irreducibility is that for each $i,j$ there is an $m>0$ such that $[(\vec G)^m]_{ij} \neq 0$. Together with non-negativity of $\vec G$ and strict positivity of \eqref{coshm-integral-strictly-positive}, 
this implies that 
$\Phi_{\vec G}(x)$ has \textsl{strictly positive} entries for all $x \in \mathbb{R}$.
By Corollary~\ref{phiL1} and Lemma~\ref{Phiproperties}\,\ref{matrixelements}, the components of $\Phi_{\vec G}(x)$ are integrable, and so we can choose $\Psi_{\vec G} \in BC(\mathbb{R},\mathrm{Mat}(N,\mathbb{R}))$ such that  $\Psi_{\vec G}(x)$ has positive entries bounded away from zero and satisfies $\tfrac{d}{dx} \Psi_{\vec G}(x) = \Phi_{\vec G}(x)$. Comparing to Remark~\ref{remark-S-matrix-relation}, we see that with the above assumption on $\vec G$, it is always possible to find an
 $\vec S \in BC(\mathbb{R},\mathrm{Mat}(N,\mathbb{C}))$
such that \eqref{PhiG-logS} holds.
\end{remark}

\subsection{Convolution integrals involving $\phi_d(z)$}
\label{section:phiconvolution}

In this section we adopt again the convention \eqref{d_in_(-2,2)} that the parameter $d$ will always take values in the range
\be
	d\in(-2,2) \ .
\ee	 

Just as in the case of differential equations, the Green's function approach to difference equations will eventually express solutions in terms of convolution integrals involving the Green's function. For $g \in BC(\mathbb{R},\mathbb{C})$,
 the convolution with $\phi_d(z)$ is defined by 
 \be\label{Fd-convolution-def}
 F_d[g](z):= \int_{-\infty}^{\infty} \phi_d(z-t)g(t)\ dt \ .\ee 
 Due to Corollary~\ref{phiL1}, this function is well-defined on $\mathbb{S}_s$. As we will see in Section~\ref{GreenSoln}, it is important to understand the properties of such integrals as functions in $z$. That is the subject of this section.

The first question to ask is whether $F_d[g](z)$ is analytic. More generally: does the integration of a parameter-dependent analytic function preserve analyticity? The following lemma gives a criterion:

\begin{lemma}\label{analyticity}
Let $D\subseteq\mathbb{C}$ be a complex domain. Suppose $f:D\times \mathbb{R} \rightarrow \mathbb{C}$ is a function with the following properties:
\begin{enumerate}
\item \label{analyticity:1} for every $t_0\in\mathbb{R}$, the function $f(z,t_0)$ is analytic in $D$.
\item \label{analyticity:2} for every $z_0\in D$, the function $f(z_0,t)$ is continuous on $\mathbb{R}$.
\item \label{analyticity:3} for every $z_0\in D$ there exists a neighbourhood $U$ and an $L_1(\mathbb{R})$-integrable function $M(t)$, such that $\left|f(z,t)\right|\leq M(t)$ for all $z\in U$ and all $t\in \mathbb{R}$.
\end{enumerate}
Then the function 
\be
F(z) = \int_{-\infty}^\infty f(z,t)\; dt
\ee 
is analytic in $D$.
\end{lemma}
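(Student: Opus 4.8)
The plan is to combine Morera's theorem with Fubini's theorem. First, hypotheses \ref{analyticity:2} and \ref{analyticity:3} ensure that $F(z)$ is well-defined for every $z\in D$: the map $t\mapsto f(z,t)$ is continuous, hence measurable, and dominated by $M(t)\in L_1(\mathbb{R})$. Next I would record continuity of $F$ on $D$. Given $z_0\in D$ and a sequence $z_n\to z_0$ that eventually lies in the neighbourhood $U$ from \ref{analyticity:3}, analyticity (in particular continuity) of $z\mapsto f(z,t)$ gives $f(z_n,t)\to f(z_0,t)$ pointwise in $t$, and the uniform bound $|f(z_n,t)|\le M(t)$ lets dominated convergence conclude $F(z_n)\to F(z_0)$.

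The heart of the argument is then to verify the Morera condition: $\oint_{\partial\Delta} F(z)\,dz = 0$ for every closed triangle $\Delta$ whose solid convex hull lies in $D$. Fix such a $\Delta$. Its boundary $\partial\Delta$ is compact, so it can be covered by finitely many of the neighbourhoods $U$ supplied by \ref{analyticity:3}, say $U_1,\dots,U_k$ with dominating functions $M_1,\dots,M_k\in L_1(\mathbb{R})$; putting $\tilde M := M_1+\dots+M_k$ gives a single $\tilde M\in L_1(\mathbb{R})$ with $|f(z,t)|\le \tilde M(t)$ for all $z$ in the open neighbourhood $U_1\cup\dots\cup U_k$ of $\partial\Delta$ and all $t\in\mathbb{R}$. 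Since $f$ is separately continuous in $z$ and in $t$, it is jointly measurable on $\partial\Delta\times\mathbb{R}$, and
\[
\int_{\partial\Delta}\!\int_{\mathbb{R}} |f(z,t)|\,dt\,|dz| ~\le~ \mathrm{length}(\partial\Delta)\cdot \|\tilde M\|_{L_1(\mathbb{R})} ~<~ \infty ,
\]
so Fubini's theorem applies and gives
\[
\oint_{\partial\Delta} F(z)\,dz = \oint_{\partial\Delta}\!\int_{\mathbb{R}} f(z,t)\,dt\,dz = \int_{\mathbb{R}}\!\left(\oint_{\partial\Delta} f(z,t)\,dz\right)dt .
\]
By hypothesis \ref{analyticity:1}, $z\mapsto f(z,t)$ is analytic on $D$, so the Cauchy--Goursat theorem yields $\oint_{\partial\Delta} f(z,t)\,dz = 0$ for each $t\in\mathbb{R}$, whence the right-hand side vanishes. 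Morera's theorem, together with the continuity of $F$ established above, then shows that $F$ is analytic in $D$.

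The main obstacle is the bookkeeping that legitimises the interchange of the two integrals: one must upgrade the purely local domination in \ref{analyticity:3} to a domination that is uniform over the compact contour $\partial\Delta$ — which is exactly what passing to a finite subcover achieves — and one must note joint measurability of $f$ from separate continuity so that Fubini is genuinely applicable. The remaining ingredients (well-definedness of $F$, continuity of $F$ via dominated convergence, and the vanishing of the inner contour integral) are routine.
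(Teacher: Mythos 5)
Your proof is correct and follows essentially the same route as the paper's: verify the Morera condition by justifying the interchange of the contour integral and the $t$-integral via domination on the compact contour, then apply Cauchy--Goursat to the inner integral. The only (minor) differences are bookkeeping: you obtain the uniform domination by a finite subcover where the paper instead argues that $L(z)=\int|f(z,t)|\,dt$ is continuous and hence bounded on the contour, and you make explicit the continuity of $F$ required by Morera's theorem, which the paper leaves implicit.
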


\begin{proof}
Let $z\in D$. Note that $F(z)$ is well-defined since the integrand is continuous (condition \ref{analyticity:2}) and dominated by an integrable function (condition \ref{analyticity:3}). Now take an arbitrary closed triangular contour $\Gamma$ inside $D$. 
Define the function 
\be
L(z):= \int_{-\infty}^\infty \left|f(z,t)\right| \;dt \ . \ee 
Since by condition \ref{analyticity:3}, $f(z,t)$ is locally dominated by an integrable function, $L(z)$ is continuous on $D$. Thus,  on the compact contour $\Gamma$ the function $L(z)$ is bounded and the integral \be\oint_\Gamma\left(\int_{-\infty}^\infty \left|f(z,t)\right|dt \right) dz\ee is finite. This warrants the application of Fubini's theorem, followed by analyticity (condition \ref{analyticity:1}): \be\oint_\Gamma F(z) dz = \oint_\Gamma\int_{-\infty}^\infty f(z,t)dt \; dz = \int_{-\infty}^\infty\oint_\Gamma f(z,t) dz\; dt = 0\ .\ee
By Morera's theorem, the claim follows.
\end{proof}

This lemma can be applied to the convolution integral $F_d[g](z)$. Set $D=\mathbb{S}_s$, and for any given $z_0\in\mathbb{S}_s$ set $U=B_\varepsilon(z_0)$
(the open ball with radius $\varepsilon$ and center $z_0$)
 with some sufficiently small $\varepsilon$. By Lemma \ref{phibounded}, a dominating integrable function $M(t)$ can be found by taking it to be a constant $B>0$ for $|t-z_0|<T$ and $B|t-z_0|^{-2}$ else. $B$ and $T$ are to be chosen sufficiently large. We have shown:

\begin{corollary}\label{PhiConvgAnalytic}
For every $g\in BC(\mathbb{R},\mathbb{C})$, the function $F_d[g](z)$ is analytic in $\mathbb{S}_s$.
\end{corollary}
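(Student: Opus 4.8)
The plan is to invoke Lemma~\ref{analyticity} with $D=\mathbb{S}_s$ and $f(z,t)=\phi_d(z-t)\,g(t)$, so that the function $F$ appearing in that lemma is precisely $F_d[g]$. One then has to verify the three hypotheses of Lemma~\ref{analyticity}.

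For hypothesis~\ref{analyticity:1}, fix $t_0\in\mathbb{R}$: since $t_0$ is real, the translation $z\mapsto z-t_0$ maps $\mathbb{S}_s$ into itself, $\phi_d$ is analytic on $\mathbb{S}_s$ (the remark after Lemma~\ref{fourieranalytic}, or Lemma~\ref{phianalyticcontinuation}), and multiplication by the constant $g(t_0)$ preserves analyticity. For hypothesis~\ref{analyticity:2}, fix $z_0\in\mathbb{S}_s$: the map $t\mapsto z_0-t$ is continuous from $\mathbb{R}$ into $\mathbb{S}_s$, on which $\phi_d$ is continuous, and $g$ is continuous, so $t\mapsto\phi_d(z_0-t)g(t)$ is continuous on $\mathbb{R}$.

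The substantive point is the local domination in hypothesis~\ref{analyticity:3}. Fix $z_0=x_0+iy_0\in\mathbb{S}_s$, choose $\varepsilon>0$ with $Y:=|y_0|+\varepsilon<s$, and set $U=B_\varepsilon(z_0)$. For $z\in U$ and $t\in\mathbb{R}$ the point $w=z-t$ lies in the closed strip $\overline{\mathbb{S}}_Y$, which by Lemma~\ref{phianalyticstructure}\,\ref{Poles1stOrder} contains no pole of $\phi_d$ (the poles sit on $is\mathbb{Z}\setminus\{0\}$, at imaginary part of modulus $\ge s>Y$); hence $\phi_d$ is bounded, say by some $C_0$, on each bounded piece $\{w\in\overline{\mathbb{S}}_Y:|\mathrm{Re}(w)|\le R\}$. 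For $|t|$ large, on the other hand, $|\mathrm{Re}(w)|\ge|t|-(|z_0|+\varepsilon)$ is large while $|\mathrm{Im}(w)|\le Y$, so $w$ lies in one of the two Phragm\'en--Lindel\"of wedges $|\arg w|\le\Theta$, $|\arg w-\pi|\le\Theta$ with a fixed $\Theta\in(0,\tfrac\pi2)$ independent of $z\in U$; Lemma~\ref{phibounded} (with $m=2$, $n=0$) then yields $|\phi_d(w)|\le B'/|w|^2$ there, and since $|w|=|z-t|\ge\tfrac12|t|$ once $|t|\ge2(|z_0|+\varepsilon)$, one gets $|\phi_d(z-t)|\le4B'/|t|^2$. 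Choosing $T>0$ large enough for both bounds to be in force and $R:=T+|z_0|+\varepsilon$, the function equal to $C_0\,\|g\|_\infty$ for $|t|\le T$ and to $4B'\,\|g\|_\infty/|t|^2$ for $|t|>T$ dominates $|f(z,t)|$ for all $z\in U$, $t\in\mathbb{R}$, and lies in $L_1(\mathbb{R})$. Lemma~\ref{analyticity} now gives the claim.

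The main obstacle is exactly this domination step: one must ensure that, uniformly over the ball $U$, the shifted argument $z-t$ enters a wedge of half-angle bounded away from $\tfrac\pi2$ as $|t|\to\infty$, so that Lemma~\ref{phibounded} supplies the quadratic decay $|\phi_d(z-t)|=O(|t|^{-2})$ with a single constant; once the pole-free strip and this wedge estimate are in place, the rest of the verification is routine.
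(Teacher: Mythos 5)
Your proposal is correct and follows essentially the same route as the paper: the paper likewise applies Lemma~\ref{analyticity} with $D=\mathbb{S}_s$, $U=B_\varepsilon(z_0)$, and a dominating function built from Lemma~\ref{phibounded} that is constant for $|t-z_0|<T$ and behaves like $B|t-z_0|^{-2}$ otherwise. Your write-up merely spells out the wedge/pole-free-strip details that the paper leaves implicit.
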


Note that in the same fashion as for $\phi_d(z-t)g(t)$, one can also use Lemma \ref{phibounded} to construct integrable dominating functions for $\frac{d^n}{dz^n}\phi_d(z-t)g(t)$. Hence, we are allowed to differentiate inside the integral:
\be\label{Fswapdiffint}
\frac{d^n}{dz^n}F_d[g](z) = \int_{-\infty}^{\infty} \frac{d^n}{dz^n}\phi_d(z-t) \ g(t) \ dt \ .
\ee
More can be said about the nature of $F_d[g](z)$ and its derivatives. The following lemma provides a stepping stone.

\begin{lemma} \label{L1phiBounded}
Let $n\in\mathbb{Z}_{\geq 0}$. The function $y\mapsto\left\|\partial^n\phi_d^{[y]}\right\|_{L_1}$ is bounded in the compact interval $[-Y,Y]$ for every $0<Y<s$.
\end{lemma}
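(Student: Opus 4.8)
The plan is to bound $\|\partial^n\phi_d^{[y]}\|_{L_1} = \int_{-\infty}^\infty \big|\tfrac{d^n}{dz^n}\phi_d(x+iy)\big|\,dx$ by splitting the real line into a compact central interval $[-R,R]$ and the two tails $|x|>R$, and to estimate each piece by a constant that is independent of $y\in[-Y,Y]$.

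For the tails I would invoke Lemma~\ref{phibounded} with $m=2$: fix any $\Theta\in(0,\tfrac\pi2)$, so that there is a constant $M_\Theta>0$ with $\big|z^2\tfrac{d^n}{dz^n}\phi_d(z)\big|\le M_\Theta$ on the two wedges $|\arg z|\le\Theta$ and $|\arg(z)-\pi|\le\Theta$. Now choose $R>0$ large enough that $\arctan(Y/R)\le\Theta$. Then every point $x+iy$ with $|x|\ge R$ and $|y|\le Y$ lies in one of these two wedges, since the angle it makes with the nearest half of the real axis is $\arctan(|y|/|x|)\le\arctan(Y/R)\le\Theta$; and because $|x+iy|^2\ge x^2$ we obtain $\big|\partial^n\phi_d^{[y]}(x)\big|\le M_\Theta/x^2$ for $|x|\ge R$, uniformly in $|y|\le Y$. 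Integrating gives $\int_{|x|\ge R}\big|\partial^n\phi_d^{[y]}(x)\big|\,dx\le 2M_\Theta/R$.

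For the central part I would use that, because $Y<s$, the closed rectangle $[-R,R]\times[-Y,Y]$ is a compact subset of the strip $\mathbb{S}_s$, on which $\phi_d$ is analytic (Lemma~\ref{phianalyticcontinuation}; the poles of $\phi_d$ lie in $is\mathbb{Z}\setminus\{0\}$ by Lemma~\ref{phianalyticstructure}, and the only one that could lie in $\overline{\mathbb{S}}_Y$ would be at $0$, where $\phi_d$ is regular). Hence $\tfrac{d^n}{dz^n}\phi_d$ is continuous, therefore bounded by some $M'>0$, on this rectangle, so $\int_{|x|\le R}\big|\partial^n\phi_d^{[y]}(x)\big|\,dx\le 2RM'$. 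Adding the two contributions yields $\|\partial^n\phi_d^{[y]}\|_{L_1}\le 2RM'+2M_\Theta/R$ for every $y\in[-Y,Y]$, which is the assertion.

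I do not expect a genuine obstacle here; the only points requiring a little care are the geometric choice of $\Theta$ and $R$ ensuring that the tails actually lie inside the wedges covered by Lemma~\ref{phibounded}, and the observation that $Y<s$ is exactly what keeps the central rectangle inside the pole-free strip $\mathbb{S}_s$. (Finiteness of $\|\partial^n\phi_d^{[y]}\|_{L_1}$ for each individual $y\in[-Y,Y]$ is already contained in Corollary~\ref{phiL1}, but the argument above reproves this and upgrades it to uniformity in $y$.)
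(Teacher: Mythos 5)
Your proof is correct and follows essentially the same route as the paper: both split $\mathbb{R}$ into a compact central interval and two tails, bound the tails by $C/x^2$ using Lemma~\ref{phibounded} with $m=2$ (the wedge argument), and bound the central part by a $y$-independent constant coming from boundedness of $\tfrac{d^n}{dz^n}\phi_d$ on a compact subset of $\mathbb{S}_s$. The only cosmetic difference is that the paper takes the supremum over $\mathbb{S}_Y$ for the central piece, while you use continuity on the closed rectangle $[-R,R]\times[-Y,Y]$; both are valid.
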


\begin{proof}
The function is well-defined due to Corollary \ref{phiL1}. Now fix a $\Theta\in[0,\frac{\pi}{2})$ and $0<Y<s$. Due to Lemma \ref{phibounded}, there exists a $C>0$ such that 
\be
\left|(x+iy)^2\partial^n\phi_d^{[y]}(x)\right|\leq C\ee for all
$x,y\in\mathbb{R}$
with $|\frac{y}{x}|<\tan\Theta$, and consequently \be\left|\partial^n\phi_d^{[y]}(x)\right|\leq \begin{cases} 
	\frac{C}{|x+iy|^2}\leq \frac{C}{x^2} & 
\mathrm{for}\;|x|>\frac{Y}{\tan\Theta}\\ \max_{z\in\mathbb{S}_Y}|\partial^n\phi_d(z)| & \mathrm{else}\end{cases} \ee for all $y\in[-Y,Y]$. The right hand side is in $L_1(\mathbb{R})$ and independent of $y$. Its integral over $\mathbb{R}$ provides a bound for $\left\|\partial^n\phi_d^{[y]}\right\|_{L_1}$ in the interval $[-Y,Y]$.
\end{proof}

\begin{lemma} \label{DerivConvBounded}
Let $n\in\mathbb{Z}_{\geq 0}$.
For every $g\in BC(\mathbb{R},\mathbb{C})$, the function $\frac{d^n}{dz^n}F_d[g](z)$ is bounded in $\mathbb{S}_Y$ for all $0<Y<s$.
\end{lemma}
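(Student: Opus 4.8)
The plan is to bound $\frac{d^n}{dz^n}F_d[g](z) = \int_{-\infty}^\infty \frac{d^n}{dz^n}\phi_d(z-t)\,g(t)\,dt$ using \eqref{Fswapdiffint} together with the $L_1$-bounds established in Lemma~\ref{L1phiBounded}. The key observation is that for $z = x + iy \in \mathbb{S}_Y$ with $0 < Y < s$, a change of variables $u = x - t$ gives
\be
\frac{d^n}{dz^n}F_d[g](x+iy) = \int_{-\infty}^\infty \frac{d^n}{dz^n}\phi_d(u + iy)\,g(x-u)\,du = \int_{-\infty}^\infty \partial^n\phi_d^{[y]}(u)\,g(x-u)\,du \ ,
\ee
where $\partial^n\phi_d^{[y]}$ is the restriction notation from Corollary~\ref{phiL1}.

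Taking absolute values and using that $g \in BC(\mathbb{R},\mathbb{C})$ is bounded by $\|g\|_\infty$, I would estimate
\be
\left|\frac{d^n}{dz^n}F_d[g](x+iy)\right| \le \|g\|_\infty \int_{-\infty}^\infty \left|\partial^n\phi_d^{[y]}(u)\right|\,du = \|g\|_\infty \, \bigl\|\partial^n\phi_d^{[y]}\bigr\|_{L_1} \ .
\ee
By Lemma~\ref{L1phiBounded}, the quantity $\bigl\|\partial^n\phi_d^{[y]}\bigr\|_{L_1}$ is bounded uniformly over $y \in [-Y,Y]$, say by some constant $K_{n,Y} < \infty$. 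Hence $\left|\frac{d^n}{dz^n}F_d[g](z)\right| \le \|g\|_\infty \, K_{n,Y}$ for all $z \in \mathbb{S}_Y$, which is the desired bound (note this even gives boundedness on the closed strip $\overline{\mathbb{S}}_Y$). Since $Y < s$ was arbitrary, the claim holds for every $0 < Y < s$.

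There is essentially no main obstacle here; the lemma is a direct corollary of Lemma~\ref{L1phiBounded} once one uses \eqref{Fswapdiffint} to move the derivative inside the integral and the substitution $u = x-t$ to decouple the $y$-dependence (which sits entirely in $\phi_d$) from the $x$-dependence (which sits entirely in $g$). The only minor point worth stating carefully is the validity of differentiating under the integral sign, but that has already been justified in the text right before the statement via the dominating-function construction from Lemma~\ref{phibounded}. I would present this as a short three-line proof.
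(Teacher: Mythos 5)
Your proof is correct and follows exactly the same route as the paper's: apply \eqref{Fswapdiffint}, pull out $\|g\|_\infty$, recognise the remaining integral as $\|\partial^n\phi_d^{[\mathrm{Im}(z)]}\|_{L_1}$, and invoke Lemma~\ref{L1phiBounded} for a bound uniform in $|\mathrm{Im}(z)|\leq Y$. Nothing to add.
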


\begin{proof}
With \ref{Fswapdiffint}, one has \be\left|\frac{d^n}{dz^n}F_d[g](z)\right|\leq \int_{-\infty}^{\infty} \left|\frac{d^n}{dz^n}\phi_d(z-t)\right|\left|g(t)\right| dt \leq \sup_{t\in\mathbb{R}}|g(t)|\left\|\partial^n\phi_d^{[\mathrm{Im}(z)]}\right\|_{L_1} \ .\ee According to Lemma \ref{L1phiBounded}, the right-hand side is bounded for $|\mathrm{Im}(z)|\leq Y$. Hence, $F_d[g](z)$ is bounded in $\mathbb{S}_Y$.
\end{proof}

Since $\phi_d(z)$ has poles in $z=\pm is$, there is no obvious way to extend the domain of $F_d[g](z)$ beyond $\mathbb{S}_s$. Lemma~\ref{analyticity} thus provides no information regarding the behaviour of this convolution integral as $z$ approaches the boundary $\partial\mathbb{S}_s=\mathbb{R}\pm is$. Moreover, Lemma \ref{DerivConvBounded} can only be used to prove boundedness of $F_d[g](z)$ in a strip which is strictly contained in $\mathbb{S}_s$. In the remainder of this section, we will show that for $g$ is H\"older continuous, $F_d[g](z)$ can be extended to $\overline{\mathbb{S}}_s$ as a bounded and continuous function. To this end, we need another result from complex analysis.

Let us relax the analyticity condition in Lemma \ref{analyticity}: suppose $f(z,t)$ is analytic everywhere except in $z=t$, where it shall have a pole of first order. Consider a contour $\gamma$ in $D$, and integrate over it: \be
F(z) = \int_\gamma f(z,t) \, dt \ .\ee 
The pole of $f(z,t)$ at $z=t$ causes $F(z)$ to have a branch cut along $\gamma$. Theorems describing this behaviour often go by the name of Sokhotski-Plemelji \cite{Gakhov}. 
The next proposition is an instance of this for $\gamma=\mathbb{R}$, and it follows from a more more general statement proven in Appendix~\ref{AppendixSokhotski}. 

\begin{proposition}\label{SokhotskiConvolutionMain}
Let $a>0$ and let $h:\mathbb{S}_a \rightarrow \mathbb{C}$ be an analytic function such that both $zh(z)$ and $\frac{d}{dz}h(z)$ are bounded in $\mathbb{S}_a$. 
Moreover, let $g:\mathbb{R}\rightarrow \mathbb{C}$ be a bounded 
H\"older continuous 
function. Then  
\be
F(z)=\int_{-\infty}^{\infty}\frac{h(z-t)}{z-t}g(t)\,dt\ee is analytic in $\mathbb{S}_a \setminus \mathbb{R}$. Moreover, the limits 
\be
F^\pm(x):=\lim_{y\searrow  s}F(x\pm iy) \hspace{1.5cm}(x\in\mathbb{R})\ee exist and are uniform in $x$.
 The functions $F^\pm(x)$ are bounded on $\mathbb{R}$ and provide  continuous extensions of $F(z)$ from the upper/lower half-plane to the real axis, 
related by 
\be
F^+(x)-F^-(x)~=~2i\pi \, h(0) g(x)\ .\ee
\end{proposition}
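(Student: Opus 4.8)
The plan is to reduce this Sokhotski--Plemelj type statement to the classical one on the real line by splitting off the singular kernel, and then establishing the analytic/uniform-continuity bookkeeping by a decomposition of the integration domain near the pole. First I would write, for fixed $z = x+iy$ with $0 < |y| < a$ (say $y > 0$; the case $y<0$ is symmetric),
\[
F(z) = \int_{-\infty}^\infty \frac{h(z-t) - h(0)}{z-t}\, g(t)\, dt \;+\; h(0)\int_{-\infty}^\infty \frac{g(t)}{z-t}\, dt \;=:\; F_{\mathrm{reg}}(z) + h(0)\, G(z).
\]
The first term has a removable singularity at $t=z$: since $h$ is analytic on $\mathbb{S}_a$ with $\frac{d}{dz}h$ bounded there, the difference quotient $\frac{h(z-t)-h(0)}{z-t}$ extends to a function analytic in $z$ and continuous in $(z,t)$, and moreover it is $O(1/|z-t|)$ for large $|t|$ (using boundedness of $zh(z)$ to control $h(z-t)$ and the $\frac1{z-t}$ prefactor), hence integrable against the bounded $g$. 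So Lemma~\ref{analyticity} applies and shows $F_{\mathrm{reg}}$ is analytic throughout $\mathbb{S}_a$ (not just off $\mathbb{R}$), and a similar domination argument gives that its boundary values on $\mathbb{R}$ exist and are uniform --- actually it is continuous on $\overline{\mathbb{S}}_a$ by the same Morera/dominated-convergence reasoning, so it contributes nothing to the jump.

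Next I would handle $G(z) = \int_{-\infty}^\infty \frac{g(t)}{z-t}\,dt$, the Cauchy transform of a bounded Hölder continuous function. This is exactly the setting of the classical Plemelj formulae: $G$ is analytic in $\mathbb{C}\setminus\mathbb{R}$, the one-sided limits $G^\pm(x) = \lim_{y\searrow 0} G(x\pm iy)$ exist, and $G^+(x) - G^-(x) = 2\pi i\, g(x)$, with the limits uniform on $\mathbb{R}$ because $g$ is uniformly Hölder and bounded --- this is precisely where the Hölder hypothesis is used. Since the paper says the proposition follows from a more general statement in Appendix~\ref{AppendixSokhotski}, I would either invoke that appendix directly for $G$, or, if proving it in place, reproduce the standard estimate: write $G(x+iy) = \int \frac{g(t)-g(x)}{(x-t)+iy} dt + g(x)\int \frac{dt}{(x-t)+iy}$, note the second integral equals $-i\pi$ independently of $y>0$ (interpreting the principal value), and bound the first integrand near $t = x$ using $|g(t)-g(x)| \le C|t-x|^\alpha$ so that dominated convergence gives a uniform limit (the principal value integral $\mathrm{p.v.}\int \frac{g(t)-g(x)}{x-t}dt$ plus a constant). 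The same computation from the lower half-plane gives $+i\pi$, producing the jump $2\pi i\, g(x)$; multiplying by $h(0)$ yields the claimed formula $F^+(x) - F^-(x) = 2i\pi\, h(0) g(x)$.

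Finally I would assemble: $F = F_{\mathrm{reg}} + h(0) G$ is analytic on $\mathbb{S}_a \setminus \mathbb{R}$; the boundary limits $F^\pm$ exist and are uniform in $x$ because each summand's limit does; boundedness of $F^\pm$ on $\mathbb{R}$ follows from boundedness of $F_{\mathrm{reg}}$ on $\overline{\mathbb{S}}_a$ (via its dominating function, using $zh(z)$ bounded) together with boundedness of $G^\pm$, which in turn follows from the uniform Hölder/bounded control on $g$; and continuity of $F^\pm$ as extensions of $F$ from each half-plane is inherited from the two pieces. I expect the main obstacle to be the uniformity of the boundary limits and the boundedness of $F^\pm$ on all of $\mathbb{R}$ simultaneously: one must produce estimates on $\frac{h(z-t)-h(0)}{z-t}$ and on the Cauchy-kernel tails that are uniform in $x\in\mathbb{R}$ as $y\to 0$, which requires the $x$-independent bounds ``$zh(z)$ and $h'(z)$ bounded on $\mathbb{S}_a$'' to be used carefully rather than the naive local domination of Lemma~\ref{analyticity}; splitting the $t$-integral into a piece near $t=x$ (handled by Hölder continuity) and a far piece (handled by the decay of the kernel) is the technically delicate bookkeeping, and this is presumably what Appendix~\ref{AppendixSokhotski} is set up to do cleanly.
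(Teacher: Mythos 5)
Your decomposition $F = F_{\mathrm{reg}} + h(0)\,G$ breaks down at the very first step unless $h(0)=0$: neither piece is a convergent integral in the generality of the proposition. The hypothesis that $zh(z)$ is bounded gives $|h(z-t)|\leq H/|z-t|$, so the full integrand $\tfrac{h(z-t)}{z-t}g(t)$ decays like $|t|^{-2}$ and $F$ converges absolutely; but $\tfrac{h(z-t)-h(0)}{z-t}$ decays only like $|h(0)|/|t|$ at infinity, and $O(1/|t|)$ times a bounded function is \emph{not} integrable on $\mathbb{R}$ --- your sentence ``hence integrable against the bounded $g$'' is exactly where the argument fails. Correspondingly, the Cauchy transform $G(z)=\int_{-\infty}^{\infty}\tfrac{g(t)}{z-t}\,dt$ simply does not exist for a general bounded H\"older continuous $g$ (and this generality is essential here, since in the application $g=\vec{L}_{\vec{C}}[\vec{f}]$ is bounded with no decay): for $g\equiv 1$ the Lebesgue integral diverges, and for $g$ equal to $0$ on $(-\infty,0]$ and $1$ on $[1,\infty)$ even the symmetric truncations $\int_{-L}^{L}$ diverge like $\log L$. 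For the same reason the ``classical Plemelj formulae'' you invoke are not available: the classical statement (Gakhov) concerns finite contours, and its extension to the whole line needs a decay or H\"older-at-infinity condition on $g$ that is not assumed; your auxiliary integral $\int\tfrac{g(t)-g(x)}{x-t}\,dt$ has the same non-integrable $O(1/|t|)$ tail.

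The repair --- and what the paper's Proposition~\ref{Sokhotski} actually does --- is to subtract from the numerator its value at the singular point $t=\mathrm{Re}(z)$, i.e.\ the \emph{constant in $t$} quantity $\varphi(z,\mathrm{Re}(z))=h(i\,\mathrm{Im}(z))\,g(\mathrm{Re}(z))$, rather than $h(0)g(t)$. The subtracted term's integral over $[-L,L]$ is then an explicit logarithm with a finite limit as $L\to\infty$, while the remaining improper integral converges because near $t=x$ it is controlled by the (uniform) H\"older continuity of $t\mapsto h(z-t)g(t)$ and at infinity one splits off that same explicit logarithm once more and uses that the \emph{whole} numerator $h(z-t)g(t)$ decays like $|t-z|^{-1}$, so that $\tfrac{h(z-t)g(t)}{t-z}$ is absolutely integrable there. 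Your description of the remaining bookkeeping (near/far splitting of the $t$-integral to get limits uniform in $x$, boundedness of $F^{\pm}$) is the right picture, but it has to be carried out on this decomposition, not on yours.
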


Now let us apply this result to convolution integrals involving $\phi_d(z)$: let $a<s$ and define the functions $h_\pm(z)=z\phi_d(z\pm is)$. By Lemma \ref{phianalyticstructure}, $h_\pm(z)$ are analytic in $\mathbb{S}_a$. Due to Lemma \ref{phibounded}, both $zh_\pm(z)$ and $\frac{d}{dz}h_\pm(z)$ are bounded in $\mathbb{S}_a$. Hence, $h_\pm(z)$ satisfy the conditions of Proposition \ref{SokhotskiConvolutionMain} for any $a<s$. Specifically, this clarifies the behaviour of our convolution integral as we approach the boundary of the strip:

\begin{corollary}\label{PhiConvgExtension}
Let $g:\mathbb{R}\rightarrow\mathbb{C}$ be bounded and H\"older continuous. Then the function $F_d[g](z)$ has a continuous extension to $\overline{\mathbb{S}}_s$, and this extension is bounded on $\partial\mathbb{S}_s=\mathbb{R}\pm is$.
\end{corollary}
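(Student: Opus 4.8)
The plan is to reduce Corollary~\ref{PhiConvgExtension} to Proposition~\ref{SokhotskiConvolutionMain} by separating out the singular part of $\phi_d$ near the boundary lines $\mathbb{R}\pm is$. Recall from Lemma~\ref{phiresidue} that $\phi_d$ has simple poles at $\pm is$ with residues $\pm\frac{1}{2\pi i}$, and from Lemma~\ref{phianalyticstructure}\,\ref{Poles1stOrder} that these are the only poles in $\overline{\mathbb{S}}_s$ apart from a possible one at $0$ which, however, is not a pole (its residue vanishes by Lemma~\ref{phiresidue}, or directly because $\phi_d$ is Schwartz on $\mathbb{R}$, hence analytic there by Lemma~\ref{fourieranalytic}). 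So $\phi_d$ is actually analytic on an open neighbourhood of $\overline{\mathbb{S}}_s$ except at the two boundary points $\pm is$.

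First I would pick some $a$ with $0<a<s$ and write, following the hint preceding the corollary, $h_\pm(z):=z\,\phi_d(z\pm is)$. Then $h_\pm$ is analytic on $\mathbb{S}_a$ because $\phi_d(z\pm is)$ has (at worst) a pole at $z=\mp is$, which lies outside $\mathbb{S}_a$ as $a<s$; and by Lemma~\ref{phibounded} (applied with the wedges containing a horizontal strip, i.e.\ any $\Theta>0$, or more simply its consequence that $z^m\frac{d^n}{dz^n}\phi_d$ is bounded on strips strictly inside $\mathbb{S}_{2s}$) both $z\,h_\pm(z)$ and $\frac{d}{dz}h_\pm(z)$ are bounded on $\mathbb{S}_a$. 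Thus $h_\pm$ satisfies the hypotheses of Proposition~\ref{SokhotskiConvolutionMain}. Now decompose the kernel: for $z$ near the line $\mathbb{R}+is$, write
\be
\phi_d(z-t)~=~\frac{h_+(z-is-t)}{(z-is)-t}~+~\big(\phi_d(w)-\tfrac{1}{2\pi i}\tfrac{1}{w-is}\big)\Big|_{w=z-t}~+~\tfrac{1}{2\pi i}\tfrac{1}{(z-t)-is}~-~\tfrac{h_+(z-is-t)}{(z-is)-t}\ ,
\ee
which is awkward; more cleanly, set $r_+(z):=\phi_d(z)-\frac{1}{2\pi i}\frac{1}{z-is}$, analytic and bounded on the upper half of $\mathbb{S}_s$ up to the boundary (this is exactly the $r_+$ from the proof of Lemma~\ref{Greensproperty}, bounded there by Lemma~\ref{phibounded}), and note $\frac{1}{2\pi i}\frac{1}{(z-t)-is}=\frac{1}{2\pi i}\frac{g_0(z-is-t)}{(z-is)-t}$ with $g_0\equiv 1$ playing the role of $h$. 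So $F_d[g](z)=\int r_+(z-t)g(t)\,dt+\frac{1}{2\pi i}\int\frac{g(t)}{(z-is)-t}\,dt$ near the top boundary; the first integral extends continuously and boundedly to $\mathbb{R}+is$ by dominated convergence using Lemma~\ref{L1phiBounded}-type bounds on $r_+$ (or directly since $r_+$ is bounded analytic on a neighbourhood of $\overline{\mathbb{S}}_s$ near the top line), and the second is handled by Proposition~\ref{SokhotskiConvolutionMain} applied with $h\equiv 1$, $a$ any value $<s$, after the shift $z\mapsto z-is$ — giving a continuous bounded extension to the real line, i.e.\ to $\mathbb{R}+is$ in the original variable. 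The argument at $\mathbb{R}-is$ is identical with $r_-(z)=\phi_d(z)+\frac{1}{2\pi i}\frac{1}{z+is}$.

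The main obstacle I anticipate is bookkeeping rather than a genuine difficulty: matching the various strips and shifted half-planes so that Proposition~\ref{SokhotskiConvolutionMain} (which is stated for a strip $\mathbb{S}_a$ symmetric about $\mathbb{R}$ and gives one-sided limits onto $\mathbb{R}$) applies cleanly to the behaviour of $F_d[g]$ at the non-symmetric boundary lines $\mathbb{R}\pm is$, and confirming that the "regular remainder" $r_\pm$ really does yield a convolution that extends continuously up to the boundary. For the latter one wants that $r_+(x+iy)$ is bounded uniformly for $y$ in a one-sided neighbourhood of $s$ and continuous up to $y=s$, which follows because $r_+$ is analytic on $\{0<\mathrm{Im}\,z<s\}\cup(\text{nbhd of }\mathbb{R}+is)$ minus nothing (the pole at $is$ has been subtracted), together with the $L_1$-in-$t$ decay of $t\mapsto r_+(x+iy-t)$ coming from Lemma~\ref{phibounded} (the subtracted term $\frac{1}{2\pi i}\frac{1}{w-is}$ decays like $|w|^{-1}$, so $r_+$ still decays and Corollary~\ref{phiL1}-style reasoning gives the needed dominating function). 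Once these uniform bounds are in place, continuity of the extension and its boundedness on $\partial\mathbb{S}_s$ are immediate, and interior analyticity on $\mathbb{S}_s$ is already Corollary~\ref{PhiConvgAnalytic}.
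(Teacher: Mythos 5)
Your first (``awkward'') display already \emph{is} the paper's proof: since $h_+\big((z-is)-t\big)\big/\big((z-is)-t\big)=\phi_d(z-t)$ with $h_\pm(w):=w\,\phi_d(w\pm is)$, one writes $F_d[g](z)=\int \frac{h_+(w-t)}{w-t}\,g(t)\,dt$ with $w=z-is$ and applies Proposition~\ref{SokhotskiConvolutionMain} once, with $h=h_\pm$ and the given $g$, to get the uniform one-sided limits onto $\mathbb{R}\pm is$, their continuity and their boundedness. The hypotheses on $h_\pm$ (analyticity on $\mathbb{S}_a$ after removal of the simple pole, boundedness of $wh_\pm(w)$ and $h_\pm'(w)$ via Lemma~\ref{phibounded}) are exactly what the paper checks in the paragraph preceding the corollary. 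No further decomposition is needed.

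The ``cleaner'' route you substitute for it has a genuine gap. Both $r_+(w)=\phi_d(w)-\frac{1}{2\pi i}\frac{1}{w-is}$ and the subtracted pole term decay only like $|w|^{-1}$ along horizontal lines (since $\phi_d$ itself decays rapidly, $r_+(w)\sim-\frac{1}{2\pi i\,w}$), and $|w|^{-1}$ is \emph{not} integrable; so for a general bounded H\"older $g$ neither $\int r_+(z-t)g(t)\,dt$ nor $\frac{1}{2\pi i}\int\frac{g(t)}{(z-is)-t}\,dt$ converges --- take $g=\tanh$, for which the second integral truncated to $[-L,L]$ diverges like $-2\log L$. Your remark that the $|w|^{-1}$ decay of the subtracted term still allows ``Corollary~\ref{phiL1}-style reasoning'' is precisely where this fails: there is no $L_1$ majorant. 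For the same reason Proposition~\ref{SokhotskiConvolutionMain} cannot be invoked with $h\equiv 1$: its hypothesis that $zh(z)$ be bounded (and the decay condition~\ref{Sokhotski:3} of Proposition~\ref{Sokhotski} behind it) is what makes the kernel $O(|t|^{-2})$ and the integral absolutely convergent, and $z\cdot 1$ is unbounded on $\mathbb{S}_a$. The point of keeping the kernel in the form $h_\pm(w)/w$ rather than pole-plus-remainder is exactly to retain the Schwartz decay of $\phi_d$ in the numerator; the analogous split in the proof of Lemma~\ref{Greensproperty} is harmless only because there the pole term is integrated over the compact interval $[-\varepsilon,\varepsilon]$. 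So: revert to your first display and delete the rest.
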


Lastly, combining this result with the case $n=0$ of Lemma \ref{DerivConvBounded}, we obtain:

\begin{lemma} \label{PhiConvgBounded}
Let $g:\mathbb{R}\rightarrow\mathbb{C}$ be bounded and H\"older continuous. Then the function $F_d[g](z)$ is bounded in $\mathbb{S}_s$.
\end{lemma}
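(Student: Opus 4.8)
The plan is to cover $\mathbb{S}_s$ by an interior sub-strip and a thin collar around the two boundary lines $\mathbb{R}\pm is$, and to bound $F_d[g]$ on each piece separately: on the interior sub-strip this is already done by Lemma~\ref{DerivConvBounded}, and on the collar it follows from the boundary analysis underlying Corollary~\ref{PhiConvgExtension}.

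First I would make explicit what the proof of Corollary~\ref{PhiConvgExtension} delivers beyond bare continuity. That proof reduces, via the functions $h_\pm(z)=z\,\phi_d(z\pm is)$, to Proposition~\ref{SokhotskiConvolutionMain}, whose conclusion is that the boundary values $F^\pm(x):=\lim_{y\nearrow s}F_d[g](x\pm iy)$ exist \emph{uniformly in $x\in\mathbb{R}$} and define bounded functions $F^+,F^-$ on $\mathbb{R}$. Set $M:=\max\bigl(\sup_{x}|F^+(x)|,\ \sup_{x}|F^-(x)|\bigr)$. By the uniformity of these limits there is a $Y_0\in(0,s)$ such that, for all $x\in\mathbb{R}$ and all $y\in[Y_0,s)$, one has $\bigl|F_d[g](x+iy)-F^+(x)\bigr|\le 1$ and $\bigl|F_d[g](x-iy)-F^-(x)\bigr|\le 1$; hence $|F_d[g](z)|\le M+1$ for every $z\in\mathbb{S}_s$ with $|\mathrm{Im}(z)|\ge Y_0$.

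It then remains to bound $F_d[g]$ on the complementary open sub-strip $\mathbb{S}_{Y_0}$, which is exactly the case $n=0$ of Lemma~\ref{DerivConvBounded}: there is an $M'$ with $|F_d[g](z)|\le M'$ for all $z\in\mathbb{S}_{Y_0}$. Since $\mathbb{S}_s=\mathbb{S}_{Y_0}\cup\{z\in\mathbb{S}_s:|\mathrm{Im}(z)|\ge Y_0\}$, we conclude $|F_d[g](z)|\le\max(M+1,\,M')$ on all of $\mathbb{S}_s$, as claimed.

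The one genuinely delicate point — and the only place where the H\"older continuity of $g$ enters, through Corollary~\ref{PhiConvgExtension} and hence Proposition~\ref{SokhotskiConvolutionMain} — is that the approach to the boundary values must be \emph{uniform} in $\mathrm{Re}(z)$. Plain continuity of the extension to $\overline{\mathbb{S}}_s$ together with boundedness of $F^\pm$ would not suffice, since it leaves open a blow-up of $|F_d[g]|$ along a sequence escaping to $\mathrm{Re}(z)\to\pm\infty$ while $\mathrm{Im}(z)\to\pm s$; ruling this out is precisely the content of the uniform statement in Proposition~\ref{SokhotskiConvolutionMain}. Beyond that, no estimate is needed that is not already assembled in Sections~\ref{GreensChapter}--\ref{section:phiconvolution}.
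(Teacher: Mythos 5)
Your proof is correct and takes essentially the same approach as the paper's: bound $F_d[g]$ on $\partial\mathbb{S}_s$ via Corollary~\ref{PhiConvgExtension}, use the uniformity of the boundary limits from Proposition~\ref{SokhotskiConvolutionMain} to propagate that bound into a collar $\mathbb{S}_s\setminus\mathbb{S}_{s-\delta}$, and invoke Lemma~\ref{DerivConvBounded} (with $n=0$) on the remaining interior sub-strip. Your closing remark correctly identifies the uniformity of the convergence as the essential point that mere continuity of the extension would not supply.
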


\begin{proof}
By Corollary \ref{PhiConvgExtension}, there is some constant $B>0$ such that $|F_d[g](z)|\leq B$ for all $z\in\partial\mathbb{S}_s$. According to Proposition \ref{SokhotskiConvolutionMain}, $F_d[g](x\pm iy)\rightarrow F_d[g](x\pm is)$ uniformly as $y\nearrow s$. Thus, for any given $\varepsilon >0$ there exists a $\delta>0$ such that $|F_d[g](z)|\leq B+\varepsilon$ for all $z\in\mathbb{S}_s\setminus\mathbb{S}_{s-\delta}$. In other words, $F_d[g](z)$ is bounded in $\mathbb{S}_s\setminus\mathbb{S}_{s-\delta}$. But on the other hand, by Lemma \ref{DerivConvBounded} $F_d[g](z)$ is also bounded in $\mathbb{S}_{s-\delta}$.
\end{proof}

The results of this section can now be summed up as follows: if $g$ is bounded and H\"older continuous, then $F_d[g]\in\mathcal{BA}(\mathbb{S}_s)$.

\subsection{Proof of Proposition \ref{FuncRelToNLIE}}
\label{GreenSoln}

$\ref{ffg-funrel} \Rightarrow \ref{ffg-inteq}$: For $y\in\mathbb{R}$, define the family of continuous functions $\vec{f}^{[y]}(x):=\vec{f}(x+iy)$. Continuity of $\vec{f}$ on the closure of the strip $\mathbb{S}_s$ guarantees pointwise convergence $\vec{f}^{[y]}\rightarrow \vec{f}^{[s]}$ as $y\nearrow s$. 
By boundedness of $\vec{f}$, the components $f^{[y]}_m$ of $\vec{f}^{[y]}$ are uniformly bounded by some constant $M$. It follows that, for any fixed value of $b\in\mathbb{R}$ and any	$d \in (-2,2)$, 
\be\left|\phi_d(b-x) 
f_m^{[y]}(x)\right|\leq M|\phi_d(b-x)| 
\qquad \text{for all}~~ x\in\mathbb{R}
\ .\ee The function on the right-hand-side is in $L_1(\mathbb{R})$ according to Corollary \ref{phiL1}. Thus, by Lebesgue's dominated convergence theorem we can write
\be\label{lebesguelimes}\int_{-\infty}^{\infty}\phi_d(x-t)f_m^{[s]}(t)dt = \lim_{y\nearrow  s}\int_{-\infty}^{\infty}\phi_d(x-t)f_{m}^{[y]}(t)dt \ .\ee 
By a simple change of variables followed by contour deformation (which is now allowed because for $y<s$ the contour lies inside the analytic domain) one can transfer the appearance of $y$ from $f_m$ to $\phi_d$:  \be\label{contdef}\int_{-\infty}^{\infty}\phi_d(x-t)f_m(t+iy)dt = \int_{-\infty}^{\infty}\phi_d(x+iy-t)f_m(t) dt \ee
Note that the integrals over the vertical parts of the contour vanish when pushed to infinity (see Corollary \ref{boundaryintegrals}). Plugging \eqref{contdef} into \eqref{lebesguelimes}, and making use of Lemma \ref{Phiproperties}\,\ref{matrixelements} to write $[\Phi_{\vec C}]_{nm}$
 in terms of the one-dimensional Green's functions $\phi_d$, gives rise to the identity \be\label{shiftidentity}\Phi_{\vec C}\star\vec{f}^{[\pm s]}(x) = \lim_{y\nearrow  s} \Phi_{\vec C}^{[\pm y]}\star\vec{f}(x)
\qquad \text{for all}~~
  x\in\mathbb{R}
   \ .\ee Taking into account $[\vec{C},\Phi_{\vec C}(x)]=0$ due to Lemma \ref{Phiproperties}\,\ref{simdiag} and distributivity of the convolution, \eqref{shiftidentity} directly implies \be\Phi_{\vec C}\star\left(\vec{f}^{[+s]}+\vec{f}^{[-s]}-(\vec{C}\cdot\vec{f})\right)(x) =\lim_{y\nearrow  s}\left(\Phi_{\vec C}^{[+y]}+\Phi_{\vec C}^{[+y]}-\vec{C}\cdot\Phi_{\vec C}\right)\star\vec{f}(x)\ee for all $x\in\mathbb{R}$. On the left-hand-side we can substitute the functional relation \eqref{ffuncrel}, and on the right-hand-side apply Lemma~\ref{Phiproperties}\,\ref{delta}. This results in \eqref{fconv}.

\medskip

\noindent
$\ref{ffg-inteq} \Rightarrow \ref{ffg-funrel}$: According to Lemma \ref{Phiproperties}\,\ref{matrixelements}, the components of $\vec{f}(x)=\left(\Phi_{\vec C}\star \vec{g}\right)(x)$ are given by real linear combinations of the form 
\be\label{phi*g-via-Fd}
\sum_{d,m} c_{d,m}F_d[g_m](x) \ ,
\ee 
where the $g_m(x)$ are bounded and H\"older continuous by assumption and
$F_d[g_m](x)$ are the convolution integrals discussed in Section~\ref{section:phiconvolution}. According to
Corollary~\ref{PhiConvgAnalytic},
 Corollary~\ref{PhiConvgExtension} and Lemma~\ref{PhiConvgBounded}, these integrals have
analytic continuations  $F_d[g_m]\in\mathcal{BA}(\mathbb{S}_s)$.
This shows that $\vec{f}\in \mathcal{BA}(\mathbb{S}_s)^N$
and
\be\label{fy-as-convolution-with-g}
	\vec{f}^{[y]}(x)=\left(\Phi_{\vec C}^{[y]}\star \vec{g}\right)(x) \ .
\ee
To obtain the functional equation \eqref{ffuncrel} we basically reverse the above reasoning,
\begin{align}
\left(\vec{f}^{[+s]}+\vec{f}^{[-s]}-(\vec{C}\cdot\vec{f})\right)(x)
&\overset{(a)}=
\lim_{y\nearrow  s}
\left(\vec{f}^{[+y]}+\vec{f}^{[-y]}-(\vec{C}\cdot\vec{f})\right)(x)
\nonumber\\
&\overset{\eqref{fy-as-convolution-with-g}}=
\lim_{y\nearrow  s}
\left(\left(\Phi_{\vec C}^{[+y]}+\Phi_{\vec C}^{[-y]}-(\vec{C}\cdot\Phi_{\vec C})\right)\star \vec{g}\right)(x)
\nonumber\\
&\overset{(b)}=\vec{g}(x) \ .
\end{align}
Here (a) follows from pointwise convergence $\vec{f}^{[\pm y]}(x) \to \vec{f}^{[\pm s]}(x)$ due to continuity of $\vec{f}$ on $\overline{\mathbb{S}}_s$, and step (b) is the $\delta$-function property from Lemma~\ref{Phiproperties}\,\ref{delta}.

This completes the proof of Proposition~\ref{FuncRelToNLIE}.

\section{Unique solution to a family of integral equations}
\label{section-unique-integral-soln}

In this section we give a criterion for functional equations of the form \eqref{TBAintro} in the introduction to have a unique solution. 
Specifically, we will prove a special case of Theorem~\ref{main-theorem-TBA} where we choose $\vec{C} = \tfrac12 \vec{G}$.

\begin{proposition}\label{TBAuniquenessDynkin} Let $\vec G \in \mathrm{Mat}_{<2}(N)$ be non-negative and irreducible,
and let $\vec{a} \in BC_-(\mathbb{R},\mathbb{R})^N$.
Then the system of nonlinear integral equations
\be\label{Csystem}
\vec{f}(x)=\int_{-\infty}^{\infty} \Phi_{\frac{1}{2}\vec G}(x-y)\cdot\vec{G}\cdot\left(\log\left(e^{-\vec{a}(y)}+e^{\vec{f}(y)}\right)-
	\tfrac{1}{2}\vec{f}(y)
\right) dy
\ee
has exactly one bounded continuous solution, 
$\vec{f}_\star\in BC(\mathbb{R},\mathbb{R})^N$.
\end{proposition}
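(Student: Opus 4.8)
The plan is to set up \eqref{Csystem} as a fixed point equation $\vec f = \mathcal{T}[\vec f]$ on a suitable closed subset of $BC(\mathbb{R},\mathbb{R})^N$ and apply the Banach Fixed Point Theorem. So first I would define the operator
\[
\mathcal{T}[\vec f](x) := \int_{-\infty}^{\infty} \Phi_{\frac{1}{2}\vec G}(x-y)\cdot\vec{G}\cdot\left(\log\left(e^{-\vec{a}(y)}+e^{\vec{f}(y)}\right)-\tfrac{1}{2}\vec{f}(y)\right) dy
\]
and check it maps $BC(\mathbb{R},\mathbb{R})^N$ to itself: the components of $\Phi_{\frac12\vec G}$ are Schwartz (in particular $L_1$) by Lemma~\ref{Phiproperties}\,\ref{matrixelements} together with the Schwartz property of $\phi_d$, and the integrand's non-convolution factor is bounded because $\vec a$ is bounded below and $\vec f$ is bounded, so $\mathcal{T}[\vec f]$ is bounded; continuity follows from continuity of translation in $L_1$. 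I would also record that $\Phi_{\frac12\vec G}(x)$ is non-negative (Lemma~\ref{Phinonneg}, using that $\tfrac12\vec G$ is non-negative) — this is what makes the relevant operator norm computable via \eqref{phiC-infinf-integral}.

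The key estimate is the contraction bound. Write $\psi(u) := \log(e^{-a}+e^{u})$ for the relevant scalar nonlinearity; then $\psi'(u) = e^{u}/(e^{-a}+e^{u}) \in (0,1)$, so $u \mapsto \psi(u) - \tfrac12 u$ has derivative in $(-\tfrac12, \tfrac12)$, hence is Lipschitz with constant $\tfrac12$. Therefore for $\vec f, \tilde{\vec f} \in BC(\mathbb{R},\mathbb{R})^N$, componentwise
\[
\left| \big[\vec G \cdot (\psi(\vec f) - \tfrac12 \vec f)\big]_n(y) - \big[\vec G \cdot (\psi(\tilde{\vec f}) - \tfrac12 \tilde{\vec f})\big]_n(y) \right| \le \tfrac12 \sum_m G_{nm} \, \|\vec f - \tilde{\vec f}\|_\infty.
\]
Combining this with non-negativity of $\Phi_{\frac12\vec G}$ and $\int_{-\infty}^\infty \Phi_{\frac12\vec G}(x)\,dx = (2\cdot\mathbf 1 - \tfrac12\vec G)^{-1}$ from \eqref{phiC-infinf-integral}, one gets, for the $n$-th component,
\[
\big| \mathcal{T}[\vec f](x) - \mathcal{T}[\tilde{\vec f}](x) \big|_n \le \tfrac12 \Big[ (2\cdot\mathbf 1 - \tfrac12\vec G)^{-1} \cdot \vec G \Big]_{n\text{-row}} \cdot \vec 1 \;\cdot\; \|\vec f - \tilde{\vec f}\|_\infty,
\]
interpreting all matrices entrywise as non-negative. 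The matrix $M := \tfrac12 (2\cdot\mathbf 1 - \tfrac12\vec G)^{-1} \vec G$ is a non-negative matrix; I would diagonalise $\vec G$ (possible since $\vec G \in \mathrm{Mat}_{<2}(N)$), so $M$ has eigenvalues $\mu_j = \tfrac12 \lambda_j/(2 - \tfrac12\lambda_j) = \lambda_j/(4-\lambda_j)$ with $\lambda_j \in (-2,2)$ the eigenvalues of $\vec G$; one checks $|\lambda/(4-\lambda)| < 1$ for $\lambda\in(-2,2)$, so the spectral radius of $M$ is $<1$. To convert the spectral radius bound into an honest contraction on $(BC(\mathbb{R},\mathbb{R})^N, \|\cdot\|_\infty)$, I would use the Perron-Frobenius eigenvector: since $\vec G$ is non-negative and irreducible, so is $M$ (same eigenvectors, and $\vec G$'s Perron eigenvalue $\lambda_{\mathrm{PF}} \in (0,2)$ by Theorem~\ref{PF} gives $M$ a strictly positive Perron eigenvector $\vec v$ with eigenvalue $\rho(M) = \lambda_{\mathrm{PF}}/(4-\lambda_{\mathrm{PF}}) < 1$); then replace the sup-norm by the equivalent weighted norm $\|\vec f\|_{\vec v} := \max_n |f_n|/v_n$, with respect to which $M$ (acting on constant vectors) has operator norm exactly $\rho(M) < 1$, making $\mathcal{T}$ a strict contraction. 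Banach's theorem then yields the unique fixed point $\vec f_\star \in BC(\mathbb{R},\mathbb{R})^N$.

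The main obstacle is the passage from "spectral radius of $M$ is less than $1$" to "$\mathcal{T}$ is a contraction in a complete metric"; a naive sup-norm estimate only gives the operator norm of $M$ as $\max_n \sum_m M_{nm}$, which need not be $<1$ even when $\rho(M)<1$. The Perron-Frobenius-weighted norm fixes this cleanly, and irreducibility of $\vec G$ (hence of $M$) is exactly what guarantees the weight vector $\vec v$ has strictly positive entries so that $\|\cdot\|_{\vec v}$ is an equivalent norm. A secondary point to be careful about is that the nonlinearity $\psi(\vec f)-\tfrac12\vec f$ maps into a bounded set — this needs $\vec a$ bounded below (so $e^{-a}$ bounded above would fail, but $\log(e^{-a}+e^u) \ge \max(-a, u)$ and $\le \log 2 + \max(-a,u)$ gives the bound together with boundedness of $\vec f$); this is only needed to see $\mathcal{T}$ is well-defined, not for the contraction. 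Finally, I should remark that completeness of $BC(\mathbb{R},\mathbb{R})^N$ in the weighted norm is immediate since it is equivalent to the sup-norm.
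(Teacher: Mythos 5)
Your proposal is correct and follows essentially the same route as the paper: the paper's Proposition~\ref{UniqueGroundstateNdim} implements your Perron--Frobenius weighted norm by rescaling $f_i = w_i g_i$ with $\vec w$ the Perron eigenvector of $\vec G$, its Lemma~\ref{StandardLfunctionIsLipschitz} gives the same scalar Lipschitz constant $\tfrac12$ for $u\mapsto\log(e^{-a}+e^{u})-\tfrac12 u$, and the contraction constant is computed from Lemma~\ref{Phinonneg} and \eqref{phiC-infinf-integral} to be $\lambda_{\mathrm{PF}}/(4-\lambda_{\mathrm{PF}})<1$, exactly your $\rho(M)$.
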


The proof will consist of verifying that the Banach Fixed Point Theorem can be applied and will be given in Section~\ref{uniqunessproofTBA}. In Sections~\ref{hammerstein-section} and~\ref{section:TBAuniqueness} we lay the groundwork by discussing a type of integral equations called Hammerstein integral equations, and by applying the general results there to TBA-type equations. 

After the proof, in Section~\ref{G-is-graph-section} we comment on the special case that $\vec{G}$ is the adjacency matrix of a graph -- the case most commonly considered in applications -- and in Section~\ref{N=1-example-section} we look at the case $N=1$ in more detail.

The only previous proof of Theorem~\ref{main-theorem-TBA} we know of 
concerns the case 
	$N=1$, $\vec{G}=\vec{C}=1$ and $\vec{a} \sim \cosh(x)$, and
can be found in \cite{FringKorffSchulz}.
Their argument 
	also uses the Banach Fixed Point Theorem but is different from ours (we rely on being able to choose $\vec{C}$ different from $\vec{G}$) and we review it in Section~\ref{N=1-example-section}.

\subsection{Hammerstein integral equations as contractions}
\label{hammerstein-section}

In this section we take $\mathbb{K}$ to stand for $\mathbb{R}$ or $\mathbb{C}$. We use the abbreviation
$BC(\mathbb{R}) := BC(\mathbb{R},\mathbb{K})$. Similarly, we write $BC(\mathbb{R}^m)^N$ for $BC(\mathbb{R}^m,\mathbb{K}^N)$, which we think of either as $\mathbb{K}^N$-valued functions, or as $N$-tuples of $\mathbb{K}$-valued functions.

\medskip

Consider the nonlinear integral equation 
\be
f(x) = \int_{-\infty}^{\infty} K(x,y)L(y,f(y)) dy \ ,\ee where 
	$K:\mathbb{R}\times\mathbb{R}\rightarrow\mathbb{K}$ and
	$L:\mathbb{R}\times\mathbb{K}\rightarrow\mathbb{K}$ 
are some functions continuous in both arguments, and where it is understood that the integral is well-defined for  $f(x)$ in some suitable class of functions on $\mathbb{R}$. Integral equations of this form are commonly referred to as \textsl{Hammerstein equations}, see e.g.\ \cite[I.3]{Krasnoselskii}
	and \cite[Ch.\,16]{Polyanin}.
A function $f(x)$ solves this equation if and only if it is a fixpoint of the corresponding integral operator 
\be
A[f](x) := \int_{-\infty}^{\infty} K(x,y)L(y,f(y)) dy \ .\ee When does such a map have a unique fixpoint? We will try to bring Banach's Fixed Point Theorem to bear on this question, which we now briefly recall.

\begin{definition} 
Let $X$ be a metric space. A map $A:X\rightarrow X$ is called a {\sl contraction} if there exists a positive real constant $\kappa<1$ such that \be
d_X(A(x),A(y)) \leq \kappa \, d_X(x,y)\ee
for all $x,y\in X$. If the condition is satisfied for $\kappa=1$, then $A$ is called {\sl non-expansive}.
\end{definition}

\begin{theorem}[Banach]\label{Banach}
Let $X$ be a complete metric space and $A:X\rightarrow X$ a contraction. Then $A$ has a unique fixpoint $x_\star\in X$. Furthermore, for every $x_0\in X$, the recursive sequence $x_n:=A(x_{n-1})$ converges to $x_\star$.
\end{theorem}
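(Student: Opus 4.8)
The plan is to give the classical iteration argument. Starting from an arbitrary $x_0\in X$, I would form the orbit $x_n:=A(x_{n-1})$ and first establish the key one-step contraction estimate: since $A$ is a contraction with constant $\kappa<1$, one has $d_X(x_{n+1},x_n)=d_X(A(x_n),A(x_{n-1}))\le\kappa\,d_X(x_n,x_{n-1})$, whence by induction
\be
d_X(x_{n+1},x_n)\le\kappa^n\,d_X(x_1,x_0)\ .
\ee
This geometric decay of consecutive distances is the technical heart of the proof, and everything else follows from it.

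Next I would show $(x_n)$ is a Cauchy sequence. For $m>n$ the triangle inequality together with the previous estimate gives
\be
d_X(x_m,x_n)\le\sum_{j=n}^{m-1}d_X(x_{j+1},x_j)\le\Big(\sum_{j=n}^{m-1}\kappa^j\Big)d_X(x_1,x_0)\le\frac{\kappa^n}{1-\kappa}\,d_X(x_1,x_0)\ ,
\ee
where summing the geometric series uses $\kappa<1$. As $\kappa^n\to 0$, the right-hand side tends to $0$ independently of $m$, so $(x_n)$ is Cauchy. Since $X$ is complete, the sequence converges to some $x_\star\in X$, which also proves the final assertion of the theorem once $x_\star$ is identified as the fixpoint.

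To verify that $x_\star$ is a fixpoint I would use that any contraction is Lipschitz and hence continuous; therefore $A(x_\star)=A(\lim_n x_n)=\lim_n A(x_n)=\lim_n x_{n+1}=x_\star$. For uniqueness, suppose $x_\star$ and $y_\star$ are both fixpoints; then $d_X(x_\star,y_\star)=d_X(A(x_\star),A(y_\star))\le\kappa\,d_X(x_\star,y_\star)$, so $(1-\kappa)\,d_X(x_\star,y_\star)\le 0$, and since $\kappa<1$ this forces $d_X(x_\star,y_\star)=0$, i.e.\ $x_\star=y_\star$. I do not anticipate a genuine obstacle here, as the argument is self-contained and elementary; the only point requiring mild care is the uniform (in $m$) Cauchy bound, which is precisely what the geometric series estimate above supplies.
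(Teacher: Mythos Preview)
Your proof is correct and is precisely the standard iteration argument. The paper itself does not supply a proof of this theorem; it merely recalls the classical Banach Fixed Point Theorem as background before applying it, so there is nothing to compare against.
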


We now describe a general principle
which facilitates the application of 
Banach's Theorem 
to integral operators of Hammerstein type 
	(see e.g.\ \cite[16.6-1,\,Thm.\,3]{Polyanin}).
Suppose there exists a constant $\rho>0$ such that 
\be\int_{-\infty}^{\infty} \left|K(x,y)\right| dy \leq \rho
	\qquad \text{for all} \quad
x\in\mathbb{R} \ .
\ee 
Suppose also that $L$ is Lipschitz continuous in the second variable, 
i.e.\ there exists a constant $\sigma>0$ such that
\be\left|L(x,t)-L(x,s)\right| \leq \sigma \left|t-s\right| 
	\qquad \text{for all} \quad
x\in\mathbb{R}, \ s,t\in\mathbb{K}\ .\ee
Provided $A[f]$ defines a map $BC(\mathbb{R})\rightarrow BC(\mathbb{R})$, 
we can use this to compute for any $f,g\in BC(\mathbb{R})$:
\begin{align}
\left\| A[f]-A[g] \right\|_\infty &= \sup_{x\in\mathbb{R}} \left| \int_{-\infty}^{\infty} K(x,t) \left( L(t,f(t))-L(t,g(t)) \right) dt \right| \nonumber\\
&\leq \sup_{x\in\mathbb{R}}  \int_{-\infty}^{\infty}\left| K(x,t)\right| \left| L(t,f(t))-L(t,g(t))  \right| dt  \nonumber\\
&\leq \sigma \, \sup_{x\in\mathbb{R}}  \int_{-\infty}^{\infty}\left| K(x,t)\right| \left| f(t)-g(t)  \right| dt 
~\leq~ \sigma\rho \,\left\| f-g \right\|_\infty
\end{align}
If $\kappa:=\sigma\rho<1$, then $A[f]$ is a contraction with respect to the metric induced by the supremum norm $\left\|\cdot\right\|_\infty$. Recall that $BC(\mathbb{R})$ together with the norm $\left\|\cdot\right\|_\infty$ is a Banach space, and so the Banach Theorem~\ref{Banach} applies.

Consider now $N$ coupled nonlinear integral equations of Hammerstein type:
\be\vec{f}(x) = \int_{-\infty}^{\infty} \vec{K}(x,y)\cdot\vec{L}(y,\vec{f}(y)) \, dy \ ,\ee 
where
	$\vec{K}:\mathbb{R}\times\mathbb{R}\rightarrow \mathrm{Mat}(N,\mathbb{K})$ and
	$\vec{L}:\mathbb{R}\times\mathbb{K}^N\rightarrow\mathbb{K}^N$ 
are continuous in both arguments.
Our arguments depend crucially on the right choice of norm for the functions $\vec{f} :\mathbb{R} \to \mathbb{K}^N$.

For $1\leq p \leq\infty$, we equip the space $BC(\mathbb{R})^N$ with the norm $\left\|\cdot\right\|_{\infty_p}$ given by \be\left\|\vec{f}\right\|_{\infty_p} := \sup_{x\in\mathbb{R}}\left\|\vec{f}(x)\right\|_p = \sup_{x\in\mathbb{R}}\left(\sum_{i=1}^N |f_i(x)|^p\right)^\frac{1}{p} \ .\ee
The normed space $\left(BC(\mathbb{R})^N, \left\|\cdot\right\|_{\infty_p}\right)$  is a Banach space by the following standard lemma, which we state without proof.

\begin{lemma} \label{B(X,Y)Banach} Let $X$ be a metric space and $(Y,\left\|\cdot\right\|_Y)$ a Banach space. 
Then also $\left(BC(X,Y),\left\|\cdot\right\|_\infty\right)$ is a Banach space.
\end{lemma}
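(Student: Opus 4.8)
The plan is to carry out the standard verification that a space of bounded continuous functions into a Banach space is complete. First I would dispense with the elementary point that $\|\cdot\|_\infty$ is genuinely a norm on $BC(X,Y)$: finiteness of $\|f\|_\infty=\sup_{x\in X}\|f(x)\|_Y$ is exactly the boundedness assumption, positive definiteness and homogeneity are immediate, and the triangle inequality follows by taking suprema in the pointwise inequality $\|f(x)+g(x)\|_Y\le\|f(x)\|_Y+\|g(x)\|_Y$. So the only real content is completeness.

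Let $(f_n)_{n\ge 1}$ be a Cauchy sequence in $(BC(X,Y),\|\cdot\|_\infty)$. For each fixed $x\in X$ the estimate $\|f_n(x)-f_m(x)\|_Y\le\|f_n-f_m\|_\infty$ shows that $(f_n(x))_n$ is Cauchy in $Y$; since $Y$ is complete it has a limit, which I call $f(x)$, and this defines a function $f\colon X\to Y$, the pointwise limit of the $f_n$. Next I would upgrade pointwise to uniform convergence: given $\varepsilon>0$, pick $M$ with $\|f_n-f_m\|_\infty<\varepsilon$ for all $n,m\ge M$; then $\|f_n(x)-f_m(x)\|_Y<\varepsilon$ for every $x$, and letting $m\to\infty$ (using continuity of the norm on $Y$) gives $\|f_n(x)-f(x)\|_Y\le\varepsilon$ for all $x$ and all $n\ge M$, i.e.\ $\|f_n-f\|_\infty\le\varepsilon$. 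In particular $f$ is bounded, since $\|f\|_\infty\le\|f-f_M\|_\infty+\|f_M\|_\infty<\infty$.

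It remains to see that $f$ is continuous, which is the only step requiring care — the familiar $\varepsilon/3$ argument for uniform limits of continuous functions. Fix $x_0\in X$ and $\varepsilon>0$. Choose $n$ with $\|f_n-f\|_\infty<\varepsilon/3$, and then, using continuity of $f_n$ at $x_0$, a $\delta>0$ such that $d_X(x,x_0)<\delta$ implies $\|f_n(x)-f_n(x_0)\|_Y<\varepsilon/3$. For such $x$,
\[
\|f(x)-f(x_0)\|_Y \le \|f(x)-f_n(x)\|_Y + \|f_n(x)-f_n(x_0)\|_Y + \|f_n(x_0)-f(x_0)\|_Y < \varepsilon ,
\]
so $f$ is continuous at $x_0$, hence $f\in BC(X,Y)$. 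Combined with $\|f_n-f\|_\infty\to 0$ established above, this shows that every Cauchy sequence in $BC(X,Y)$ converges within $BC(X,Y)$, so the space is complete and therefore a Banach space. No step here is an obstacle in the genuine sense; the only point that is not purely formal is the passage from pointwise to uniform convergence together with the continuity of the limit, and both are handled by the standard estimates indicated above.
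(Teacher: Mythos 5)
Your proof is correct and is precisely the standard completeness argument the authors have in mind: the paper explicitly states this lemma without proof, calling it standard. Nothing further is needed.
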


We now give the $N$-component version of the general principle outlined above. The proof is the same, just with heavier notation.

\begin{lemma} \label{HammersteinContractionNdim} Let $\vec{K}\in\mathrm{Mat}(N,BC(\mathbb{R}^2))$ and $\vec{L}:\mathbb{R}\times\mathbb{K}^N \rightarrow\mathbb{K}^N$ a function. Suppose that 
\begin{itemize}
\item 
$\vec{K}$ has bounded integrals in the second variable, in the sense that
\be
\rho_{ij}:= \sup_{x\in\mathbb{R}}\int_{-\infty}^{\infty} \left|K_{ij}(x,y)\right| dy  \,<\, \infty  \qquad , \quad i,j=1,...,N \ .
\ee
\item
all the components of $\vec{L}$ are Lipschitz continuous in the second variable in the sense that there are constants $\sigma_j\geq 0$, 
$j=1,\dots,N$, such that
\be
\left|L_j(y,\vec{v})-L_j(y,\vec{w})\right| \leq \sigma_j \left\|\vec{v}-\vec{w}\right\|_p \quad \text{for all} ~~
\vec{v},\vec{w}\in
	\mathbb{K}^N, 
\; y\in\mathbb{R}  \ .\ee 
\item
the matrix $\boldsymbol\rho=(\rho_{ij})_{i,j=1,...,N}$ and the vector $\boldsymbol\sigma=(\sigma_i)_{i=1,...,N}$ are
such that $\kappa := \left\|\boldsymbol\rho\cdot\boldsymbol\sigma\right\|_p < 1$. 
\end{itemize}
Suppose that the following
integral operator defines a map 
$\vec{A}:BC(\mathbb{R})^N \rightarrow BC(\mathbb{R})^N$,
\be\vec{A}[\vec{f}](x) := \int_{-\infty}^{\infty} \vec{K}(x,y)\cdot\vec{L}(y,\vec{f}(y)) \, dy \ .
\ee
Then $\vec{A}$ is a contraction
on $\left(BC(\mathbb{R})^N, \left\|\cdot\right\|_{\infty_p}\right)$.
\end{lemma}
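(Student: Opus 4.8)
The plan is to mimic the one-dimensional computation already displayed in the text, replacing absolute values by $p$-norms and scalar multiplication by matrix-vector multiplication, and keeping careful track of the componentwise estimates. First I would fix $\vec f, \vec g \in BC(\mathbb{R})^N$ and write out, for each component index $i$,
\be
\left| A_i[\vec f](x) - A_i[\vec g](x) \right|
~\le~ \sum_{j=1}^N \int_{-\infty}^\infty \left| K_{ij}(x,y) \right| \, \left| L_j(y,\vec f(y)) - L_j(y,\vec g(y)) \right| \, dy \ ,
\ee
then apply the Lipschitz hypothesis on $\vec L$ to bound $\left| L_j(y,\vec f(y)) - L_j(y,\vec g(y)) \right| \le \sigma_j \left\| \vec f(y) - \vec g(y) \right\|_p \le \sigma_j \left\| \vec f - \vec g \right\|_{\infty_p}$, pulling the supremum out of the integral. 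This gives $\left| A_i[\vec f](x) - A_i[\vec g](x) \right| \le \left( \sum_j \rho_{ij} \sigma_j \right) \left\| \vec f - \vec g \right\|_{\infty_p} = [\boldsymbol\rho \cdot \boldsymbol\sigma]_i \, \left\| \vec f - \vec g \right\|_{\infty_p}$, uniformly in $x$.

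Next I would assemble these componentwise bounds into a single $p$-norm estimate. Since the bound on the $i$-th component is $[\boldsymbol\rho\cdot\boldsymbol\sigma]_i$ times the (fixed) scalar $\left\| \vec f - \vec g \right\|_{\infty_p}$, taking the $p$-norm over $i$ of the vector $\big( A_i[\vec f](x) - A_i[\vec g](x) \big)_{i=1}^N$ yields $\left\| \vec A[\vec f](x) - \vec A[\vec g](x) \right\|_p \le \left\| \boldsymbol\rho \cdot \boldsymbol\sigma \right\|_p \, \left\| \vec f - \vec g \right\|_{\infty_p}$ for every $x$; here one uses monotonicity of the $p$-norm under the entrywise inequality $\big| A_i[\vec f](x)-A_i[\vec g](x)\big| \le [\boldsymbol\rho\cdot\boldsymbol\sigma]_i \left\|\vec f-\vec g\right\|_{\infty_p}$, valid for nonnegative vectors. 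Taking the supremum over $x\in\mathbb{R}$ gives $\left\| \vec A[\vec f] - \vec A[\vec g] \right\|_{\infty_p} \le \kappa \, \left\| \vec f - \vec g \right\|_{\infty_p}$ with $\kappa = \left\| \boldsymbol\rho\cdot\boldsymbol\sigma \right\|_p < 1$ by hypothesis. Since $\big( BC(\mathbb{R})^N, \left\|\cdot\right\|_{\infty_p} \big)$ is a Banach space by Lemma~\ref{B(X,Y)Banach} (with $X=\mathbb{R}$, $Y = (\mathbb{K}^N, \left\|\cdot\right\|_p)$), the map $\vec A$ is a contraction, as claimed.

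The only genuinely nonroutine point is the step where the entrywise scalar bounds are combined into the vector $p$-norm inequality: one must justify that if $|v_i| \le w_i$ with $w_i \ge 0$ for all $i$, then $\left\| \vec v \right\|_p \le \left\| \vec w \right\|_p$, which is immediate from the definition of $\left\|\cdot\right\|_p$ but should be stated, and that $\left\| \boldsymbol\rho\cdot\boldsymbol\sigma\right\|_p$ is exactly the quantity appearing once the sup over $x$ is taken — here one uses that $\rho_{ij}$ already absorbs the $\sup_x$, so the bound is $x$-independent before the final sup. A secondary bookkeeping point is that the hypotheses only assume $\vec A$ maps $BC(\mathbb{R})^N$ into itself (they do not ask us to prove this), so no further well-definedness argument is needed; the contraction estimate together with Lemma~\ref{B(X,Y)Banach} is all that is required. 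I expect the whole proof to be short, essentially the displayed four-line computation from the scalar case rewritten with indices and the triangle inequality for $\left\|\cdot\right\|_p$.
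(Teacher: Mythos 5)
Your proposal is correct and follows essentially the same route as the paper, which carries out the identical estimate written in terms of the $p$-th power of the norm rather than via componentwise bounds plus monotonicity of $\left\|\cdot\right\|_p$. If anything, your phrasing handles the case $p=\infty$ (which is the one actually used later, in the proof of Proposition~\ref{TBAuniquenessDynkin}) slightly more cleanly than the paper's displayed computation with $p$-th powers.
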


\begin{proof} Let $\vec{f},\vec{g}\in BC(\mathbb{R})^N$. Then
\allowdisplaybreaks
\begin{align}
\left(\left\| \vec{A}[\vec{f}]-\vec{A}[\vec{g}] \right\|_{\infty_p}\right)^p &= \sup_{x\in\mathbb{R}} \sum_{i=1}^N \left|A_i[\vec{f}]-A_i[\vec{g}]\right|^p \nonumber\\
&= \sup_{x\in\mathbb{R}} \sum_{i=1}^N \left|\sum_{j=1}^N\int_{-\infty}^{\infty} K_{ij}(x,y)\left(L_j(y,\vec{f}(y))-L_j(y,\vec{g}(y))\right) dy\right|^p \nonumber\\
&\leq \sup_{x\in\mathbb{R}} \sum_{i=1}^N \left|\sum_{j=1}^N\int_{-\infty}^{\infty} \left|K_{ij}(x,y)\right|\big|L_j(y,\vec{f}(y))-L_j(y,\vec{g}(y))\big| dy\right|^p \nonumber\\
&\leq \left(\left\|\vec{f}-\vec{g}\right\|_{\infty_p}\right)^p \,\sum_{i=1}^N \left|\sum_{j=1}^N \rho_{ij}\sigma_j\right|^p
 \nonumber \\
&= \left(\left\|\vec{f}-\vec{g}\right\|_{\infty_p}\right)^p \left(\left\|\boldsymbol\rho\cdot\boldsymbol\sigma\right\|_p\right)^p 
=\left(\kappa \left\|\vec{f}-\vec{g}\right\|_{\infty_p}\right)^p  \ .
\end{align}
Since $\kappa<1$, $\vec{A}$ is a contraction.
\end{proof}

\subsection{Unique solution to TBA-type equations}
\label{section:TBAuniqueness}

In this section we specialise the results of the previous section to integral equations of the form \eqref{TBAintro}. We will restrict ourselves to the case $\mathbb{K} = \mathbb{R}$.

Let us call a function $f:\mathbb{R}^N\rightarrow\mathbb{R}$ \textsl{$p$-Lipschitz-continuous}, if it satisfies the Lipschitz condition with respect to $\left\|\cdot\right\|_p$. In fact, this is a slightly redundant denomination: equivalence of all $p$-norms ensures that $f$ is Lipschitz continuous either with respect to all or none of the $p$-norms. However, the optimal Lipschitz constants differ, which is important in view of the third condition in 
	Lemma~\ref{HammersteinContractionNdim}.
 For differentiable functions that satisfy a Lipschitz condition, we can characterise the $p$-Lipschitz constant in terms of the gradient as follows:

\begin{lemma}\label{pLipschitz}
Let $f:\mathbb{R}^N\rightarrow\mathbb{R}$ be a continuous function whose gradient $\nabla f:\mathbb{R}^N\rightarrow\mathbb{R}^N$ is also a continuous function, and $1\leq p \leq\infty$. Then the following inequality holds:
\be\left|f(\vec{v})-f(\vec{w})\right| \leq \left(\sup_{\vec{u}\in\mathbb{R}^N}\left\|\nabla f(\vec{u})\right\|_q\right) \left\|\vec{v}-\vec{w}\right\|_p \hspace{1cm} \ \text{for all}~~ \vec{v},\vec{w}\in\mathbb{R}^N\ ,\ee where $q$ is defined via the relation $\frac{1}{p}+\frac{1}{q}=1$. In particular, $f$ is Lipschitz continuous if the gradient is bounded.
\end{lemma}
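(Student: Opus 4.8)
The plan is to prove this via the mean value theorem applied along the line segment connecting $\vec v$ and $\vec w$, combined with H\"older's inequality. This is a standard multivariable calculus argument, so the ``main obstacle'' is essentially bookkeeping rather than any genuine difficulty; the one point requiring a little care is the boundary case $p=1$, $q=\infty$ (and $p=\infty$, $q=1$), where the duality of norms still holds but one should make sure the estimate goes through.

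\medskip

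\noindent\textbf{Step 1: Reduce to a one-dimensional problem.} Fix $\vec v,\vec w\in\mathbb{R}^N$ and define $g:[0,1]\to\mathbb{R}$ by $g(t) := f\big(\vec w + t(\vec v-\vec w)\big)$. Since $f$ is continuous and has continuous gradient, $g$ is continuously differentiable on $[0,1]$ with $g'(t) = \nabla f\big(\vec w + t(\vec v-\vec w)\big)\cdot(\vec v-\vec w)$ by the chain rule. By the fundamental theorem of calculus,
\be
f(\vec v)-f(\vec w) = g(1)-g(0) = \int_0^1 \nabla f\big(\vec w + t(\vec v-\vec w)\big)\cdot(\vec v-\vec w)\;dt \ .
\ee

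\medskip

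\noindent\textbf{Step 2: Apply H\"older's inequality pointwise.} For each $t\in[0,1]$, the Cauchy--Schwarz-type estimate $|\vec a\cdot \vec b|\leq \|\vec a\|_q\|\vec b\|_p$ (valid for conjugate exponents $1/p+1/q=1$, including the endpoint cases) gives
\be
\left|\nabla f\big(\vec w + t(\vec v-\vec w)\big)\cdot(\vec v-\vec w)\right|
\leq \left\|\nabla f\big(\vec w + t(\vec v-\vec w)\big)\right\|_q \,\|\vec v-\vec w\|_p
\leq \left(\sup_{\vec u\in\mathbb{R}^N}\|\nabla f(\vec u)\|_q\right)\|\vec v-\vec w\|_p \ .
\ee

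\medskip

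\noindent\textbf{Step 3: Integrate.} Taking absolute values inside the integral from Step 1 and using the bound from Step 2, which is independent of $t$, yields
\be
\left|f(\vec v)-f(\vec w)\right| \leq \int_0^1 \left(\sup_{\vec u\in\mathbb{R}^N}\|\nabla f(\vec u)\|_q\right)\|\vec v-\vec w\|_p \;dt = \left(\sup_{\vec u\in\mathbb{R}^N}\|\nabla f(\vec u)\|_q\right)\|\vec v-\vec w\|_p \ ,
\ee
which is the claimed inequality. The final assertion is immediate: if $\sup_{\vec u}\|\nabla f(\vec u)\|_q =: M < \infty$, then $f$ satisfies $|f(\vec v)-f(\vec w)|\leq M\|\vec v-\vec w\|_p$ for all $\vec v,\vec w$, i.e.\ $f$ is $p$-Lipschitz-continuous. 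If this supremum is infinite the statement is vacuous. No genuine obstacle arises; the only thing to double-check is that the chain rule and H\"older inequality are being invoked in the endpoint cases with the correct conjugate exponent, and that the integrand in Step 1 is indeed integrable (it is, being continuous on the compact interval $[0,1]$).
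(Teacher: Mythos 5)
Your proof is correct and follows essentially the same route as the paper: restrict $f$ to the line segment from $\vec w$ to $\vec v$ and bound the directional derivative via H\"older's inequality with conjugate exponents. The only cosmetic difference is that you use the fundamental theorem of calculus where the paper invokes the mean value theorem directly; both yield the same estimate.
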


\begin{proof}
Let $\vec{v},\vec{w}\in\mathbb{R}^N$. By the mean value theorem there exists a $t\in[0,1]$ such that \be\left|f(\vec{v})-f(\vec{w})\right| = \left|\nabla f(\vec{x})\cdot(\vec{v}-\vec{w})\right|\ee for $\vec{x}=\vec{v}+t(\vec{w}-\vec{v})$. Applying first the triangle and then the H\"older inequality to the right hand side, one obtains
\begin{align}
\left|f(\vec{v})-f(\vec{w})\right| &\leq  \sum_{i=1}^N\left|\nabla_i f(\vec{x})(v_i-w_i)\right| \nonumber\\
&\leq \left\|\nabla f(\vec{x})\right\|_q \left\|\vec{v}-\vec{w}\right\|_p \nonumber\\
&\leq \left(\sup_{\vec{u}\in\mathbb{R}^N}\left\|\nabla f(\vec{u})\right\|_q\right) \left\|\vec{v}-\vec{w}\right\|_p \ ,
\end{align}
which means that $\sup_{\vec{u}\in\mathbb{R}^N}\left\|\nabla f(\vec{u})\right\|_q$ , if bounded, is a $p$-Lipschitz constant for $f$. It is easy to see that it is the optimal $p$-Lipschitz constant.
\end{proof}

\begin{lemma} \label{StandardLfunctionIsLipschitz}
Let $A\ge 0$ and $a,b,c\in\mathbb{R}$. The function 
$L:\mathbb{R}\rightarrow\mathbb{R}$ defined by 
\be
L(x)=a\cdot\log\left(A+e^{bx}\right)-cx
\ee is Lipschitz continuous with Lipschitz constant $\sigma_L=\max(|c|,|ab-c|)$.
\end{lemma}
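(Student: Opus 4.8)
The plan is to compute the derivative $L'(x)$ explicitly and then bound $|L'(x)|$ uniformly over $x \in \mathbb{R}$, after which an application of Lemma~\ref{pLipschitz} (in the one-dimensional case $N=1$, $p$ arbitrary, so that $\|\cdot\|_q$ is just the absolute value) yields the Lipschitz constant. Concretely, differentiating gives
\be
L'(x) = a \cdot \frac{b\,e^{bx}}{A+e^{bx}} - c = ab \cdot \frac{e^{bx}}{A+e^{bx}} - c \ .
\ee
The key observation is that the quantity $t(x) := \frac{e^{bx}}{A+e^{bx}}$ takes values in a controlled range: since $A \ge 0$ and $e^{bx}>0$, one has $t(x) \in [0,1]$ for all $x \in \mathbb{R}$ (more precisely $t(x) \in (0,1)$ when $A>0$, and $t(x) \equiv 1$ when $A=0$). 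Hence $L'(x) = ab\, t(x) - c$ lies in the convex hull of the two endpoint values $ab\cdot 0 - c = -c$ and $ab \cdot 1 - c = ab - c$; that is, $L'(x)$ lies in the closed interval between $-c$ and $ab-c$.

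From this, $|L'(x)| \le \max(|{-c}|, |ab-c|) = \max(|c|, |ab-c|) =: \sigma_L$ for all $x$, because the modulus of a point in an interval is bounded by the larger of the moduli of the endpoints. Since $L$ is $C^1$ with bounded derivative, Lemma~\ref{pLipschitz} (with $N=1$) gives $|L(x)-L(y)| \le \sigma_L |x-y|$ for all $x,y \in \mathbb{R}$, which is the claim. One should note the degenerate cases are covered: if $b=0$ then $L$ is affine with slope $-c$ and $\sigma_L = \max(|c|,|c|) = |c|$ is correct; if $A=0$ then $L(x) = abx - cx$ is affine with slope $ab-c$ and again $\sigma_L = \max(|c|,|ab-c|)$ works.

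I do not anticipate a genuine obstacle here — the statement is elementary once the range $t(x)\in[0,1]$ is isolated. The only point requiring a modicum of care is making sure the bound $\sigma_L = \max(|c|,|ab-c|)$ is actually attained as stated (so that it is the optimal constant, as Lemma~\ref{pLipschitz} also asserts optimality), which follows since $t(x)$ ranges over all of $(0,1)$ when $A>0$, so $L'(x)$ ranges over the open interval between $-c$ and $ab-c$, whose supremum of absolute values is exactly $\max(|c|,|ab-c|)$; in the boundary cases $A=0$ or $b=0$ one endpoint is attained directly. Thus the proof is a short computation plus the already-established Lemma~\ref{pLipschitz}.
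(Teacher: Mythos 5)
Your proof is correct and follows essentially the same route as the paper's: compute $L'(x)$, show it lies between $-c$ and $ab-c$, and invoke Lemma~\ref{pLipschitz}. The only cosmetic difference is that you read off the range of $L'$ directly from $\tfrac{e^{bx}}{A+e^{bx}}\in(0,1]$, whereas the paper establishes it via monotonicity of $L'$ (sign of the second derivative) together with its limits as $bx\to\pm\infty$; your variant is, if anything, slightly more economical.
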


\begin{proof}
If any one of $a,b,A$ is zero, the statement is clear. Suppose $a,b,A\neq0$.
The derivative \be\frac{d}{dx}L(x)=\frac{ab}{Ae^{-bx}+1}-c\ee interpolates between $-c$ (for $bx\rightarrow -\infty$) and $ab-c$ (for $bx\rightarrow \infty$). It is also a monotonous function, since 
\be\frac{d^2}{dx^2}L(x)= a \cdot \frac{Ab^2 e^{bx}}{(A+e^{bx})^2}\ ,
\ee
which is $\ge 0$ 
for $a \ge 0$ and $\le 0$ for $a \le 0$.
It thus follows that $\left|\frac{d}{dx}L(x)\right|\leq\max(|c|,|ab-c|)$.
Lemma~\ref{pLipschitz} completes the proof.
\end{proof}

\begin{proposition} \label{UniqueGroundstateNdim} Let $1\leq p,q \leq\infty$ such that $\frac{1}{p}+\frac{1}{q}=1$ holds. For $i,j=1,...,N$, let $\phi_{ij}\in 
	BC(\mathbb{R},\mathbb{R})
\cap L_1(\mathbb{R})$, $a_j\in BC_-(\mathbb{R},\mathbb{R})$ 
and $G_{ij},C_{ij}\in\mathbb{R}$, $w_i\in\mathbb{R}_{>0}$. Furthermore, set $M_{ij}:=\max\left(|C_{ij}|,|G_{ij}-C_{ij}|\right)$ and define \be \sigma_i:=\left( \sum_{j=1}^N (M_{ij}w_j)^q \right)^\frac{1}{q} 
\quad , \qquad
\rho_{ij}:=\frac{1}{w_i}\int_{-\infty}^\infty\left|\phi_{ij}(y)\right|dy \ .\ee If 
$\kappa:=\left\|\boldsymbol\rho\cdot\boldsymbol\sigma\right\|_p<1$, then the system of nonlinear integral equations given by 
\be\label{TBA-type-contracting-integral}
f_i(x)=\sum_{j=1}^N\int_{-\infty}^{\infty} \phi_{ij}(x-y)\sum_{k=1}^N\left[G_{jk}\log\left(e^{-a_k(y)}+e^{f_k(y)}\right)-C_{jk}f_k(y)\right] dy
\ee 
has exactly one bounded continuous solution, 
$(f_{\star,1},\dots,f_{\star,N})\in BC(\mathbb{R},\mathbb{R})^N$.
\end{proposition}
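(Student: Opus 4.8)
The plan is to reduce the statement to the abstract contraction criterion of Lemma~\ref{HammersteinContractionNdim} after a rescaling that absorbs the weights $w_i$. Concretely, I would set $g_i(x) := f_i(x)/w_i$, so that \eqref{TBA-type-contracting-integral} becomes
\be
g_i(x) = \sum_{j=1}^N \int_{-\infty}^{\infty} \frac{\phi_{ij}(x-y)}{w_i}\,\sum_{k=1}^N \Big[\, G_{jk}\log\big(e^{-a_k(y)}+e^{w_k g_k(y)}\big) - C_{jk}\,w_k\, g_k(y)\,\Big]\, dy \ ,
\ee
which is of Hammerstein type $\vec g(x) = \int \vec K(x,y)\cdot\vec L(y,\vec g(y))\,dy$ with kernel $K_{ij}(x,y) = \phi_{ij}(x-y)/w_i$ and nonlinearity $L_j(y,\vec v) = \sum_k\big[ G_{jk}\log(e^{-a_k(y)}+e^{w_k v_k}) - C_{jk} w_k v_k\big]$. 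Since $w_i>0$, the map $\vec f\leftrightarrow\vec g$ is a bijection between bounded continuous functions carrying solutions of \eqref{TBA-type-contracting-integral} to solutions of the rescaled system and back, so it suffices to produce a unique $\vec g_\star\in BC(\mathbb{R})^N$ and then put $f_{\star,i} = w_i g_{\star,i}$.

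Next I would check that the integral operator $\vec A$ associated to the rescaled system maps $BC(\mathbb{R})^N$ into itself, which is the one hypothesis of Lemma~\ref{HammersteinContractionNdim} not automatic here. For $\vec g$ bounded, $e^{w_k g_k(y)}$ is bounded, and since $a_k\in BC_-(\mathbb{R},\mathbb{R})$ the term $e^{-a_k(y)}$ is bounded above; hence $\log(e^{-a_k(y)}+e^{w_k g_k(y)})$ is bounded above, and it is bounded below because it is $\ge w_k g_k(y)$. Thus $y\mapsto L_j(y,\vec g(y))$ is bounded, and it is continuous since all the building blocks are; together with $\phi_{ij}\in L_1(\mathbb{R})$ this makes $A_i[\vec g](x) = \sum_j \int \phi_{ij}(x-y)\,w_i^{-1} L_j(y,\vec g(y))\,dy$ a well-defined, bounded, continuous function of $x$ (continuity of the convolution of an $L_1$-function with an $L_\infty$-function). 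Note also that $\phi_{ij}\in BC(\mathbb{R})$ gives $K_{ij}\in BC(\mathbb{R}^2)$ as required.

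Then I would read off the two families of constants. The integral bounds are immediate: $\sup_x\int |K_{ij}(x,y)|\,dy = w_i^{-1}\int|\phi_{ij}(y)|\,dy = \rho_{ij}$, which is finite because $\phi_{ij}\in L_1(\mathbb{R})$. For the Lipschitz constants, each summand $v_k\mapsto G_{jk}\log(e^{-a_k(y)}+e^{w_k v_k}) - C_{jk}w_k v_k$ is of the shape treated in Lemma~\ref{StandardLfunctionIsLipschitz} with parameters $a=G_{jk}$, $b=w_k$, $A=e^{-a_k(y)}\ge 0$, $c=C_{jk}w_k$, giving the bound $|\partial L_j/\partial v_k|\le \max(|C_{jk}w_k|,|G_{jk}w_k-C_{jk}w_k|) = w_k M_{jk}$, uniformly in $y$. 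By Lemma~\ref{pLipschitz}, $L_j$ is then $p$-Lipschitz in $\vec v$ with constant $\sup_{\vec u}\|\nabla_{\vec v} L_j(\vec u)\|_q \le \|(w_1 M_{j1},\dots,w_N M_{jN})\|_q = \sigma_j$. Since by hypothesis $\kappa = \|\boldsymbol\rho\cdot\boldsymbol\sigma\|_p < 1$, Lemma~\ref{HammersteinContractionNdim} shows $\vec A$ is a contraction on $(BC(\mathbb{R})^N,\|\cdot\|_{\infty_p})$, which is complete by Lemma~\ref{B(X,Y)Banach}; hence Banach's Theorem~\ref{Banach} yields a unique fixed point $\vec g_\star$, and $f_{\star,i}=w_i g_{\star,i}$ is the unique bounded continuous solution of \eqref{TBA-type-contracting-integral}.

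I do not expect a serious obstacle: the argument is essentially matching hypotheses of lemmas already established. The one point that requires care is the rescaling bookkeeping — it is precisely the substitution $f_i = w_i g_i$ that converts the derivative bound $\max(|C_{jk}|,|G_{jk}-C_{jk}|)$ of the unscaled nonlinearity into $w_k M_{jk}$, thereby producing the weights inside $\sigma_i$, and one must keep the weights on the correct side ($w_k$ appearing in $\sigma_i$, $w_i^{-1}$ appearing in $\rho_{ij}$) so that $\kappa$ comes out exactly as stated. A secondary point is the verification that $\vec A$ genuinely lands in $BC(\mathbb{R})^N$, which is where lower-boundedness of $\vec a$ and integrability of $\phi_{ij}$ are used.
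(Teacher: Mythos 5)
Your proposal is correct and follows essentially the same route as the paper: rescale by $f_i = w_i g_i$, verify the operator maps $BC(\mathbb{R},\mathbb{R})^N$ to itself using the lower bound on $\vec a$, obtain the derivative bound $w_k M_{jk}$ from Lemma~\ref{StandardLfunctionIsLipschitz}, pass to the $p$-Lipschitz constant $\sigma_j$ via Lemma~\ref{pLipschitz}, and conclude with Lemma~\ref{HammersteinContractionNdim} and Banach's theorem. The only difference is that you spell out the well-definedness of the integral operator in slightly more detail than the paper does, which is harmless.
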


\begin{proof}
We start by rewriting \eqref{TBA-type-contracting-integral} in terms of the rescaled functions $g_i(x)= f_i(x)/w_i$: 
\be\label{contractingintegralforg}
g_i(x)=\sum_{j=1}^N\int_{-\infty}^{\infty} \frac{1}{w_i}\phi_{ij}(x-y)\sum_{k=1}^N\left[G_{jk}\log\left(e^{-a_k(y)}+e^{w_k g_k(y)}\right)-C_{jk}w_k g_k(y)\right] dy
\ee 
The lower bound on the $a_k(y)$ ensures that the corresponding integral operator is a well-defined map $BC(\mathbb{R},\mathbb{R})^N\rightarrow BC(\mathbb{R},\mathbb{R})^N$. Consider the functions 
\be
L_i(y,\vec{v})=\sum_{k=1}^N\left[G_{ik}\log\left(e^{-a_k(y)}+e^{w_k v_k}\right)-C_{ik}w_k v_k\right] \ .\ee The individual summands, as functions in $v_k$ for fixed $y$,  are of the form considered in Lemma \ref{StandardLfunctionIsLipschitz}, from which one concludes \be\left|\frac{\partial}{\partial v_j} L_i(y,\vec{v})\right|\leq w_j\, \max\left(|C_{ij}|,|G_{ij}-C_{ij}|\right) \ .\ee According to Lemma \ref{pLipschitz}, the functions $L_i(y,\vec{v})$ are thus $p$-Lipschitz continuous in the second variable, with Lipschitz constants 
\begin{align}
\sigma_i &=  \sup_{\substack{\vec{v}\in\mathbb{R}^N \\ y\in\mathbb{R}}}\left\|\nabla_\vec{v} L_i(y,\vec{v})\right\|_q = \sup_{\substack{\vec{v}\in\mathbb{R}^N \\ y\in\mathbb{R}}}\left(\sum_{j=1}^N\left|\frac{\partial}{\partial v_j} L_i(y,\vec{v})\right|^q\right)^\frac{1}{q} \nonumber \\ &=\left(\sum_{j=1}^N\sup_{\substack{\vec{v}\in\mathbb{R}^N \\ y\in\mathbb{R}}}\left|\frac{\partial}{\partial v_j} L_i(y,\vec{v})\right|^q\right)^\frac{1}{q} =\left(\sum_{j=1}^N \left(w_j\, \max\left(|C_{ij}|,|G_{ij}-C_{ij}|\right)\right)^q\right)^\frac{1}{q} \ .
\end{align}
An application of 
	Lemma~\ref{HammersteinContractionNdim} 
completes the proof.
\end{proof}

We note that the bound on $\kappa$ used in Proposition~\ref{UniqueGroundstateNdim} is actually independent of the functions $a_j$.

\subsection{Proof of Proposition \ref{TBAuniquenessDynkin}}\label{uniqunessproofTBA}

The proof makes use of the Perron-Frobenius Theorem in the following form 
(see e.g.~\cite[Theorem 2.2.1]{BrouwerHaemers}):

\begin{theorem}[Perron-Frobenius]\label{PF}
Let $\vec{A}$ be a non-negative real-valued irreducible $N\times N$ matrix. Then the largest eigenvalue $\lambda_{\mathrm{PF}}$ of $\vec{A}$ is real and has geometric and algebraic multiplicity 1. Its associated eigenvector can be chosen to have strictly positive components
and is the only eigenvector with that property.
\end{theorem}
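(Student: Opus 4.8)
The plan is to give a self-contained proof in three movements: a fixed-point argument producing a non-negative eigenvector, a left/right eigenvector pairing controlling the location and multiplicity of $\lambda_{\mathrm{PF}}$, and a positivity bootstrap forcing uniqueness. The combinatorial engine throughout is the observation that irreducibility of a non-negative $\vec A$ forces $(\mathbf{1}+\vec A)^{N-1}$ to have \emph{strictly} positive entries: irreducibility says the directed graph with an edge $i\to j$ whenever $A_{ij}>0$ is strongly connected, so any two indices are joined by a directed path of length at most $N-1$; expanding $(\mathbf{1}+\vec A)^{N-1}=\sum_{m=0}^{N-1}\binom{N-1}{m}\vec A^{m}$ and using $\vec A\ge 0$, the $(i,j)$ entry dominates the non-negative contribution of such a path and is hence positive. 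I will assume $N\ge 2$; the case $N=1$ is immediate.

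For existence, consider the standard simplex $\Delta=\{x\in\mathbb{R}^N : x_i\ge 0,\ \sum_i x_i=1\}$. First I note that $\vec A x\neq 0$ for every $x\in\Delta$: if $\vec A x=0$ then every column of $\vec A$ indexed by the support $\{j:x_j>0\}$ vanishes, but a zero column contradicts strong connectivity. Hence $T(x):=\vec A x/\lVert \vec A x\rVert_1$ is a continuous self-map of the compact convex set $\Delta$, and Brouwer's fixed point theorem yields $w\in\Delta$ with $\vec A w=\lambda w$ and $\lambda=\lVert \vec A w\rVert_1>0$, so $w\ge 0$, $w\neq 0$. Applying the strictly positive matrix $(\mathbf{1}+\vec A)^{N-1}$ to $w$ gives $(1+\lambda)^{N-1}w=(\mathbf{1}+\vec A)^{N-1}w>0$ componentwise, so in fact $w>0$. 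Running the same argument for $\vec A^{\mathrm T}$ (again non-negative and irreducible) produces a strictly positive left eigenvector $u>0$ with $u^{\mathrm T}\vec A=\lambda' u^{\mathrm T}$, $\lambda'>0$.

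I would then pin down $\lambda_{\mathrm{PF}}$ and its multiplicities by pairing $u$ and $w$. Evaluating $u^{\mathrm T}\vec A w$ two ways gives $\lambda(u^{\mathrm T}w)=\lambda'(u^{\mathrm T}w)$, and since $u^{\mathrm T}w>0$ we get $\lambda=\lambda'$. For any (possibly complex) eigenpair $\vec A v=\mu v$, taking entrywise absolute values and using $\vec A\ge 0$ yields $\lvert\mu\rvert\,\lvert v\rvert\le \vec A\lvert v\rvert$; pairing with $u$ and dividing by $u^{\mathrm T}\lvert v\rvert>0$ gives $\lvert\mu\rvert\le\lambda$, so $\lambda$ is the largest eigenvalue and in particular real. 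Geometric simplicity follows by the positivity bootstrap: any real eigenvector $v$ for $\lambda$ not proportional to $w$ can be combined as $w-t v$ with $t$ chosen so that $w-tv\ge 0$ has a vanishing entry; this is a non-negative eigenvector, hence strictly positive by the bootstrap unless it is zero, forcing $v\in\mathrm{span}(w)$ (complex eigenvectors reduce to their real and imaginary parts, legitimate since $\lambda\in\mathbb{R}$). Algebraic simplicity is then the cleanest step: a generalized eigenvector $v$ with $(\vec A-\lambda\mathbf{1})v=w$ would give, upon left-pairing with $u$, the contradiction $0=u^{\mathrm T}(\vec A-\lambda\mathbf{1})v=u^{\mathrm T}w>0$. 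Finally, uniqueness of the positive eigenvector is immediate: the $\lambda$-eigenspace is the line $\mathbb{R}w$, and any strictly positive eigenvector, for whatever eigenvalue $\mu$, satisfies $\mu\,u^{\mathrm T}v=u^{\mathrm T}\vec A v=\lambda\,u^{\mathrm T}v$ with $u^{\mathrm T}v>0$, so $\mu=\lambda$ and $v\in\mathbb{R}w$. The only genuinely delicate points are the continuity input for Brouwer (handled by ruling out zero columns) and the passage from a merely non-negative to a strictly positive eigenvector, both resting on the primitivity of $(\mathbf{1}+\vec A)^{N-1}$; the multiplicity statements, often regarded as the hard part, become routine once the two-sided eigenvectors are in hand.
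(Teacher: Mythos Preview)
The paper does not actually prove this theorem; it merely cites it with a reference to \cite[Theorem~2.2.1]{BrouwerHaemers}. Your proposal therefore cannot be compared to the paper's own proof, but it stands on its own as a correct and essentially standard argument. The main steps --- primitivity of $(\mathbf{1}+\vec A)^{N-1}$, Brouwer on the simplex, the positivity bootstrap, and the left--right eigenvector pairing to control both the spectral radius and the multiplicities --- are all sound. Two small points worth tightening: in the Brouwer step, ruling out $\vec A x=0$ for $x\in\Delta$ via a zero column is fine for $N\ge 2$ (a zero column does contradict irreducibility), but you should note explicitly that a zero column places the matrix in block-upper-triangular form after a permutation; and in the geometric simplicity step, you should remark that a finite endpoint of the interval $\{t:w-tv\ge 0\}$ always exists after possibly replacing $v$ by $-v$, since $v\neq 0$ guarantees at least one strictly positive component. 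Neither of these is a gap, just a matter of completeness.
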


\medskip

We now turn to the proof of Proposition~\ref{TBAuniquenessDynkin}.

\medskip

Set $p=\infty$ and $q=1$. In terms of Proposition \ref{UniqueGroundstateNdim} we have $\vec{C}=\frac{1}{2}\vec{G}$, so that
$\vec{M}=\frac{1}{2}\vec{G}$. 
By the Perron-Frobenius theorem, $\vec{G}$ has an eigenvector $\vec{w}$ with strictly positive components $w_i>0$ associated to its largest eigenvalue $\lambda_{\mathrm{PF}}$. With this choice of $w_i$ the constant 
vector $\boldsymbol{\sigma}$ in 
Proposition~\ref{UniqueGroundstateNdim}
is given by
\be\label{sig}
\boldsymbol\sigma 
= \tfrac12 \vec{G}\vec{w}
= \tfrac12 \lambda_{\mathrm{PF}} \vec{w} \ .
\ee

Let us abbreviate $\Phi(x) := \Phi_{\frac{1}{2}\vec G}(x)$ with components $\phi_{ij}(x)$. 
Due to Lemma \ref{Phinonneg} we know that
$|\phi_{ij}(x)|=\phi_{ij}(x)$ 
for all $x\in\mathbb{R}$ and $i,j=1,\dots,N$. Combining this with \eqref{phiC-infinf-integral} one computes the matrix $\boldsymbol{\rho}$ in 
Proposition~\ref{UniqueGroundstateNdim} to be
\be
\rho_{ij} = \frac{1}{w_i} \int_{-\infty}^\infty\phi_{ij}(y)dy 
= 
\frac{1}{w_i}\big[ (2\,\mathbf{1} - \tfrac{1}{2}\vec{G})^{-1}\big]_{ij} \ .
\ee
Since $\vec{w}$ is an eigenvector of $(2\,\mathbf{1} - \tfrac{1}{2}\vec{G})^{-1}$ we find
\be\label{rhow}
\boldsymbol\rho \cdot \vec{w} = \frac{1}{2-\frac{\lambda_{\mathrm{PF}}}{2}} \cdot (1,1,\dots,1) \ .
\ee
Hence the contraction constant $\kappa$ in 
Proposition~\ref{UniqueGroundstateNdim} is given by
\be\kappa = \left\|\boldsymbol\rho\cdot\boldsymbol\sigma\right\|_\infty  = \frac{\lambda_{\mathrm{PF}}}{2}\max_{i=1,...,N}\left|\sum_{j=1}^N \rho_{ij}w_j\right| = \frac{\lambda_{\mathrm{PF}}}{2}\left|\frac{1}{2-\frac{\lambda_{\mathrm{PF}}}{2}}\right|=\left|\frac{\lambda_{\mathrm{PF}}}{4-\lambda_{\mathrm{PF}}}\right| \ .
\ee
It follows that $\kappa<1$ if and only if $\lambda_{\mathrm{PF}}<2$.

By Proposition~\ref{UniqueGroundstateNdim} there is a unique solution to \eqref{Csystem}, completing the proof of Proposition~\ref{TBAuniquenessDynkin}.

\subsection{Adjacency matrices of graphs}\label{G-is-graph-section}

In Proposition~\ref{TBAuniquenessDynkin}, the matrix $\vec{G}$ may have non-negative real entries. This is in itself interesting because it makes the solution
	$\vec{f}_\star$
depend on an additional set of continuous parameters.
However, in the application to integrable quantum field theory, $\vec{G}$ is usually the \textsl{adjacency matrix} of some (suitably generalised) graph.
Irreducibility is then equivalent to the corresponding generalised graph being strongly connected.

If $\vec{G}$ is symmetric and has entries $\lbrace 0,1\rbrace$ with zero on the diagonal, then by definition it is the adjacency matrix of a simple (undirected, unweighted) graph whose nodes $i$ and $j$ are connected if and only if $G_{ij}=1$. 
In this case, strongly connected and connected are equivalent. 
The only connected simple graphs with $\lambda_{\mathrm{PF}}<2$ are the graphs associated to the $ADE$ Dynkin diagrams
(while their affine versions are the sole examples satisfying $\lambda_{\mathrm{PF}}=2$), and their adjacency matrix is diagonalisable over $\mathbb{R}$ \cite[Thm.\,3.1.3]{BrouwerHaemers}.

Consider now generalised graphs. 
If we allow for loops ($G_{ii}\neq 0$, edges connecting a node to itself) and 
multiple edges between the same nodes (where the entry $G_{ij}$ is the number of edges connecting nodes $i$ and $j$)
we additionally get the tadpole
$T_N=A_{2N}/\mathbb{Z}_2$ (defined as the adjacency matrix of $A_N$ with additional entry $G_{11}=1$).
If, moreover, the symmetry requirement is dropped ($G_{ij}\neq G_{ji}$), we may still associate to $\vec{G}$ a mixed multigraph (with some edges now being replaced by arrows). The $BCFG$ Dynkin diagrams provide examples of this type for which $\lambda_{\mathrm{PF}}<2$ hold. However, in contrast with the undirected (symmetric) case, they are not exhaustive at all. For instance, a directed graph with $\lambda_{\mathrm{PF}}>2$ can be turned into a new directed graph with $\lambda_{\mathrm{PF}}<2$ by
subdividing all its edges often enough.\footnote{We are grateful to Nathan Bowler 
for pointing this out to us.} If $\vec{G}$ is not even assumed to be integer-valued, then every non-example becomes an example after appropriate rescaling.

To summarise, we have the following special cases in which Proposition \ref{TBAuniquenessDynkin} applies:

\begin{corollary} 
If $\vec{G}$ is the adjacency matrix of a finite
Dynkin diagram
($A_N$, $B_N$, $C_N$, $D_N$, $E_6$, $E_7$, $E_8$, $F_4$ or $G_2$)
or of the tadpole $T_N$, 
then the system \eqref{Csystem} has exactly one bounded continuous solution.
\end{corollary}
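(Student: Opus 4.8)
The plan is to deduce the corollary directly from Proposition~\ref{TBAuniquenessDynkin}: since that proposition already establishes existence and uniqueness of a bounded continuous solution to \eqref{Csystem} for every non-negative irreducible $\vec{G}\in\mathrm{Mat}_{<2}(N)$, it suffices to check that each matrix in the list (adjacency matrices of the finite Dynkin diagrams $A_N,B_N,C_N,D_N,E_6,E_7,E_8,F_4,G_2$ and of the tadpole $T_N$) satisfies these three hypotheses.

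Non-negativity and irreducibility are the easy parts. Adjacency matrices of (generalised) graphs with non-negative edge multiplicities have non-negative entries, and the extra diagonal entry $G_{11}=1$ defining $T_N$ is also non-negative. For irreducibility I would invoke the equivalence recalled at the start of Section~\ref{G-is-graph-section}: a non-negative matrix is irreducible iff the associated generalised graph is strongly connected. Every finite Dynkin diagram and every tadpole is a connected graph, and whenever nodes $i,j$ are joined by an edge both $G_{ij}$ and $G_{ji}$ are nonzero (the arrows in the $B,C,F,G$ diagrams alter edge multiplicities but not the presence of an edge in either direction), so the underlying graph is strongly connected and $\vec{G}$ is irreducible.

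The remaining point is $\vec{G}\in\mathrm{Mat}_{<2}(N)$, i.e.\ real diagonalisability together with spectrum in $(-2,2)$. For the $ADE$ diagrams and for $T_N$ the matrix $\vec{G}$ is symmetric (for the tadpole the only modification of the symmetric path matrix $A_N$ is the diagonal entry $G_{11}=1$), hence diagonalisable over $\mathbb{R}$ with real spectrum. For $B_N,C_N,F_4,G_2$ the matrix $\vec{G}$ is symmetrisable — $\vec{D}\vec{G}\vec{D}^{-1}$ is symmetric for a suitable positive diagonal $\vec{D}$, a standard property of the Cartan data of a simple Lie algebra — so $\vec{G}$ is similar to a real symmetric matrix and again diagonalisable over $\mathbb{R}$ with real eigenvalues. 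The bound $\lambda_{\mathrm{PF}}<2$ on the largest eigenvalue is the classical characterisation of finite type and holds for all these graphs, as already noted in Section~\ref{G-is-graph-section}. For the lower bound: in the $ADE$ case, and in the $BCFG$ case after passing to the symmetrised (positively weighted) adjacency matrix, the underlying graph is a tree and hence bipartite, so the spectrum is symmetric about $0$ and thus contained in $[-\lambda_{\mathrm{PF}},\lambda_{\mathrm{PF}}]\subset(-2,2)$; for the tadpole one instead uses the folding $T_N=A_{2N}/\mathbb{Z}_2$, whose eigenvalues form a subset of the eigenvalues $2\cos\!\big(\tfrac{j\pi}{2N+1}\big)$, $j=1,\dots,2N$, of $A_{2N}$, all lying in $(-2,2)$. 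Alternatively one may simply quote the classical spectral description of these graphs, e.g.\ \cite[Ch.\,3]{BrouwerHaemers}.

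Having verified all three hypotheses, Proposition~\ref{TBAuniquenessDynkin} gives the conclusion. The only mildly delicate step is the lower spectral bound $\lambda_{\min}>-2$, because positive-definiteness of the Cartan-type matrix $2\,\mathbf{1}-\vec{G}$ only yields $\lambda_{\max}<2$; this is precisely why the bipartiteness argument (for the Dynkin diagrams) and the $A_{2N}$-folding (for the tadpole) are invoked. Everything else is routine bookkeeping, so I would keep the proof of the corollary to a couple of lines, relegating the spectral facts to citations of standard references.
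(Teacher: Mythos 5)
Your proposal is correct and follows the same route as the paper: the corollary is obtained by checking that the listed adjacency matrices are non-negative, irreducible (connectedness) and lie in $\mathrm{Mat}_{<2}(N)$, and then invoking Proposition~\ref{TBAuniquenessDynkin}. You are in fact somewhat more careful than the paper, which relegates the spectral facts to the discussion preceding the corollary and a citation of \cite[Thm.\,3.1.3]{BrouwerHaemers}; your explicit treatment of the lower bound $\lambda_{\min}>-2$ (via bipartiteness for the tree-shaped diagrams and the $A_{2N}$-folding for $T_N$) and of real diagonalisability of the symmetrisable $BCFG$ cases correctly fills in the details the paper leaves implicit.
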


\subsection{The case of a  single TBA equation}\label{N=1-example-section}

Consider a single integral equation of TBA-type ($N=1$ with 
	$G_{11} = g\in(0,2)$, 
$C_{11} = c\in(-2,2)$): 
\be\label{TBAforN1}
f(x)=\int_{-\infty}^{\infty} \phi_c(x-y)\left[g\log\left(e^{-a(y)}+e^{f(y)}\right)-c f(y)\right] dy
\ee
This case is instructive: we do not have to worry about the choice of $p$ and $q$, nor does a rescaling $f(x)\rightarrow f(x)/w$ influence the contraction constant.
We compute the quantities in Proposition~\ref{UniqueGroundstateNdim} to be
\begin{align}
&\rho = \int_{-\infty}^\infty \phi_c(y)dy = \frac{1}{2-c}
\quad , \qquad
\sigma = M = \max(|c|, |g-c|) \ ,\nonumber\\
&\kappa(c) = \rho\sigma =\frac{\max(|c|, |g-c|)}{2-c} \ .
\end{align} 
The most important case is $g=1$ (this case arises for example in the Yang-Lee model). The contraction constant for this case is shown in Figure~\ref{kappaofc}.

\begin{figure}
\begin{center}
\includegraphics[width=10cm]{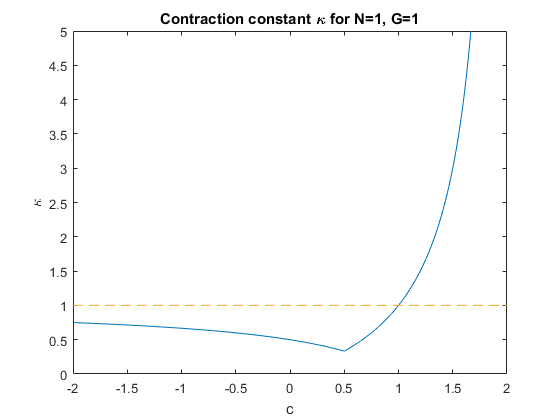}
\caption{Contraction constant for a single TBA equation at different values of $c$.}
\label{kappaofc}
\end{center}
\end{figure}

For $c<1$, our estimate guarantees a contraction. 
The canonical TBA equation with $c=g=1$ corresponds to the marginal case $\kappa=1$, whereas the universal TBA with $c=0$ yields $\kappa=\tfrac{1}{2}$. The ``sweet spot'' is $c=\tfrac{1}{2}=\tfrac{1}{2}g$ where our estimate for the contraction constant
attains its minimum $\kappa=\tfrac{1}{3}$.

If we chose a different value $g\in(0,2)$, the sweet spot shifts, but the overall picture remains the same (the region of assured contraction is $c<1$). However, for $g\geq 2$, $\kappa$ is larger or equal to one everywhere, and no region of assured contraction exists.

\begin{remark}\label{kappa-is-not-optimal}
It is noteworthy that $\kappa$ has virtually no practical bearing on the speed of convergence of the iterative numerical solution of equation \eqref{TBAforN1}. We solved it numerically for $g=1$, $a(x)=r\cosh(x)$ (the massive Yang-Lee model in volume $r$) for different values of $c\in[0,1]$ and $r\in(0,1]$, and the speed of convergence
(as measured by the number of iterations required to obtain a certain accuracy) increases almost linearly in $c$, instead of being governed by $\kappa$.
If $\kappa$ were the optimal contraction constant, we would instead expect to see the fastest convergence for $c=\tfrac12$.
\end{remark}

We now describe in more detail the proof given in 
\cite[Sec.\,5]{FringKorffSchulz}, which concerns the case $g=c=1$ and $a(x)=r\cosh(x)$ with $r>0$,
in the above example.
As described above, our bound on $\kappa$ in this case is $1$, so that Proposition~\ref{UniqueGroundstateNdim} does not apply. We circumvent this by using $c=\tfrac12$ and proving (in Section~\ref{section:uniquenessY}, see Theorem~\ref{main-theorem-TBA} from the introduction) that the fixed point is independent of $c$ and unique in the range
$c \in (-2,2)$.
The argument of \cite{FringKorffSchulz} also uses the Banach Theorem but proceeds differently.

Namely, they exhaust the space of bounded continuous functions by subspaces with specific bounds, 
	$BC(\mathbb{R},\mathbb{R})=\bigcup_{q \in [e^{-r},1)} D_{q,r}$,
where 
\be
D_{q,r}:=\Big\lbrace f\in BC(\mathbb{R},\mathbb{R})\,\Big| \left\|f\right\|_\infty \leq \log(\tfrac{q}{1-q})+r\Big\rbrace\ .
\ee 
Using convexity of $D_{q,r}$ it is shown that 
	for $q \ge e^{-r}$
the corresponding integral operator maps $D_{q,r}$ to itself, and that the associated contraction constant is $\kappa=q$.
Hence, as expected one obtains $\kappa\rightarrow 1$ when $q\rightarrow 1$ (so that $\log(\tfrac{q}{1-q}) \to \infty$), but on each $D_{q,r}$ we have $\kappa<1$ and thus a proper contraction. This implies a unique solution on the whole space $BC(\mathbb{R},\mathbb{R})$.

It is also stated in \cite{FringKorffSchulz} that the generalisation to higher $N$ should be straightforward. This is less clear to us. In the examples from finite Dynkin diagrams we need all the extra freedom we introduce in Proposition~\ref{UniqueGroundstateNdim} in an optimal way in order to press our bound $\kappa$ under 1. If we take, for example, $\vec G = \vec C$ to be the adjacency matrix of the $A_N$ Dynkin diagram, 
the quantities in Proposition~\ref{UniqueGroundstateNdim} take the values
\be
	M_{ij} = G_{ij} 
	~~,\quad
	\rho_{ij} = \left[(A_N)^{-1}\right]_{ij}
	~~,\quad
	\sigma_i = 
	\begin{cases} 1 &; i = 1,N \\ 2^{1/q} &; i = 2, \dots, N-1 \end{cases}
	\quad ,
\ee
where $A_N$  denotes the Cartan matrix $2 \mathbf{1} - \vec G$. With some more work, from this one can estimate that for $N$ large enough one has
$\left\|\boldsymbol\rho\cdot\boldsymbol\sigma\right\|_p > \tfrac18 N^2$ (independent of $p$ and $q$).
Hence, as opposed to what happened at $N=1$, for larger $N$ the bound $\kappa$ is not 1 but grows 
	at least
quadratically with $N$.
It is not obvious to us how to obtain drastically better estimates by choosing subsets analogous to $D_{q,r}$ of $BC(\mathbb{R},\mathbb{R}^N)$.

However, it might be possible -- at least in the massive case, cf.\ Remark~\ref{remark-constantY}\,\ref{spectralradius2isbad} -- that one can combine the freedom to choose $\vec C$ and $w_i$ that we introduce with the method of \cite{FringKorffSchulz} to extend our results to the case of 
spectral radius 2 or to 
infinitely many coupled TBA equations (such as the $N\to\infty$ limit of classical finite Dynkin diagrams, where the spectral radius approaches 2). The idea of choosing subsets $D_{q,r}$ as above might push $\kappa$ strictly below 1.
We hope to return to these points in the future.

\section{Uniqueness of solution to the Y-system}
\label{section:uniquenessY}

For $\vec{G}\in\mathrm{Mat}(N,\mathbb{R})$, $\vec{C}\in\mathrm{Mat}_{<2}(N)$
and $\vec{a}\in BC_-(\mathbb{R},\mathbb{R})^N$ we define the map $\vec{L}_{\vec{C}} : BC(\mathbb{R},\mathbb{R})^N \to BC(\mathbb{R},\mathbb{R})^N$ as (recall the convention in \eqref{functions-of-a-vector-convention})
\be\label{LC[f]-def}
\vec{L}_{\vec{C}}[\vec{f}](x):=\vec{G}\cdot\log\left(e^{-\vec{a}(x)}+e^{\vec{f}(x)}\right)-\vec{C}\cdot\vec{f}(x) \ .
\ee
With this notation, the TBA equation \eqref{TBAintro} reads
\be\label{TBAshort}
\vec{f}(x) = \left(\Phi_{\vec{C}}\star \vec{L}_{\vec{C}}[\vec{f}]\right)(x)\ .
\ee
	From Proposition~\ref{TBAuniquenessDynkin} we know that 
\eqref{TBAshort} has a unique bounded continuous solution for one special choice of $\vec{C}$, namely $\vec{C}=\frac{1}{2}\vec{G}$. In this section we will apply the results of section \ref{section-soldefeqn} in order to translate that statement to other choices of $\vec{C}$, as well as to the associated Y-system.
In particular, we will prove Theorems~\ref{main-theorem-Y}, \ref{main-theorem-TBA} and Corollary~\ref{unique-constant-sol} from the introduction.

\subsection{Independence of the choice of $\vec C$}

In this subsection we fix
\be
	\vec{G}\in\mathrm{Mat}(N,\mathbb{R}) \quad , \qquad
	\vec{C}\in\mathrm{Mat}_{<2}(N) \quad , \qquad
	\vec{a}\in BC_-(\mathbb{R},\mathbb{R})^N \ .
\ee
We stress that for the moment, we make no further assumptions on $\vec G$ (as opposed to Theorems~\ref{main-theorem-Y} and \ref{main-theorem-TBA}).

We will later need to apply Proposition~\ref{FuncRelToNLIE} to \eqref{TBAshort}. To this end we now provide a criterion for the components of $\vec{L}_{\vec{C}}[\vec{f}]$ to be H\"older continuous. 

\begin{lemma} 
	Let $\vec{f}\in BC(\mathbb{R},\mathbb{R})^N$.
If the components of $\vec{f}$ and of $e^{-\vec{a}}$ are H\"older continuous, then the components of $\vec{L}_{\vec{C}}[\vec{f}]$ are H\"older continuous.
\end{lemma}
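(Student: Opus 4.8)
The plan is to reduce the statement to the composition of three facts: (1) H\"older continuity is preserved under finite sums and under multiplication by matrices with constant entries, so it suffices to treat each component function $L_i(x) = \sum_{j} G_{ij}\log\left(e^{-a_j(x)}+e^{f_j(x)}\right) - \sum_j C_{ij} f_j(x)$ summand by summand; (2) the linear term $\sum_j C_{ij} f_j(x)$ is H\"older continuous because each $f_j$ is H\"older continuous by hypothesis and a finite linear combination of H\"older functions is again H\"older (with the minimum of the exponents); (3) the genuinely nonlinear piece $x \mapsto \log\left(e^{-a_j(x)}+e^{f_j(x)}\right)$ is H\"older continuous, which is the only point requiring an argument.

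For step (3) the first move is to rewrite the argument of the logarithm as $e^{f_j(x)}\left(e^{-a_j(x)-f_j(x)}+1\right)$, so that $\log\left(e^{-a_j(x)}+e^{f_j(x)}\right) = f_j(x) + \log\left(1+e^{-a_j(x)}e^{-f_j(x)}\right)$. The term $f_j(x)$ is H\"older by hypothesis, so I am left with $\log\left(1+u(x)\right)$ where $u(x) := e^{-a_j(x)}e^{-f_j(x)}$. Now $e^{-f_j(x)}$ is bounded above and below by positive constants since $f_j$ is bounded (it lies in $BC(\mathbb{R},\mathbb{R})$), and it is H\"older continuous because $f_j$ is H\"older and $t \mapsto e^{-t}$ is Lipschitz on the bounded range of $f_j$. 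Meanwhile $e^{-a_j(x)}$ is H\"older continuous by the hypothesis of the lemma and is bounded below by $0$ and above by some constant (it lies in $BC(\mathbb{R},\mathbb{R})$, being continuous; boundedness above follows from $a_j$ bounded below). A product of two bounded H\"older continuous functions is H\"older continuous (the standard estimate $|u(x)v(x)-u(y)v(y)| \le \|u\|_\infty|v(x)-v(y)| + \|v\|_\infty|u(x)-u(y)|$), so $u$ is bounded and H\"older continuous. Finally $t \mapsto \log(1+t)$ is Lipschitz on the bounded set $\{u(x) : x \in \mathbb{R}\} \subset [0,\infty)$ since its derivative $1/(1+t)$ is bounded there, and composing a Lipschitz function on a bounded interval with a bounded H\"older function yields a H\"older function with the same exponent.

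The only mild subtlety — and the closest thing to an obstacle — is bookkeeping about exponents and the boundedness needed at each composition: H\"older continuity with a fixed $\alpha$ is only stable under composition with \emph{locally} Lipschitz outer functions when the inner function has bounded range, so at each step I must record that the relevant function is bounded (which it is, since all the functions in play are either in $BC(\mathbb{R},\mathbb{R})$ or obtained by applying continuous functions to such). Once that is tracked, collecting the summands and taking $\alpha$ to be the minimum of the finitely many exponents that arise completes the proof; since the number of summands is finite ($N$ terms for each $i$) there is no difficulty in controlling constants uniformly. I would write this up as a short chain of three observations (linear part, reduction of the nonlinear part, product-and-composition argument) rather than a long computation.
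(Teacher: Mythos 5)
Your proof is correct and follows essentially the same route as the paper's: both reduce the claim to the stability of bounded H\"older continuity under sums, products, and composition with functions that are Lipschitz on the relevant bounded range, the only point of substance being that the argument of the logarithm stays in a compact subinterval of $(0,\infty)$. Your factorisation $\log\left(e^{-a_j}+e^{f_j}\right)=f_j+\log\left(1+e^{-a_j-f_j}\right)$ makes the lower bound automatic, whereas the paper bounds $e^{-a_j(x)}+e^{f_j(x)}\geq e^{-\|f_j\|_\infty}$ directly; this is only a cosmetic difference.
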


\begin{proof}
It is easy to see that the composition of H\"older continuous functions is again H\"older continuous, as is the sum of bounded H\"older continuous functions.
Therefore, and since $x\mapsto e^x$ is H\"older continuous on any compact subset of $\mathbb{R}$ -- in particular on the images of the bounded functions $f_m$ -- the functions $x\mapsto u_m(x):=e^{-a_m(x)}+e^{f_m(x)}$ are H\"older continuous. The bounds on $a_m$ and $f_m$ ensure that the image of $u_m$ is contained in some interval $[x_0,x_1]$ with $x_0,x_1>0$.
But $x\mapsto \log(x)$ is H\"older continuous on $[x_0,x_1]$, and so the functions $x\mapsto l_m(x):=\log(u_m(x))$ are 
	bounded and
H\"older continuous. From this it follows that the components of $\vec{L}_{\vec{C}}[\vec{f}](x)$,
\be
x \,\longmapsto\, \left[\,\vec{L}_{\vec{C}}[\vec{f}]\,\right]_n (x)
= \sum_{m=1}^N \left(G_{nm}l_m(x)-C_{nm}f_m(x)\right) \ ,
\ee
are H\"older continuous.
\end{proof}

We are careful not to make too strong assumptions on $\vec{a}$ here, namely we do not require the components of $\vec{a}$ to be H\"older continuous. For instance, the relevant example $a_m(x) \sim e^{\gamma x/s}$ from \eqref{asymptoticexample} is only locally H\"older continuous.
Meanwhile, the H\"older condition on $\vec{f}$ is, in fact, obtained from the TBA equation for free:

\begin{lemma} \label{alwayshoelder} Suppose $\vec{f}\in BC(\mathbb{R},\mathbb{R})^N$ is a solution of the TBA equation \eqref{TBAshort}. Then the components of $\vec{f}$ are 
Lipschitz continuous.
\end{lemma}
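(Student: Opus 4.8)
The plan is to show that a solution $\vec f$ of \eqref{TBAshort} is Lipschitz by differentiating under the integral sign, after moving the $x$-dependence entirely onto the kernel. Writing out \eqref{TBAshort} componentwise and using Lemma~\ref{Phiproperties}\,\ref{matrixelements} to expand $[\Phi_{\vec C}]_{nm}$ as a real linear combination of one-dimensional Green's functions $\phi_{d_j}$, we see that each $f_n$ is a finite linear combination of convolution integrals $F_{d}[g_m](x)$ of the form \eqref{Fd-convolution-def}, where the $g_m$ are the components of $\vec L_{\vec C}[\vec f]$. Since $\vec f$ is bounded and continuous and the components of $e^{-\vec a}$ are bounded from below, $\vec L_{\vec C}[\vec f] \in BC(\mathbb{R},\mathbb{R})^N$; this is exactly the input needed for the machinery of Section~\ref{section:phiconvolution}. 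The key tool is \eqref{Fswapdiffint}, which allows one to differentiate $F_d[g_m]$ under the integral, and Lemma~\ref{DerivConvBounded} with $n=1$, which gives that $\frac{d}{dz}F_d[g_m](z)$ is bounded on any strip $\mathbb{S}_Y$ with $0<Y<s$; in particular it is bounded on $\mathbb{R}$. Therefore each $f_n$ is continuously differentiable on $\mathbb{R}$ with bounded derivative, hence Lipschitz continuous by the mean value theorem.

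More explicitly, first I would fix a solution $\vec f \in BC(\mathbb{R},\mathbb{R})^N$ and note that, by the preceding lemma's argument, the components of $\vec L_{\vec C}[\vec f]$ are at least bounded and continuous (the full Hölder statement is not needed here — only boundedness and continuity). Then I would invoke Lemma~\ref{Phiproperties}\,\ref{matrixelements} to write
\be
f_n(x) = \sum_{j=1}^N \sum_{m=1}^N \Omega_{nm}^j \int_{-\infty}^\infty \phi_{d_j}(x-y)\, [\vec L_{\vec C}[\vec f]]_m(y)\, dy
= \sum_{j,m} \Omega_{nm}^j\, F_{d_j}\big[[\vec L_{\vec C}[\vec f]]_m\big](x) \ .
\ee
Each summand is a convolution of the type studied in Section~\ref{section:phiconvolution}. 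By Corollary~\ref{PhiConvgAnalytic} these are analytic on $\mathbb{S}_s$, by \eqref{Fswapdiffint} we may differentiate under the integral, and by Lemma~\ref{DerivConvBounded} (applied with $n=1$ and, say, $Y=s/2$) the derivatives are bounded on $\mathbb{R} \subset \mathbb{S}_{s/2}$. Summing finitely many such bounded derivatives, $f_n'$ is bounded on $\mathbb{R}$, so $|f_n(x)-f_n(y)| \le \big(\sup_{\mathbb{R}}|f_n'|\big)\,|x-y|$.

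I do not expect a genuine obstacle here — the only points requiring a little care are bookkeeping ones: checking that $\vec L_{\vec C}[\vec f]$ genuinely lands in $BC(\mathbb{R},\mathbb{R})^N$ so that the results of Section~\ref{section:phiconvolution} apply (this uses $\vec a \in BC_-(\mathbb{R},\mathbb{R})^N$ to keep $e^{-\vec a}+e^{\vec f}$ bounded away from $0$ and from above, hence $\log$ of it bounded and continuous), and being clear that Lemma~\ref{DerivConvBounded} only needs boundedness/continuity of the convolved function, not Hölder continuity. The restriction to an interior strip $\mathbb{S}_Y$ with $Y<s$ is harmless since we only evaluate on the real line. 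If one wanted to, the same argument with higher $n$ would show $\vec f$ is in fact $C^\infty$ on $\mathbb{R}$, but Lipschitz is all that is needed to feed the Hölder hypothesis of Proposition~\ref{FuncRelToNLIE} in the next step.
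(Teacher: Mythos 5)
Your proposal is correct and follows essentially the same route as the paper: write each $f_n$ via Lemma~\ref{Phiproperties}\,\ref{matrixelements} as a finite real linear combination of convolutions $F_d\big[[\vec L_{\vec C}[\vec f]]_m\big]$, invoke Lemma~\ref{DerivConvBounded} to bound the derivative on the real line, and conclude Lipschitz continuity via the mean value theorem (the paper cites Lemma~\ref{pLipschitz} for this last step). Your extra remarks — that only boundedness and continuity of $\vec L_{\vec C}[\vec f]$ are needed here, not H\"older continuity, and that one must check $\vec L_{\vec C}[\vec f]\in BC(\mathbb{R},\mathbb{R})^N$ — are accurate bookkeeping that the paper leaves implicit.
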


\begin{proof}
Using Lemma~\ref{Phiproperties}\,\ref{matrixelements} one quickly verifies that one can 
write the components of $\vec{f}$ as real linear combinations
of the form
\be
f_n(x)=\sum_{d,m} c_{d,m}^{(n)}F_d\big[\left[\,\vec{L}_{\vec{C}}[\vec{f}]\,\right]_m \big](x) \ ,
\ee 
where $F_d[-]$ is the convolution functional defined in \eqref{Fd-convolution-def}, see also \eqref{phi*g-via-Fd}.
Lemma~\ref{DerivConvBounded} then implies that the derivatives $\frac{d}{dx}f_n(x)$ are bounded. 
	This shows the claim (cf.\ Lemma~\ref{pLipschitz}).
\end{proof}

Now we are in the position to apply Proposition~\ref{FuncRelToNLIE} to \eqref{TBAshort}.
	The $\vec{C}$-independence will boil down to the following simple observation on the functional equation \eqref{ffuncrel}: the $\vec{C}$-dependence on the left and right hand side of
\be\label{ffC=LC-v1}
\vec{f}(x+is)+\vec{f}(x-is) -\vec{C}\cdot\vec{f}(x) = \vec{L}_{\vec{C}}[\vec{f}](x) 
\ee
simply cancels, see \eqref{LC[f]-def}. Thus \eqref{ffC=LC-v1} is in particular equivalent to
\be\label{ffC=LC-v2}
\vec{f}(x+is)+\vec{f}(x-is) = \vec{L}_{\vec{0}}[\vec{f}](x) 
\ee

\begin{proposition}\label{uniquenessFsystem}
Suppose that the components of $e^{-\vec{a}}$ are H\"older 
continuous
	and that 
there exists $\vec{C}\in\mathrm{Mat}_{<2}(N)$, such that the TBA equation
\eqref{TBAshort}
has a unique solution $\vec{f}_\star$ in $BC(\mathbb{R},\mathbb{R})^N$. 
Then:
\begin{enumerate}[label=\roman*)]
\item \label{TBAtoF} 
$\vec{f}_\star$ is real analytic and can be continued to a function in $\mathcal{BA}(\mathbb{S}_s)^N$, which we also denote by $\vec{f}_\star$. It is the unique solution to the functional equation
\be\label{funeq}
\vec{f}(x+is)+\vec{f}(x-is) = \vec{L}_{\vec 0}[\vec{f}](x) \quad , \qquad x\in\mathbb{R} \ ,
\ee
in the space $\mathcal{BA}(\mathbb{S}_s)^N$
	which also satisfies
	$\vec{f}(\mathbb{R})\subset\mathbb{R}^N$.
\item \label{indepC} For any $\vec{C}'\in\mathrm{Mat}_{<2}(N)$, 
$\vec{f}_\star$ 
is the unique solution to the TBA equation
\be\label{Cprimetba}
\vec{f}(x) = \left(\Phi_{\vec{C}'}\star \vec{L}_{\vec{C}'}[\vec{f}]\right)(x)
\ee
in the space $BC(\mathbb{R},\mathbb{R})^N$.
\end{enumerate}
\end{proposition}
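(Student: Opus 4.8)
The plan is to bootstrap the regularity of $\vec{f}_\star$ from the integral equation and then use Proposition~\ref{FuncRelToNLIE} as a dictionary between the convolution equation \eqref{TBAshort} — for \emph{any} matrix in $\mathrm{Mat}_{<2}(N)$ — and the single functional equation \eqref{funeq}, throughout exploiting that the matrix drops out between \eqref{ffC=LC-v1} and \eqref{ffC=LC-v2}.

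For part~\ref{TBAtoF} I would first note that, since $\vec{f}_\star$ solves \eqref{TBAshort}, Lemma~\ref{alwayshoelder} makes its components Lipschitz, hence H\"older, continuous; combined with the assumed H\"older continuity of the components of $e^{-\vec{a}}$ and the H\"older-continuity lemma stated just above Lemma~\ref{alwayshoelder}, this makes the components of $\vec{g}:=\vec{L}_{\vec{C}}[\vec{f}_\star]$ bounded and H\"older continuous. Then $\vec{f}_\star$ and $\vec{g}$ satisfy statement~\ref{ffg-inteq} of Proposition~\ref{FuncRelToNLIE}, the H\"older hypothesis there is met, and statement~\ref{ffg-funrel} follows: $\vec{f}_\star$ is real analytic, extends to $\mathcal{BA}(\mathbb{S}_s)^N$, and satisfies $\vec{f}_\star(x+is)+\vec{f}_\star(x-is)-\vec{C}\cdot\vec{f}_\star(x)=\vec{L}_{\vec{C}}[\vec{f}_\star](x)$, which by the cancellation of $\vec{C}$ (equivalence of \eqref{ffC=LC-v1} and \eqref{ffC=LC-v2}) is precisely \eqref{funeq}; moreover $\vec{f}_\star(\mathbb{R})\subset\mathbb{R}^N$ by construction. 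For the uniqueness in~\ref{TBAtoF} I would take any $\vec{f}\in\mathcal{BA}(\mathbb{S}_s)^N$ solving \eqref{funeq} with $\vec{f}(\mathbb{R})\subset\mathbb{R}^N$; its restriction to $\mathbb{R}$ lies in $BC(\mathbb{R},\mathbb{R})^N$, hence so does $\vec{g}:=\vec{L}_{\vec{C}}[\vec{f}]$, and rewriting \eqref{funeq} through \eqref{LC[f]-def} produces the functional equation \eqref{ffuncrel} with right-hand side $\vec{g}$, i.e.\ statement~\ref{ffg-funrel}. The implication \ref{ffg-funrel}$\Rightarrow$\ref{ffg-inteq} (which needs no H\"older hypothesis) then gives $\vec{f}=\Phi_{\vec{C}}\star\vec{L}_{\vec{C}}[\vec{f}]$, so $\vec{f}|_{\mathbb{R}}$ solves \eqref{TBAshort}; by the assumed uniqueness of that solution $\vec{f}|_{\mathbb{R}}=\vec{f}_\star|_{\mathbb{R}}$, and hence $\vec{f}=\vec{f}_\star$ on all of $\mathbb{S}_s$ by uniqueness of analytic continuation.

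For part~\ref{indepC} I would fix $\vec{C}'\in\mathrm{Mat}_{<2}(N)$ and argue in the same spirit. Existence: part~\ref{TBAtoF} already gives $\vec{f}_\star\in\mathcal{BA}(\mathbb{S}_s)^N$ satisfying \eqref{funeq}; subtracting $\vec{C}'\cdot\vec{f}_\star(x)$ from both sides turns this into statement~\ref{ffg-funrel} of Proposition~\ref{FuncRelToNLIE} for the matrix $\vec{C}'$ with the bounded continuous function $\vec{g}':=\vec{L}_{\vec{C}'}[\vec{f}_\star]$, whence \ref{ffg-funrel}$\Rightarrow$\ref{ffg-inteq} yields $\vec{f}_\star=\Phi_{\vec{C}'}\star\vec{L}_{\vec{C}'}[\vec{f}_\star]$, i.e.\ \eqref{Cprimetba}. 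Uniqueness: if $\vec{f}\in BC(\mathbb{R},\mathbb{R})^N$ solves \eqref{Cprimetba}, the argument of Lemma~\ref{alwayshoelder} — which uses only $\vec{C}'\in\mathrm{Mat}_{<2}(N)$, via Lemmas~\ref{Phiproperties}\,\ref{matrixelements} and~\ref{DerivConvBounded} — shows its components are Lipschitz, so $\vec{g}':=\vec{L}_{\vec{C}'}[\vec{f}]$ has bounded H\"older-continuous components, and \ref{ffg-inteq}$\Rightarrow$\ref{ffg-funrel} (now invoking the H\"older hypothesis) shows $\vec{f}$ extends to $\mathcal{BA}(\mathbb{S}_s)^N$, is real on $\mathbb{R}$, and satisfies \eqref{funeq} after the $\vec{C}'$-cancellation; the uniqueness part of~\ref{TBAtoF} then forces $\vec{f}=\vec{f}_\star$.

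I expect no single step to be deep; the main obstacle is the bookkeeping — verifying at each application of Proposition~\ref{FuncRelToNLIE} that its hypotheses hold (in particular that H\"older continuity of the right-hand side is needed only for the implication \ref{ffg-inteq}$\Rightarrow$\ref{ffg-funrel}, and that it is always supplied by Lemma~\ref{alwayshoelder} together with the H\"older-continuity lemma), and making the matrix cancellation between \eqref{ffC=LC-v1} and \eqref{ffC=LC-v2} precise, so that the functional equation one lands on is invariably the fixed, matrix-independent equation \eqref{funeq}.
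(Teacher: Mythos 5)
Your proposal is correct and follows essentially the same route as the paper's proof: Lemma~\ref{alwayshoelder} plus the H\"older hypothesis on $e^{-\vec{a}}$ to make $\vec{L}_{\vec{C}}[\vec{f}_\star]$ H\"older continuous, then Proposition~\ref{FuncRelToNLIE} applied in both directions, with the $\vec{C}$-cancellation reducing everything to the single equation \eqref{funeq}. No gaps.
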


\begin{proof}
\ref{TBAtoF} By definition, the components of $\vec{g}_\star(x):=\vec{L}_{\vec{C}}[\vec{f}_\star](x)$ are bounded real functions. Moreover, Lemma~\ref{alwayshoelder} ensures that they are Lipschitz continuous, so in particular H\"older continuous. It follows from 
	direction $\ref{ffg-inteq} \Rightarrow \ref{ffg-funrel}$ in
Proposition~\ref{FuncRelToNLIE} that $\vec{f}_\star\in\mathcal{BA}(\mathbb{S}_s)^N$, and that $\vec{f}_\star$ satisfies the functional relation
\be\label{auxfuneq}
\vec{f}_\star(x+is)+\vec{f}_\star(x-is) - \vec{C}\cdot \vec{f}_\star(x) = \vec{L}_{\vec{C}}[\vec{f}_\star](x)
\ee
for all $x\in\mathbb{R}$,
	which, as we just said, is equivalent to \eqref{funeq}.

Now suppose there is another solution $\vec{f}'_\star\in\mathcal{BA}(\mathbb{S}_s)^N$ to
	\eqref{funeq}, or, equivalently, \eqref{auxfuneq}, which satisfies $\vec{f}'_\star(\mathbb{R})\subset\mathbb{R}^N$.
	By direction $\ref{ffg-funrel} \Rightarrow \ref{ffg-inteq}$
of Proposition~\ref{FuncRelToNLIE}, the restriction $\vec{f}'_\star\big|_\mathbb{R}$ is also a solution to the TBA equation \eqref{TBAshort}. 
	By our uniqueness assumption, we must have $\vec{f}'_\star\big|_\mathbb{R} = \vec{f}_{\star}\big|_\mathbb{R}$, and by uniqueness of the analytic continuation also $\vec{f}'_\star = \vec{f}_{\star}$ on $\overline{\mathbb{S}}_s$.

\medskip

\noindent
\ref{indepC} For any choice of $\vec{C}'\in\mathrm{Mat}_{<2}(N)$,
\eqref{auxfuneq} can be rewritten as 
\be \label{ffC=LC-v3}
	\vec{f}_\star(x+is)+\vec{f}_\star(x-is) - \vec{C}'\cdot \vec{f}_\star(x) = \vec{L}_{\vec C'}[\vec{f}_\star](x) \ .
\ee
Direction $\ref{ffg-funrel} \Rightarrow \ref{ffg-inteq}$ of Proposition~\ref{FuncRelToNLIE} shows that $\vec{f}_\star$ satisfies \eqref{Cprimetba}.
	Suppose $\vec{f}'_\star \in BC(\mathbb{R},\mathbb{R})^N$ is another solution to \eqref{Cprimetba}. Then by Lemma~\ref{alwayshoelder}, $\vec{f}'_\star$ is H\"older continuous, and by direction $\ref{ffg-inteq} \Rightarrow \ref{ffg-funrel}$ of Proposition~\ref{FuncRelToNLIE} it satisfies \eqref{ffC=LC-v3}. But \eqref{ffC=LC-v3} is equivalent to \eqref{funeq}, whose solution is unique and equal to $\vec f_\star$ by part \ref{TBAtoF}.
\end{proof}

\subsection{Proofs of Theorems \ref{main-theorem-Y}, \ref{main-theorem-TBA} and of Corollary~\ref{unique-constant-sol}}

We now turn to the proof of the main results of this paper. We start with
	part of 
Theorem~\ref{main-theorem-TBA}, as this is used in the proof of Theorem \ref{main-theorem-Y}. 
	Then we show Theorem~\ref{main-theorem-Y} and subsequently the missing part of Theorem~\ref{main-theorem-TBA}.
Finally, we give the proof of Corollary~\ref{unique-constant-sol}.

\subsubsection*{Proof of Theorem \ref{main-theorem-TBA}, less uniqueness in part \ref{mainTBAsolvesY}}

\noindent 
{\em Part \ref{mainTBAunique}:} First note that the assumptions in Proposition~\ref{TBAuniquenessDynkin} are satisfied. Thus there exists a unique solution $\vec f_\star$ to \eqref{TBAintro} in $BC(\mathbb{R},\mathbb{R})^N$ for the specific choice $\vec G = \tfrac12 \vec C$. 
Therefore, the conditions of Proposition~\ref{uniquenessFsystem} are satisfied and part \ref{indepC} of that proposition establishes 
	existence and uniqueness of a solution $\vec f_\star \in BC(\mathbb{R},\mathbb{R})^N$ to \eqref{TBAintro} for any choice of $\vec C \in \mathrm{Mat}_{<2}(N)$, as well as
$\vec C$-independence
of that solution.

\medskip

\noindent 
{\em Part \ref{mainTBAsolvesY} (without uniqueness):} By Proposition~\ref{uniquenessFsystem}\,\ref{TBAtoF}, $\vec{f}_\star$ can be analytically continued to a function in $\mathcal{BA}(\mathbb{S}_s)^N$, which we will also denote by $\vec{f}_\star$.
We define
\be\label{Y-via-f*-proof}
	\vec{Y}(z):=\exp\!\big(\,\vec{a}(z)+\vec{f}_\star(z)\,\big) \ .
\ee
Let us denote the components of $\vec{f}_\star$ and $\vec{Y}$ by 
$f_{\star,n}$ and $Y_{n}$, respectively. Note that $Y_n \in \mathcal{A}(\mathbb{S}_s)$.

It is immediate that the $Y_n$ satisfy properties \ref{Yproperties:real}--\ref{Yproperties:asymptotics} of Theorem~\ref{main-theorem-Y} (for property \ref{Yproperties:real} note that $f_{\star,n}$ and $a_n$ are real-valued on the real axis).
Furthermore, 
as a consequence of the functional relations \eqref{linear-eqn-for-asympt} and \eqref{funeq} for $\vec{a}$ and $\vec{f}_\star$ respectively, the $Y_n$ solve \eqref{Y}:
\begin{align}
 Y_{n}(x+is)Y_{n}(x-is) &= e^{a_n(x+is)+a_n(x-is)}e^{f_{\star,n}(x+is)+f_{\star,n}(x-is)}\nonumber\\
 &= e^{\sum_m G_{nm}a_m(x)}e^{\sum_m G_{nm}\log\left(e^{-a_m(x)}+e^{f_{\star,m}(x)}\right)} \nonumber\\
 &= \prod_m e^{G_{nm}a_m(x)}\left(e^{-a_m(x)}+e^{f_{\star,m}(x)}\right)^{G_{nm}} \nonumber\\
 &= \prod_m \left(1 + Y_{m}(x) \right)^{G_{nm}} \ .
\end{align}

This proves that $\vec Y$ is a solution to \eqref{Y} which lies in $\mathcal{A}(\mathbb{S}_s)^N$ and satisfies properties \ref{Yproperties:real}--\ref{Yproperties:asymptotics} in Theorem~\ref{main-theorem-Y}. It remains to show that $\vec Y$ is the unique such solution.
This will be done below as an immediate consequence of Theorem~\ref{main-theorem-Y}, whose proof we turn to now.

\subsubsection*{Proof of Theorem \ref{main-theorem-Y}}

Existence of a $\vec Y \in \mathcal{A}(\mathbb{S}_s)^N$ which solves  \eqref{Y} and satisfies properties \ref{Yproperties:real}--\ref{Yproperties:asymptotics} has just been proven above. The solution $\vec Y$ is given via \eqref{Y-via-f*-proof} in terms of the valid asymptotics $\vec a$ and the unique solution $\vec f_\star$ to \eqref{TBAintro} obtained in Theorem~\ref{main-theorem-TBA}\,\ref{mainTBAunique}.
It remains to show uniqueness of $\vec Y$.

Suppose there is another function $\vec{Y}'\in\mathcal{A}(\mathbb{S}_s)^N$ with the properties \ref{Yproperties:real}--\ref{Yproperties:asymptotics}. Since $\mathbb{S}_s$ is a simply connected domain and the components $Y_{n}'(z)$ 
have no roots in $\overline{\mathbb{S}}_s$ (property \ref{Yproperties:roots}), there exists a function $\vec{h}\in\mathcal{A}(\mathbb{S}_s)^N$, such that $\vec{Y}'(z)=\exp(\vec{h}(z))$ for all $z\in\overline{\mathbb{S}}_s$. 
In fact, property \ref{Yproperties:real} (real\ \& positive) allows one to choose this function in such a way that $\vec{h}(\mathbb{R})\subseteq\mathbb{R}^N$. 
Consequently, the function $\vec{f}'_\star(z):=\vec{h}(z)-\vec{a}(z)$ is in $\mathcal{A}(\mathbb{S}_s)^N$ and satisfies $\vec{f}'_\star(\mathbb{R})\subseteq\mathbb{R}^N$. 
Due to property \ref{Yproperties:asymptotics} (asymptotics),
$\vec{f}'_\star\in\mathcal{BA}(\mathbb{S}_s)^N$. 
As a consequence of the Y-system \eqref{Y} and the functional relation \eqref{linear-eqn-for-asympt} for the asymptotics, we have
\begin{align}
e^{f'_{\star,n}(x+is)+f'_{\star,n}(x-is)} &= e^{-a_n(x+is)-a_n(x-is)}\,Y'_n(x+is)Y'_n(x-is) \nonumber \\
&= \prod_m e^{-G_{nm}a_m(x)} \left(1 + Y'_m(x) \right)^{G_{nm}} \nonumber \\
&=\prod_m  \left(e^{-a_m(x)} + e^{f'_{\star,m}(x)} \right)^{G_{nm}} \ .
\end{align}
Hence, due to continuity of $\vec{f}'_\star(x+is)+\vec{f}'_\star(x-is)- \vec{L}_{\vec 0}[\vec{f}'_\star](x)$
there exists $\vec{v}\in\mathbb{Z}^N$, such that
\be\label{expfsystem-aux}
\vec{f}'_\star(x+is)+\vec{f}'_\star(x-is)+2\pi i \vec{v} = \vec{L}_{\vec 0}[\vec{f}'_\star](x) \ .
\ee
But since $\vec{f}'_\star(\mathbb{R})\subseteq\mathbb{R}^N$, the Schwarz reflection principle $\vec{f}'_\star(x-is) = \overline{\vec{f}'_\star(x+is) }$ allows \eqref{expfsystem-aux} to be rewritten as
\be 
	2\,\mathrm{Re}\,\vec{f}'_\star(x+is)+2\pi i \vec{v} = \vec{L}_{\vec 0}[\vec{f}'_\star](x) \ .
\ee
Since the right hand side is real, it follows that $\vec{v}=0$. Thus $\vec{f}'_\star$ solves \eqref{funeq}. 

As in the previous proof, by Proposition~\ref{TBAuniquenessDynkin} we can apply Proposition~\ref{uniquenessFsystem}. Part \ref{TBAtoF} of the latter proposition states that 
$\vec{f}'_\star$ is the unique solution to \eqref{funeq} in $\mathcal{BA}(\mathbb{S}_s)^N$ which maps $\mathbb{R}$ to $\mathbb{R}^N$.
Part \ref{indepC} states that $\vec{f}'_\star$ solves the TBA equation \eqref{Cprimetba} for any choice of $\vec C'$, so in particular it solves \eqref{TBAintro}.
But from Theorem~\ref{main-theorem-TBA}\,\ref{mainTBAunique} we know that $\vec f_\star$ is the unique solution to \eqref{TBAintro} in $BC(\mathbb{R},\mathbb{R})^N$, and hence $\vec{f}'_\star=\vec{f}_\star$ on $\mathbb{R}$ (and therefore, by uniqueness of the analytic continuation,
 also on $\overline{\mathbb{S}}_s$).

This completes the proof of Theorem~\ref{main-theorem-Y}.

\subsubsection*{Proof of Theorem \ref{main-theorem-TBA}, uniqueness in part \ref{mainTBAsolvesY}}

This is now immediate from Theorem~\ref{main-theorem-Y}, as the solution to the Y-system in $\mathcal{A}(\mathbb{S}_s)^N$ satisfying \ref{Yproperties:real}--\ref{Yproperties:asymptotics} is unique.

This completes the proof of Theorem~\ref{main-theorem-TBA}.

\subsubsection*{Proof of Corollary~\ref{unique-constant-sol}}

We only need to show that for $\vec a=0$, the unique solution $\vec f_\star$ 
from Theorem~\ref{main-theorem-TBA} is constant. By part 2 of that theorem, $\vec Y$ is then constant, too. 

By Theorem~\ref{main-theorem-TBA}\,\ref{mainTBAunique}, the solution $\vec f_\star$ is independent of $\vec C$, and in particular is equal to the unique solution found in Proposition~\ref{TBAuniquenessDynkin} for $\vec C = \tfrac12 \vec G$. 
In the proof of Proposition~\ref{TBAuniquenessDynkin} in Section~\ref{uniqunessproofTBA} it was verified that the integral operator
$I : BC(\mathbb{R},\mathbb{R})^N \to BC(\mathbb{R},\mathbb{R})^N$ from \eqref{contractingintegralforg} is a contraction. Explicitly,
\be \label{contTBAoperator}
	I_i[\vec g] := 
\sum_{j=1}^N
\int_{-\infty}^{\infty} \frac{1}{w_i}\phi_{ij}(x-y)\sum_{k=1}^NG_{jk}\left[\log\left(e^{-a_k(y)}+e^{w_k g_k(y)}\right)-\tfrac12 w_k g_k(y)\right] dy
\ee
where $\phi_{ij}$ are the entries of $\Phi_{\frac12 \vec G}$ and $\vec w$ is the Perron-Frobenius eigenvector of $\vec{G}$.
The relation between $\vec g$ and the functions $\vec f$ in the TBA equation is $f_i(x) = w_i g_i(x)$. 

By assumption we have $\vec a = 0$. Clearly, if also $\vec g$ is constant, so is $I[\vec g]$. 
This shows that the operator $I$ preserves the space of constant functions (for $\vec a=0$). Hence the unique fixed point $\vec g_\star$ of $I$ (and hence the unique solution $\vec f_\star$ to \eqref{TBAintro}) must be a constant function. 

This completes the proof of Corollary~\ref{unique-constant-sol}.

\section{Discussion and outlook}\label{section-outlook}

In this section, we will make additional
comments on our results and discuss possible further investigations, in particular with reference to the physical background.

\medskip

First of all, let us mention that the existence of a unique solution to Y-systems or TBA equations, even if the latter arise from a physical context, is by no means clear:
\begin{itemize}
\item \emph{Existence}: as already mentioned earlier, the constant Y-system \eqref{Yconst} has no non-negative solution if the spectral radius of $\vec G$ is larger or equal than 2 \cite{Tateo:DynkinTBAs}. 
It seems likely that this generalises to solutions with asymptotics $\vec{a} = \vec{w}e^{\pm\gamma x/s}$, cf.\ \eqref{asymptoticexample}.
\item \emph{Uniqueness}: when relaxing the reality condition $\vec{Y}(\mathbb{R})\subset\mathbb{R}$, uniqueness generally fails to hold (see \cite[Fig.\,1]{DoreyTateo:YL} for the case $N=1$). 
Moreover, there exist Y-systems of some more general form for which stability investigations show that the associated TBA equation for constant functions is not a contraction,
and in fact may display chaotic behaviour upon iteration \cite{CastroAlvaredoFring}.\footnote{Note that our results do not imply that the TBA integral operators are contracting, except in the specific case in Section~\ref{uniqunessproofTBA} to which the Banach Theorem is applied. However, our bound on the contraction constant $\kappa$ is certainly not optimal (see Remark~\ref{kappa-is-not-optimal}) and the TBA integral operator may well be contracting even if our bound yields $\kappa>1$. The results of $\cite{CastroAlvaredoFring}$ show that sometimes it is not contracting.}
\end{itemize}

\medskip

We will conclude this work with a brief outline of possible future investigations. Our results already cover a significant class of integrable models \cite{Zamo:ADE,KlassenMelzer91}. In physical terms, the state corresponding to the unique solution with no zeros in $\overline{\mathbb{S}}_s$ is the
	ground state.
Two main directions in which to extend our results are to a) consider excited states, and b) treat situations associated with more general models. We will briefly comment on both of these.

\subsubsection*{a) Including excited states}

To include excited states one has to allow for $Y$-functions which
have roots in $\overline{\mathbb{S}}_s$. 
In this case, 
the TBA equation involves an additional term which depends on the positions of these roots \cite{KluemperPearce,DoreyTateo:YL,BLZ4}.
It seems possible that existence and
uniqueness of a solution to the TBA equation for a generic set of root positions can be established with similar techniques as in the present paper.
However, to produce a solution to the Y-system with a sufficiently far analytic continuation, one needs to impose additional constraints. 
It would be very interesting to understand if some general statements about solutions to these constraints can be made.

In examples, these solutions are parametrised by a discrete set of ``quantum numbers''.
In the asymptotic limit $r\to \infty$ of Y-systems with $\vec{a}(x) = r \cosh(\gamma x/s) \, \vec{w}$ (relativistic scattering theories in volume $r$) these quantum numbers are expected to coincide with the Bethe-Yang quantum numbers \cite{YangYang69} which parametrise solutions of the Bethe ansatz equations. In the limit $r\to 0$, quantisation conditions in terms of Virasoro states have been conjectured for the Yang-Lee model ($N=1,\vec{G}=1$), see \cite{BajnokDeebPearce}. A related $N=1$ model, albeit with a slightly deformed Y-system (see below), is the Sinh-Gordon model for which a conjecture on the classification of states (for all $r$) in terms of solutions to the Y-system
	has been given in 
\cite{Teschner}.

\subsubsection*{b) More general Y-systems}

It is fair to ask if the approach presented in this paper is flexible enough to make contact with a larger number of physical models. This would require us to consider different conditions on $\vec{G}$ 
as well as
Y-systems or TBA equations of a more general form.

\medskip

There are several generalisations which still fit the form \eqref{Y}:
\begin{itemize}
\item 
As mentioned in the end of Section~\ref{N=1-example-section}, 
	with the method used in \cite{FringKorffSchulz}
it may be possible to treat cases of slightly more general $\vec{G}$ giving rise to $\kappa=1$, where our results
	no longer apply.
\item Giving up the reality requirement in a controlled way would
for example allow for a treatment of Y-systems with chemical potentials, where a constant imaginary vector is added to $\vec{a}$
 (see e.g.\ \cite{KlassenMelzer91,Fendley92}).
\end{itemize}

There are also more general forms of \eqref{Y}, which would be of interest. For example:
\begin{itemize}
\item Y-systems with a second shift parameter $t$ such as 
\be \label{affineToda}
Y_n(x+is)Y_n(x-is)=\frac{\left(1+Y_n(x+it)\right)\left(1+Y_n(x-it)\right)}{\prod_{m=1}^N \left(1+\frac{1}{Y_m(x)}\right)^{H_{nm}}} \,
\ee
where $\vec{H}$ is the adjacency matrix of a finite Dynkin diagram. This specific type of Y-system appears in the context of Affine Toda Field Theories \cite{Martins,FringKorffSchulz} whose simplest representative, the Sinh-Gordon model, has $\vec{H}=0$
	and bears some resemblance to \eqref{Y}.
\item The case of two simple Lie algebras giving rise to a Y-system of the form
\be
Y_{n,m}(x+is)Y_{n,m}(x-is)=\frac{\prod_{k=1} \left(1+Y_{k,m}(x)\right)^{G_{nk}}}{\prod_{l=1} \left(1+\frac{1}{Y_{n,l}(x)}\right)^{H_{ml}}}.
\ee
via their Dynkin diagrams $\vec{G}$ and $\vec{H}$.
In applications, often many of the $Y_{n,m}$ are required to be trivial. One example, albeit with an infinite number of Y-functions, is the famous ``T-hook'' of the AdS/CFT Y-system 
	(see e.g.\ \cite{Bajnok:TBAreview} for more details and references). 
The physically relevant solutions in this case have, however, rather complicated analytical properties involving also branch cuts.
\end{itemize}

\appendix

\section{Appendix}

\subsection{Fourier transformation of $\cosh(x)^{-m}$}
\label{Fourier_coshm}

\begin{lemma} Let $m\in\mathbb{Z}_{\geq 1}$, and $f_m(x)=\cosh(x)^{-m}$. The Fourier transformation of $f_m(x)$ is given by
\be\label{fouriercosh-m}
\hat{f}_m(k):=\int_{-\infty}^\infty e^{-ikx}f_m(x) \; dx = \frac{\pi}{(m-1)!}\left(\prod_{\substack{l=m-2\\ \mathrm{step\, -2}}}^1 (k^2 + l^2)\right)\cdot \begin{cases} \frac{1}{\cosh\left(\frac{\pi}{2}k\right)} & \mathrm{if} \; m \;\mathrm{odd} \\ \frac{k}{\sinh\left(\frac{\pi}{2}k\right)} &\mathrm{if} \; m \;\mathrm{even} \end{cases}\, .
\ee
\end{lemma}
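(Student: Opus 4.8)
The plan is to prove \eqref{fouriercosh-m} by induction on $m$ in steps of two, treating $m=1$ and $m=2$ as base cases. Since $\cosh(x)^{-m}$ decays exponentially, $\hat f_m(k)$ converges absolutely for every $k\in\mathbb{R}$; moreover $\hat f_m$ is real and even, being the Fourier transform of a real even function, so it is enough to compute it for $k\ge 0$.

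For the base cases I would evaluate the integrals by residue calculus: for $k>0$ one closes the contour in the lower half-plane, where $e^{-ikx}$ decays, and picks up the poles of $\cosh(x)^{-m}$ located at $x_n=-\tfrac{i\pi}{2}(2n+1)$, $n\ge 0$. For $m=1$ these are simple poles with residue $i(-1)^n e^{-k\pi(2n+1)/2}$, and summing the resulting geometric series gives $\hat f_1(k)=\pi/\cosh(\pi k/2)$. For $m=2$ the poles are of order two; expanding $\cosh$ near $x_n$ one finds residue $ik\,e^{-k\pi(2n+1)/2}$, and the geometric series now sums to $\hat f_2(k)=\pi k/\sinh(\pi k/2)$. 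Both match \eqref{fouriercosh-m}, where the product is empty for $m=1,2$, and the value at $k=0$ follows by continuity. Alternatively one may cite a standard table such as \cite[1.9\,(6)]{Erdelyi}.

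The inductive step rests on the differential identity, valid for $m\ge 3$,
\be
(m-1)\cosh(x)^{-m} = (m-2)\cosh(x)^{-(m-2)} - \frac{1}{m-2}\frac{d^2}{dx^2}\cosh(x)^{-(m-2)} \ ,
\ee
which one obtains by differentiating $\tfrac{d}{dx}\cosh(x)^{-(m-2)}=-(m-2)\sinh(x)\cosh(x)^{-(m-1)}$ once more and substituting $\sinh^2=\cosh^2-1$. Applying the Fourier transform, and using that differentiation in $x$ becomes multiplication by $ik$ --- the boundary terms in the two integrations by parts vanish since $\cosh(x)^{-(m-2)}$ and its derivative decay for $m\ge 3$ --- yields the recursion
\be
\hat f_m(k) = \frac{k^2+(m-2)^2}{(m-1)(m-2)}\,\hat f_{m-2}(k) \qquad (m\ge 3) \ .
\ee
It then remains to verify that the right-hand side of \eqref{fouriercosh-m} obeys this recursion: passing from $m-2$ to $m$, the prefactor is multiplied by $\tfrac{(m-3)!}{(m-1)!}=\tfrac{1}{(m-1)(m-2)}$, the product $\prod_{l=m-2,\,\mathrm{step}\,-2}^{1}(k^2+l^2)$ gains exactly the factor $k^2+(m-2)^2$, and the $\cosh$/$\sinh$ factor is unchanged because $m$ and $m-2$ have the same parity; the combined change is precisely $\tfrac{k^2+(m-2)^2}{(m-1)(m-2)}$. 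Hence the formula propagates from $m=1$ to all odd $m$ and from $m=2$ to all even $m$.

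I expect the bulk of the work to be the two base-case residue computations --- in particular handling the order-two poles and the Laurent expansion of $e^{-ikx}/\cosh(x)^2$ for $m=2$ --- together with checking the ``step $-2$'' product bookkeeping at the small values $m=1,2,3,4$; the recursion itself is a short computation, and no genuinely hard obstacle is anticipated.
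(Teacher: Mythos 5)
Your proposal is correct, and it takes a genuinely different route from the paper. The paper substitutes $r=e^{2x}$ to collapse the infinitely many poles of $\cosh(x)^{-m}$ into a single pole at $r=-1$ plus a branch cut, rotates the contour once around the origin to pick up that one residue together with a monodromy factor, evaluates the residue as a generalized binomial coefficient $\binom{\frac12(m-2-ik)}{m-1}$, and finally proves a separate product identity by induction to bring the answer into the stated form. You instead compute the two base cases $m=1,2$ by summing the residues at $x_n=-\tfrac{i\pi}{2}(2n+1)$ directly (your residues and geometric sums check out, including the order-two poles for $m=2$; for completeness one should note that the horizontal segment of the closing rectangle at $\mathrm{Im}(x)=-\pi N$ contributes $O(e^{-k\pi N})$ for $k>0$, and the case $k=0$ follows by continuity), and then propagate via the two-step recursion $\hat f_m=\tfrac{k^2+(m-2)^2}{(m-1)(m-2)}\hat f_{m-2}$ coming from the differential identity for $\cosh(x)^{-m}$ — which I verified, as I did the bookkeeping that the right-hand side of \eqref{fouriercosh-m} satisfies the same recursion. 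Your argument is more elementary and makes the ``step $-2$'' product structure transparent, since each induction step contributes exactly the factor $k^2+(m-2)^2$; the paper's argument is more compact once the monodromy trick is accepted and yields a closed form in one stroke, at the price of the somewhat delicate branch-cut manipulation and the auxiliary product identity \eqref{ws87w}.
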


\begin{proof}
It is convenient to first get rid of the infinite number of poles of the integrand. This is achieved by the variable transformation $r=e^{2x}$, which results in
\be
\hat{f}_m(k) = 2^{m-1}\int_0^\infty \frac{\left(r^{1/2}\right)^{m-2-ik}}{(1+r)^m}\;dr \ .
\ee
This transformation comes at the expense of single-valuedness: the new integrand has only one simple pole at $r=-1$, but also a branch cut connecting the origin and infinity via, say, the negative imaginary axis. Since the integrand decays fast enough for $|r|\to\infty$, it is possible to revolve the integration contour once around the origin like the big hand of a watch. If we do this \textsl{counter-clockwise},  the whole integral picks up a residue, and the square-root acquires a monodromy of $-1$, which changes the numerator to
\be
\left(-r^{1/2}\right)^{m-2-ik} = (-1)^m e^{k\pi} \left(r^{1/2}\right)^{m-2-ik} \ .
\ee
This contour manipulation yields the equation
\begin{align}
\int_0^\infty \frac{\left(r^{1/2}\right)^{m-2-ik}}{(1+r)^m}\;dr = 2\pi i &\;\mathrm{Res}_{z=-1}\left(\frac{\left(z^{1/2}\right)^{m-2-ik}}{(1+z)^m}\right) \nonumber \\ &+ (-1)^me^{k\pi} \int_0^\infty \frac{\left(r^{1/2}\right)^{m-2-ik}}{(1+r)^m}\;dr \ .
\end{align}
This can now be solved for the integral:
\be\label{ds9w2}
\hat{f}_m(k) = \frac{1}{1-(-1)^me^{k\pi}} 2^m i\pi \; \mathrm{Res}_{z=-1}\left(\frac{\left(z^{1/2}\right)^{m-2-ik}}{(1+z)^m}\right) \ .
\ee
The residue can be computed as the coefficient of the Laurent-expansion
\be
\left(-z^{1/2}\right)^a = \sum_{n=0}^\infty (-1)^n i^a (1+z)^n {{\frac{a}{2}}\choose{n}} 
\ee
for $n=m-1$, namely
\be
\mathrm{Res}_{z=-1}\left(\frac{\left(z^{1/2}\right)^{m-2-ik}}{(1+z)^m}\right) = e^{\frac{\pi}{2}(k-im)}{{\frac{1}{2}(m-2-ik)}\choose{m-1}} \ .
\ee
Plugging this into \eqref{ds9w2} gives
\be \label{ash8d}
\hat{f}_m(k) = 2^{m-1}\pi e^{i\frac{3}{2}\pi(1-m)}{{\frac12(m-2-ik)}\choose{m-1}} \cdot \begin{cases} \frac{1}{\cosh\left(\frac{\pi}{2}k\right)} & \mathrm{if} \; m \;\mathrm{odd} \\ \frac{k}{\sinh\left(\frac{\pi}{2}k\right)} &\mathrm{if} \; m \;\mathrm{even} \end{cases}\ .
\ee
Finally, the following identity is straight-forward to prove by induction $m\rightarrow m+2$:
\be\label{ws87w}
\prod_{j=1}^{m-1}\left(m-2j-ik\right) = e^{i\frac{3}{2}\pi(m-1)} \left(\prod_{\substack{l=m-2\\ \mathrm{step\,  -2}}}^1 (k^2 + l^2)\right)\cdot \begin{cases} 1 & \mathrm{if} \; m \;\mathrm{odd} \\ k &\mathrm{if} \; m \;\mathrm{even} \end{cases}
\ee
Substituting \eqref{ws87w} into \eqref{ash8d} results in \eqref{fouriercosh-m}.
\end{proof}

\subsection{Sokhotski integrals} \label{AppendixSokhotski}

The following proposition is the key ingredient in the proof of Proposition~\ref{SokhotskiConvolutionMain} in Appendix~\ref{AppendixPropProof}. The proposition and its proof are adapted from \cite[Ch.\,1,\,\S4]{Gakhov}, where a version of this theorem with contours of general shape but finite length is treated, and where the functions $\varphi(z,t)$ below are constant in $z$.

\begin{proposition}\label{Sokhotski}
Let $D\subseteq\mathbb{C}$ be a complex domain such that $\mathbb{S}_a\subseteq D$ 
for some $a>0$. Let $\varphi: D\times\mathbb{R} \rightarrow \mathbb{C}$ be a function with the following properties:
\begin{enumerate}
\item \label{Sokhotski:1} (Analyticity) For every $t_0\in\mathbb{R}$, the function $z\mapsto\varphi(z,t_0)$ is analytic in $D$.
\item \label{Sokhotski:2} (H\"older-continuity) There exist $0<\alpha\leq 1$ and $C>0$, such that for every $z_0\in D$ the function $t\mapsto\varphi(z_0,t)$ is $\alpha$-H\"older continuous with H\"older constant $C$.
\item \label{Sokhotski:3} (Decay) There exist $\mu>0$ and $T>0$, such that  $|\varphi(z,t)|\leq|t-z|^{-\mu}$ for all $z\in D$ and $t\in\mathbb{R}$ with $|t-z| \geq T$.
\item \label{Sokhotski:4} (Local majorisation) For every $z_0\in D\setminus \mathbb{R}$ there exist a neighbourhood $U\subseteq D\setminus\mathbb{R}$ and a function $M\in L_1(\mathbb{R})$, such that $|\varphi(z,t)|\leq|z-t|M(t)$ for all $z\in U$.
\item \label{Sokhotski:5} (Uniform convergence) 
The convergence $\varphi(x\pm iy,t)\xrightarrow{y\searrow 0}\varphi(x,t)$
is uniform in
$(x,t) \in \mathbb{R}^2$.
\item \label{Sokhotski:6} (Boundedness) 
$\sup_{(x,t)\in\mathbb{R}^2}|\varphi(x,t)|<\infty$~.
\end{enumerate}
Then the function 
\be\label{Sokh-F(z)-defn}
F(z) = \int_{-\infty}^\infty \frac{\varphi(z,t)}{t-z}\;dt
\ee
is analytic in $D\setminus\mathbb{R}$, and there exist limiting functions $F^\pm :\mathbb{R}\rightarrow\mathbb{C}$ such that 
\be\label{eqn:Sokhotski-uniform}
F(x\pm iy)\rightarrow F^\pm(x)
\ee
uniformly as $y\searrow 0$. The functions $F^\pm(x)$ 
are bounded 
 and satisfy
\be \label{eqn:Sokhotski} F^+(x)-F^-(x) = 2i\pi\varphi(x,x)
	\qquad \text{for all} \quad x \in \mathbb{R} \ .
\ee
\end{proposition}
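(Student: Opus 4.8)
\emph{Overall approach.} The plan is to adapt the classical Plemelj argument for finite H\"older contours (as in \cite{Gakhov}) to the line $\mathbb{R}$, using the fixed cut-off $\chi_0=\mathbf{1}_{[-1,1]}$ to isolate the Cauchy singularity at $t=x$ and exploiting that hypotheses \ref{Sokhotski:1}--\ref{Sokhotski:6} are uniform in the relevant variables.

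\emph{Analyticity.} First I would apply Lemma~\ref{analyticity} (on each connected component of $D\setminus\mathbb{R}$) to the function $f(z,t):=\varphi(z,t)/(t-z)$. Condition~\ref{analyticity:1} holds by \ref{Sokhotski:1}, since $t_0-z\neq 0$ on $D\setminus\mathbb{R}$; condition~\ref{analyticity:2} holds by \ref{Sokhotski:2}; and condition~\ref{analyticity:3} is precisely the local majorisation hypothesis~\ref{Sokhotski:4}, which gives $|f(z,t)|\leq M(t)$. Hence $F$ is analytic on $D\setminus\mathbb{R}$.

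\emph{Boundary values and the jump.} Fix $x\in\mathbb{R}$, put $z=x+iy$ with $y\neq 0$, and decompose
\begin{align*}
F(z)=\;& \underbrace{\int_{|u|>1}\frac{\varphi(z,x+u)}{u-iy}\,du}_{A(x,y)}
\;+\;\underbrace{\int_{-1}^{1}\frac{\varphi(z,x+u)-\varphi(z,x)}{u-iy}\,du}_{B(x,y)} \\
&+\;\underbrace{\varphi(z,x)\int_{-1}^{1}\frac{du}{u-iy}}_{C(x,y)} \ .
\end{align*}
Each term converges separately: $A$ has no singularity and an integrable tail by the decay~\ref{Sokhotski:3}; in $B$ the H\"older bound~\ref{Sokhotski:2} together with $|u-iy|^{-1}\le|u|^{-1}$ makes the integrand $O(|u|^{\alpha-1})$; and $C$ is elementary. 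The jump sits in $C$: a direct computation gives $\int_{-1}^{1}(u-iy)^{-1}\,du=2i\arctan(1/y)\to\pm i\pi$ as $y\searrow 0$ through positive/negative values, and $\varphi(z,x)\to\varphi(x,x)$ uniformly in $x$ by \ref{Sokhotski:5}, so $C$ tends to $\pm i\pi\,\varphi(x,x)$. For $A$ and $B$ one shows the limits $A(x,y)\to A(x,0):=\int_{|u|>1}\varphi(x,x+u)/u\,du$ and $B(x,y)\to B(x,0):=\int_{-1}^{1}(\varphi(x,x+u)-\varphi(x,x))/u\,du$, uniformly in $x$. Then $F^\pm(x)=A(x,0)+B(x,0)\pm i\pi\,\varphi(x,x)$, which gives \eqref{eqn:Sokhotski}; moreover $F^\pm$ are bounded, since $A(x,0)$ is bounded by \ref{Sokhotski:3} and \ref{Sokhotski:6}, $B(x,0)$ by the $O(|u|^{\alpha-1})$ estimate, and $\varphi(x,x)$ by \ref{Sokhotski:6}.

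\emph{Main obstacle.} The crux is the \emph{uniformity in $x$} of $A(x,y)\to A(x,0)$ and $B(x,y)\to B(x,0)$; plain dominated convergence is insufficient. I would split each of these integrals into a bulk region $\{|u|\le R\}$ and a tail $\{|u|>R\}$ with $R$ large but fixed. On the bulk one uses the uniform bound $|\varphi(x\pm iy,t)-\varphi(x,t)|\le\omega(y)$ with $\omega(y)\to 0$ from \ref{Sokhotski:5}, the estimate $|u-iy|^{-1}\le|u|^{-1}$, and — near $u=0$ in $B$ — the H\"older bound to absorb the factor $|u|^{-1}$; on the tail the decay~\ref{Sokhotski:3} furnishes an $x$-independent majorant of order $|u|^{-\mu-1}$, which is made small by taking $R$ large. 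One also needs the elementary identity $|(u-iy)^{-1}-u^{-1}|=y\,(|u|\sqrt{u^2+y^2})^{-1}$ and the bound $\int|u|^{\alpha-1}\,y\,(u^2+y^2)^{-1/2}\,du=O(y^\alpha)$ to control the change of denominator. Choosing $R$ first and then $y$ small yields the uniform convergence; the remaining steps are routine bookkeeping.
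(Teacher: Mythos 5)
Your proposal is correct and follows essentially the same route as the paper's proof: the classical Plemelj splitting that extracts the jump from the elementary Cauchy integral of the constant $\varphi(z,x)$, with the same $\eta$-near-singularity / fixed-bulk / $R$-tail decomposition and the same use of hypotheses \ref{Sokhotski:2}, \ref{Sokhotski:3}, \ref{Sokhotski:5}, \ref{Sokhotski:6} to get uniformity in $x$. The only (mild, and in fact slightly cleaner) difference is that you subtract $\varphi(z,x)$ only on $[-1,1]$ rather than over the whole line, so all three pieces are absolutely convergent and you avoid the symmetric improper integrals $\lim_{L\to\infty}\int_{-L}^{L}$ that the paper needs to make sense of its auxiliary function $\psi$.
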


\begin{proof} ~\\
$\bullet$
{\em $F(z)$ is analytic on $D \setminus \mathbb{R}$:}
Conditions \ref{Sokhotski:2} and \ref{Sokhotski:3} ensure that the integrand in \eqref{Sokh-F(z)-defn} is always in $L_1(\mathbb{R})$. Thus, $F(z)$ is well-defined. Analyticity of $F(z)$ in $D\setminus\mathbb{R}$ follows directly from lemma \ref{analyticity} together with condition \ref{Sokhotski:4}.

\medskip

\noindent
$\bullet$
{\em The auxiliary function $\psi$:}
Below we make frequent use of the following simple integral. Let $x,y,\eta,L \in \mathbb{R}$, $\eta \ge 0$, $\eta \pm x \le L$, and suppose that $y \neq 0$ in case $\eta=0$.
Denote $B_\eta(x)=(x-\eta,x+\eta)$.
 Then
\be\label{adi89}
\int_{[-L,L]\setminus B_\eta(x)}\frac{1}{t-x-iy} \;dt ~=~
\log \frac{L-x-iy}{L+x+iy} ~+~ 
\begin{cases}
\log \frac{\eta+iy}{\eta-iy} &; \eta>0 \\
i \pi \, \mathrm{sgn}(y) &; \eta = 0 
\end{cases} \quad .
\ee
Here, the branch cut of the logarithm is placed along the negative real axis.

We now investigate the $y\searrow 0$ limit of $F(x \pm iy)$. To do so, we split $F(z)$ into two integrals by adding and subtracting a term in the integrand. Namely, for $z \in D\setminus\mathbb{R}$ we have
\be\label{sd68s}
F(z)
=\lim_{L\rightarrow\infty}\int_{-L}^L \frac{\varphi(z,t)-\varphi(z,\mathrm{Re}(z))}{t-z}\;dt +\varphi(z,\mathrm{Re}(z)) \lim_{L\rightarrow\infty}\int_{-L}^L\frac{1}{t-z} \;dt \ .
\ee
The improper integral 
\be\label{psi-improper-def}
\psi(z):=\lim_{L\rightarrow\infty}\int_{-L}^L \frac{\varphi(z,t)-\varphi(z,\mathrm{Re}(z))}{t-z}\;dt
\ee
 exists because by \eqref{adi89} the limit in the second summand of \eqref{sd68s} exists. We obtain, for $z \in D\setminus\mathbb{R}$,
	\be\label{82gfs}
F(z)=\psi(z) + i\pi \, \varphi(z,\mathrm{Re}(z)) \, \mathrm{sgn}(\mathrm{Im}(z))\ .
\ee
Since both $F(z)$ and $\varphi(z,\mathrm{Re}(z))$ are continuous in $D\setminus \mathbb{R}$, $\psi(z)$ is also continuous in $D\setminus \mathbb{R}$. 

We now claim that the integral and limit defining $\psi(z)$ in \eqref{psi-improper-def} also exist for $z \in \Rb$.
To see this, first note that due to the H\"older condition (condition \ref{Sokhotski:2}) 
\be
\left|\frac{\varphi(x,t)-\varphi(x,x)}{t-x}\right|\leq \frac{C}{|t-x|^{1-\alpha}} \hspace{1cm}\forall x,t\in\mathbb{R} \ .
\ee
Hence,  the integral 
\be\label{smd834ha}
\int_{x-1}^{x+1} \frac{\varphi(x,t)-\varphi(x,x)}{t-x}\;dt
\ee
exists. On the other hand, by \eqref{adi89} and for $|x|+1<L$,
\be\label{smd834hb}
\int_{[-L,L]\setminus B_1(x)} \frac{\varphi(x,t)-\varphi(x,x)}{t-x}\;dt 
= \int_{[-L,L]\setminus B_1(x)} \frac{\varphi(x,t)}{t-x}\;dt -\varphi(x,x)
\log \frac{L-x}{L+x} \ .
\ee 
The integral in the first summand has a well-defined $L \to \infty$ limit by condition \ref{Sokhotski:3}
Adding \eqref{smd834ha} and \eqref{smd834hb} shows that the limit and integral in \eqref{psi-improper-def} exist also for $z \in \mathbb{\Rb}$, so that altogether $\psi$ is defined on all of $D$.

\medskip

\noindent
$\bullet$
{\em Uniform convergence of $\psi$:}
Next we study the continuity properties of $\psi$ on $D$.
Let us restrict $\psi(z)$ to lines parallel to the real axis. Namely, for $|y|<a$ we define $\psi^{[y]}:\mathbb{R}\rightarrow\mathbb{C}$ by 
$\psi^{[y]}(x):=\psi(x+iy)$. 
We will show that $\psi^{[y]}(x)$ converges to ${\psi^{[0]}(x)}$ uniformly as $y\rightarrow 0$ (from both sides). To do so, it is convenient to define the family of functions 
\be
\Delta^{[y]}(x,t):=\varphi(x+iy,t)-\varphi(x,t)\ .
\ee
 Due to the uniform convergence condition on $\varphi(x+iy,t)$ (condition \ref{Sokhotski:5}), $\Delta^{[y]}(x,t)$ converges to 0 uniformly in 
	$(x,t) \in \mathbb{R}^2$
 as $y\rightarrow 0$. In particular, for $|y|$ small enough, $\Delta^{[y]}(x,t)$ is bounded. 
 Moreover, $\Delta^{[y]}(x,t)$ inherits the decay property (condition \ref{Sokhotski:3}) 
 from $\varphi(z,t)$. These properties will be used later in the proof.
 
Choose $\eta>0$ such that $\eta<T$ and split the integration over the interval $[-L,L]$ (w.l.o.g. $|x|+\eta<L$) into the interval $B_\eta(x)$
and its complement $[-L,L]\setminus B_\eta(x)$. A straightforward computation yields
\begin{align}
&\psi^{[y]}(x)-\psi^{[0]}(x)
\nonumber\\
&=   iy\int_{B_\eta(x)} \frac{\varphi(x,t)-\varphi(x,x)}{(t-x)(t-x-iy)}\;dt + iy\lim_{L\rightarrow\infty}\int_{[-L,L]\setminus B_\eta(x)} \frac{\varphi(x,t)-\varphi(x,x)}{(t-x)(t-x-iy)}\;dt 
\nonumber\\
&\phantom{=}+ \int_{B_\eta(x)} \frac{\Delta^{[y]}(x,t)-\Delta^{[y]}(x,x)}{t-x-iy} \;dt +  \lim_{L\rightarrow\infty}\int_{[-L,L]\setminus B_\eta(x)} \frac{\Delta^{[y]}(x,t)-\Delta^{[y]}(x,x)}{t-x-iy} \;dt \ .
\end{align}
We will now show that all four integrals and limits exist and at the same time provide estimates for them. For the first three we compute, where (*) refers to the use of $\alpha$-H\"older continuity (condition~\ref{Sokhotski:2}) and (**) to boundedness (condition~\ref{Sokhotski:6}) -- we set $S := \sup_{(x,t)\in\mathbb{R}^2}|\varphi(x,t)|$,
\allowdisplaybreaks
\begin{align}
&\left|iy\int_{B_\eta(x)} \frac{\varphi(x,t)-\varphi(x,x)}{(t-x)(t-x-iy)}\;dt\right| 
~\leq~ 
\int_{B_\eta(x)} \left|\frac{\varphi(x,t)-\varphi(x,x)}{t-x}\right|\frac{|y|}{|t-x-iy|}\;dt 
\nonumber \\
& \hspace{5em}
\overset{(*)}{\leq} ~C \int_{B_\eta(x)} |t-x|^{\alpha-1} \;dt 
~=~ 
2C \int_0^\eta r^{\alpha-1} \;dr 
~=~ 
\frac{2C\eta^\alpha}{\alpha} 
\ ,
\\
&
\left|\int_{B_\eta(x)} \frac{\Delta^{[y]}(x,t)-\Delta^{[y]}(x,x)}{t-x-iy} \;dt\right|
\nonumber \\
& \hspace{5em}
\leq~ \int_{B_\eta(x)} \frac{\left|\varphi(x+iy,t)-\varphi(x+iy,x)\right|+\left|\varphi(x,t)-\varphi(x,x)\right|}{\left|t-x\right|}\;dt \nonumber \\ 
& \hspace{5em}
\overset{(*)}{\leq}~ 2C  \int_{B_\eta(x)} |t-x|^{\alpha-1} \;dt 
~=~ 
4C \int_0^\eta r^{\alpha-1} \;dr = \frac{4C\eta^\alpha}{\alpha} 
\ ,
\\
&
\left| iy\lim_{L\rightarrow\infty}\int_{[-L,L]\setminus B_\eta(x)} \frac{\varphi(x,t)-\varphi(x,x)}{(t-x)(t-x-iy)}\;dt \right| 
~\leq~ |y|\int_{\mathbb{R}\setminus B_\eta(x)} \frac{|\varphi(x,t)|+|\varphi(x,x)|}{\left|t-x\right|^2}\;dt \nonumber
\\
& \hspace{5em}
\overset{(**)}{\leq}~ 4 S |y| \eta^{-1}  \ .
\end{align}
Now let us turn to the fourth integral, which is slightly more involved. With the help of the decay condition on $\Delta^{[y]}(x,t)$
and \eqref{adi89} we can rewrite it as
\begin{align}
&\lim_{L\rightarrow\infty}\int_{[-L,L]\setminus B_\eta(x)}  \frac{\Delta^{[y]}(x,t)-\Delta^{[y]}(x,x)}{t-x-iy} \;dt 
\nonumber \\ 
&\hspace{5em}
= \int_{\mathbb{R}\setminus B_\eta(x)} \frac{\Delta^{[y]}(x,t)}{t-x-iy} \;dt ~-~ \Delta^{[y]}(x,x) \log \frac{\eta+iy}{\eta-iy} \ .
\end{align}
To estimate the integral over $\mathbb{R}\setminus B_\eta(x)$, we split it as follows, for $R > T$,
\be
\int_{\mathbb{R}\setminus B_\eta(x)} \frac{\Delta^{[y]}(x,t)}{t-x-iy} \;dt = \int_{\mathbb{R}\setminus B_R(x)} \frac{\Delta^{[y]}(x,t)}{t-x-iy} \;dt + \int_{B_R(x)\setminus B_\eta(x)} \frac{\Delta^{[y]}(x,t)}{t-x-iy} \;dt \ .
\ee
We now estimate the two integrals separately:
\begin{align}
\left|\int_{\mathbb{R}\setminus B_R(x)} \frac{\Delta^{[y]}(x,t)}{t-x-iy} \;dt\right|
&
\leq 
\int_{\mathbb{R}\setminus B_R(x)} \left|\frac{\Delta^{[y]}(x,t)}{t-x}\right| \;dt \nonumber\\
&\leq \int_{\mathbb{R}\setminus B_R(x)} \frac{1}{|t-x|^{1+\mu}} \;dt = 2\int_R^\infty \frac{1}{r^{1+\mu}} \;dr  = \frac{2}{\mu R^\mu}
\\
\left|\int_{B_R(x)\setminus B_\eta(x)} \frac{\Delta^{[y]}(x,t)}{t-x-iy} \;dt\right|
&
\leq \int_{B_R(x)\setminus B_\eta(x)} \left|\frac{\Delta^{[y]}(x,t)}{t-x}\right| \;dt 
\leq \frac{1}{\eta}\int_{x-R}^{x+R} |\Delta^{[y]}(x,t)|  \;dt 
\nonumber\\
&
\leq \frac{2R}{\eta} \sup_{t\in\mathbb{R}}|\Delta^{[y]}(x,t)| 
\leq \frac{2R}{\eta} \sup_{t,x\in\mathbb{R}}|\Delta^{[y]}(x,t)|
\end{align}
Finally, we remark that with our choice of branch cut for the logarithm,
\be
\left| \log \tfrac{\eta+iy}{\eta-iy} \right|
 = 2 \left| \mathrm{arg}(\eta + iy) \right|
 \le \pi \ .
\ee
Assembling all of the above estimates, we obtain:
\be\label{psi-unif-conv-estimate}
\left|\psi^{[y]}(x)-\psi^{[0]}(x)\right|\leq \frac{6C\eta^\alpha}{\alpha} 
+  \frac{4 S |y|}{\eta} + 
\frac{2}{\mu R^\mu} + \left(\frac{2R}{\eta}+\pi\right) \sup_{x,t\in\mathbb{R}}|\Delta^{[y]}(x,t)|
\ee

To establish uniform convergence, we need to show that for each $\eps>0$ there exists a $\delta>0$ such that for all $|y|<\delta$ and all $x \in \Rb$ we have $|\psi^{[y]}(x)-\psi^{[0]}(x)| \leq\eps$. To find $\delta$, we choose $\eta$ and $R$ in the above estimate appropriately. 

Choose $\eta$ such that the first term in \eqref{psi-unif-conv-estimate} equals $\varepsilon/4$: 
\be
\eta=\left(\frac{\alpha\varepsilon}{24C}\right)^{\frac{1}{\alpha}}
\ .
\ee
The second term is smaller than $\varepsilon/4$ provided $|y|<\delta_1$, where 
\be
\delta_1= \frac{\eta \, \eps}{16 \, S} = \frac{1}{16 S} \left(\frac{\alpha}{24C}\right)^{\frac{1}{\alpha}}
\eps^{1+\frac1\alpha}
\ .
\ee
 The third term is smaller than $\varepsilon/4$ if we set 
\be
R=\left(\frac{8}{\mu\varepsilon}\right)^{\frac{1}{\mu}}.
\ee
 Finally, we remember that  $\Delta^{[y]}(x,t)\xrightarrow{y\to 0}0$ uniformly in $x$ and $t$. Hence, there exists a $\delta_2>0$ such that for all $|y|< \delta_2$,
\be
\sup_{x,t\in\mathbb{R}}|\Delta^{[y]}(x,t)| \leq \frac{\varepsilon}{4}\left(\frac{2R}{\eta}+\pi\right)^{-1} = \frac{\varepsilon}{4}\left(2\left(\frac{24C}{\alpha\varepsilon}\right)^{\frac{1}{\alpha}}\left(\frac{8}{\mu\varepsilon}\right)^{\frac{1}{\mu}}+\pi\right)^{-1} \ .
\ee
This makes the last term smaller than $\varepsilon/4$.
Setting $\delta :=  \min(\delta_1,\delta_2)$, this proves uniform convergence $\psi^{[y]} \xrightarrow{y \to 0} \psi^{[0]}$.

\medskip

\noindent
$\bullet$
{\em Uniform convergence of $F$ and relation of $F^\pm$:}
The claim of uniform convergence of \eqref{eqn:Sokhotski-uniform} and formula \eqref{eqn:Sokhotski}  now both follow from \eqref{82gfs}.

\medskip

\noindent
$\bullet$
{\em Boundedness of $F^\pm$:}
It is enough to provide a bound for $\psi^{[0]}(x)$. This can be achieved as follows. Split the integral:
\begin{align}
\int_{-L}^L \frac{\varphi(x,t)-\varphi(x,x)}{t-x}\;dt 
&
~=~ \int_{B_1(x)} \frac{\varphi(x,t)-\varphi(x,x)}{t-x}\;dt 
~+~
\int_{[-L,L]\setminus B_1(x)} \frac{\varphi(x,t)}{t-x}\; dt 
\nonumber
\\
&
\qquad
-~
\varphi(x,x)\int_{[-L,L]\setminus B_1(x)} \frac{1}{t-x}\; dt
\end{align}
The first summand can be estimated using the H\"older inequality (condition \ref{Sokhotski:2})
\be
\left|\int_{B_1(x)} \frac{\varphi(x,t)-\varphi(x,x)}{t-x}\;dt\right|\leq C\int_{-1}^1 \left|r\right|^{\alpha-1}\;dr = \frac{2C}{\alpha} 
\ee
For the second integral,
we make use of both boundedness (condition \ref{Sokhotski:6}, where as above we denote $S:=\sup_{(x,t)\in\mathbb{R}^2}|\varphi(x,t)|$)
and the decay property (condition \ref{Sokhotski:3}, w.l.o.g. $1<T<L$):
\begin{align}\left|\int_{[-L,L]\setminus B_1(x)} \frac{\varphi(x,t)}{t-x}\; dt\right| &\leq \int_{B_T(x)\setminus B_1(x)} \left|\frac{\varphi(x,t)}{t-x}\right|\; dt + \int_{[-L,L]\setminus B_T(x)} \left|\frac{\varphi(x,t)}{t-x}\right|\; dt \nonumber\\ 
&\leq \int_{B_T(x)\setminus B_1(x)} \left|\varphi(x,t)\right|\; dt + \int_{[-L,L]\setminus B_T(x)} \left|t-x\right|^{-\mu-1}\; dt \nonumber\\
&\leq 2TS + 2\int_1^\infty r^{-\mu-1}\; dr = 2TS+\frac{1}{\mu}
\end{align}
By \eqref{adi89},
the third integral is simply bounded as follows:
\be
\left|-\varphi(x,x)\int_{[-L,L]\setminus B_1(x)} \frac{1}{t-x}\; dt\right| 
\leq S\left|\log\left(\frac{L-x}{L+x}\right)\right| 
\ee
Altogether, we obtain the bound
\be
\left|\int_{-L}^L \frac{\varphi(x,t)-\varphi(x,x)}{t-x}\;dt\right| \leq \frac{2C}{\alpha} +\left(2T+\left|\log\left(\frac{L-x}{L+x}\right)\right|\right)S+\frac{1}{\mu} \ .
\ee
Since this bound itself converges as $L\rightarrow\infty$, we obtain a bound for $\left|\psi^{[0]}(x)\right|$.
\end{proof}

Write $D^+ := \{z\in D | \mathrm{Im}(z) > 0 \}$, $D^- := \{ z\in D | \mathrm{Im}(z) < 0 \}$ and $\tilde D^\pm := D^\pm \cup \mathbb{R}$.
Consider the functions
\be
	\tilde F^\pm : \tilde D^\pm \to \mathbb{C}
	\quad , \quad
	\tilde F^\pm(z) := 
	\begin{cases} 
	F(z) &;~ \mathrm{Im}(z) \neq 0
	\\
	F^\pm( z ) &;~ \mathrm{Im}(z) = 0
	\end{cases} \quad .
\ee

\begin{corollary} \label{corollarySokhotski} 
$\tilde F^\pm$ is a continuous extension of $F$ from 
 $D^\pm$ to $\tilde D^\pm$.
\end{corollary}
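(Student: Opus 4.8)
The plan is to verify the two parts of the statement separately for $\tilde F^{+}$; the argument for $\tilde F^{-}$ is identical up to replacing "$\mathrm{Im}>0$" by "$\mathrm{Im}<0$". That $\tilde F^{+}$ restricts to $F$ on $D^{+}$ is immediate from the definition, so the real content is continuity of $\tilde F^{+}$ on $\tilde D^{+}=D^{+}\cup\mathbb{R}$. I would split this into two cases according to the location of the point. At a point $z_{0}\in D^{+}$ there is nothing to prove: since $D$ is open, a whole neighbourhood of $z_{0}$ lies in $D^{+}\subseteq D\setminus\mathbb{R}$, where $\tilde F^{+}$ coincides with $F$, and $F$ is analytic on $D\setminus\mathbb{R}$ by Proposition~\ref{Sokhotski}, hence continuous at $z_{0}$. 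The only real work is at a point $x_{0}\in\mathbb{R}$; here I would first record that every $z\in\tilde D^{+}$ has $\mathrm{Im}(z)\ge 0$, so when analysing $z$ near $x_{0}$ we only ever see $z=x+iy$ with $y\ge 0$.

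For the real-axis case I would assemble two ingredients, both coming straight out of Proposition~\ref{Sokhotski}. First, $F^{+}$ is continuous on $\mathbb{R}$: for each fixed $y>0$ the map $x\mapsto F(x+iy)$ is the restriction of the analytic function $F$ to a horizontal line, hence continuous, and these functions converge to $F^{+}$ \emph{uniformly} in $x$ as $y\searrow 0$, so $F^{+}$ is a uniform limit of continuous functions and therefore continuous. Second, the uniform convergence itself: given $\varepsilon>0$, pick $\delta_{1}>0$ with $\sup_{x\in\mathbb{R}}|F(x+iy)-F^{+}(x)|<\varepsilon/2$ for all $0<y<\delta_{1}$, and pick $\delta_{2}>0$ with $|F^{+}(x)-F^{+}(x_{0})|<\varepsilon/2$ for $|x-x_{0}|<\delta_{2}$. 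Then for $z=x+iy\in\tilde D^{+}$ with $|z-x_{0}|<\min(\delta_{1},\delta_{2})$ one has, if $y=0$, $|\tilde F^{+}(z)-\tilde F^{+}(x_{0})|=|F^{+}(x)-F^{+}(x_{0})|<\varepsilon$, and if $y>0$,
\[
|\tilde F^{+}(z)-\tilde F^{+}(x_{0})|\;\le\;|F(x+iy)-F^{+}(x)|+|F^{+}(x)-F^{+}(x_{0})|\;<\;\tfrac{\varepsilon}{2}+\tfrac{\varepsilon}{2}\;=\;\varepsilon .
\]
This gives continuity of $\tilde F^{+}$ at $x_{0}$, and since $x_{0}\in\mathbb{R}$ was arbitrary, continuity on all of $\tilde D^{+}$.

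There is no genuine obstacle here; the proof is a short $\varepsilon/2$-argument. The one point worth stating carefully is that the \emph{uniformity} of the limit $F(x\pm iy)\to F^{\pm}(x)$ from Proposition~\ref{Sokhotski} is precisely what is needed twice over: once to upgrade pointwise continuity of $x\mapsto F(x+iy)$ into continuity of the boundary value $F^{\pm}$, and once to combine the two one-variable limits (letting $y\searrow 0$ and $x\to x_{0}$) into the joint limit required for continuity of $\tilde F^{\pm}$ across the real axis.
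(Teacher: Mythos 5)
Your proof is correct and follows essentially the same route as the paper: the same $\varepsilon/2$ split combining the uniform convergence $F(x\pm iy)\to F^\pm(x)$ with continuity of the boundary function $F^\pm$ on $\mathbb{R}$. The only cosmetic difference is that you argue directly with $F$ and $F^\pm$ on the closed half-planes, supplying continuity of $F^\pm$ as a uniform limit of continuous functions, whereas the paper runs the identical argument through the auxiliary function $\psi$ introduced in the proof of Proposition~\ref{Sokhotski}.
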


\begin{proof}
It suffices to show that $\psi(z)$ is continuous in $D$. We know already that it is continuous in $D\setminus \mathbb{R}$. Now let us show that it is continuous in $x_0\in\mathbb{R}$.

Let $\eps>0$.
By uniform convergence $\psi^{[y]}(x)\rightarrow\psi^{[0]}(x)$ there is a $\delta_1>0$ such that for all $|y|<\delta_1$ and all $x \in \mathbb{R}$ we have $|\psi^{[y]}(x)-\psi^{[0]}(x)|<\eps/2$.
By continuity of $F^\pm$ on $\mathbb{R}$ there is $\delta_2>0$ such that for all $x$ with $|x-x_0| < \delta_2$ we have $|\psi^{[0]}(x)-\psi^{[0]}(x_0)|<\eps/2$.

Take $\delta := \min(\delta_1,\delta_2)$. For $z = x + iy \in D$ with $|z-x_0|<\delta$ we have
\be
\left|\psi(z)-\psi(x_0)\right|=\left|\psi^{[y]}(x)-\psi^{[0]}(x_0)\right|\leq|\psi^{[y]}(x)-\psi^{[0]}(x)|+|\psi^{[0]}(x)-\psi^{[0]}(x_0)|\leq \varepsilon. \qedhere \ee
\end{proof}

\subsection{Proof of Proposition~\ref{SokhotskiConvolutionMain}} \label{AppendixPropProof}

Here we prove Proposition~\ref{SokhotskiConvolutionMain} as a special case of Proposition~\ref{Sokhotski} and Corollary~\ref{corollarySokhotski}.
Recall the setting of Proposition~\ref{SokhotskiConvolutionMain}:
\begin{itemize}
\item $D=\mathbb{S}_a$ for some $a>0$.
\item $\varphi(z,t)=h(z-t)g(t)$ where $h:\mathbb{S}_a\rightarrow\mathbb{C}$ is analytic and $g:\mathbb{R}\rightarrow\mathbb{C}$ is bounded and H\"older continuous.
\item $zh(z)$ and $\frac{d}{dz}h(z)$ are bounded in $\mathbb{S}_a$.
\end{itemize}
Let us check the conditions on $\varphi(z,t)$ one by one.
Denote by $G$ the bound of $g(t)$ and by $H$ the bound of $zh(z)$. 

\paragraph{Condition \ref{Sokhotski:1}} 
This is obvious since $h(z)$ is analytic.

\paragraph{Condition \ref{Sokhotski:2}} 
Let $C,\alpha$ be constants expressing the H\"older-continuity of $g$:
\be
\forall t,t' \in \Rb ~:~
|g(t')-g(t)| \le C |t-t'|^\alpha \ .
\ee
Boundedness of $\frac{d}{dz}h(z)$ implies that in particular the derivative $\frac{d}{dx}h(x+iy)$ in the real direction is bounded for every $y<(-a,a)$. As a consequence, for any given $z_0\in\mathbb{S}_a$ the function $t\mapsto h(z_0-t)$ is Lipschitz-continuous with Lipschitz constant $L := \mathrm{sup}_{z \in \mathbb{S}_a}(\frac{d}{dz}h(z))$.
Since $z h(z)$ is bounded, we also have boundedness of $h$ on $\mathbb{S}_a$: $|h(z)|\le B$ for some $B\ge 0$.
Accordingly,
\begin{align}
|\varphi(z_0,t)-\varphi(z_0,t')|
&=
|h(z_0-t)g(t)-h(z_0-t')g(t')|
\nonumber\\
&\le
|h(z_0-t)-h(z_0-t')|\,|g(t)|
+
|h(z_0-t')|\,|g(t)-g(t')|
\nonumber\\
&\le
  BC\, |t-t'|^\alpha
+ \begin{cases}
GL \, |t-t'| &;~ |t-t'| \le 1
\\
2BG &; ~ |t-t'|>1
\end{cases}
\quad .
\end{align}
Note that $B,G,L$ are all independent of $z_0$. 
Hence there is an $M>0$, independent of $z_0$, such that for all $t,t' \in \mathbb{R}$: 
\be
|\varphi(z_0,t)-\varphi(z_0,t')|
 \le M |t-t'|^\alpha \ .
\ee

\paragraph{Condition \ref{Sokhotski:3}}
For all $z_0\in \mathbb{S}_a$, we have the inequality 
\be
|\varphi(z_0,t)|\leq G|h(z_0-t)| \leq 
GH\,|z_0-t|^{-1} \ .
\ee
Fix $0<\mu<1$. 
There exists a $T>0$, independent of $z_0$, such that $GH\,|z_0-t|^{-1}\leq |z_0-t|^{-\mu}$ whenever $|z_0-t|>T$.

\paragraph{Condition \ref{Sokhotski:4}}
For given $z_0\in D\setminus \mathbb{R}$, set $\rho:=\frac{1}{2}\mathrm{Im}(z_0)$ and $U := \mathbb{S}_a \cap B_\rho(z_0)$, where $B_\rho(z_0)$ is the open ball of radius $\rho$ with center $z_0$.
Then for all $z\in U$, we have 
\be
|\varphi(z,t)| \leq G|h(z-t)| \leq GH\frac{1}{|z-t|}
	=
|z-t|\frac{GH}{|z-t|^2} \ . 
\ee
Now set
\be
M(t):=\begin{cases} \frac{GH}{|\mathrm{Re}(z_0)-\rho-t|^2} & \mathrm{if}\; t<\mathrm{Re}(z_0)-2\rho \\ \frac{GH}{\rho^2}& \mathrm{if}\; t\in [\mathrm{Re}(z_0)-2\rho,\mathrm{Re}(z_0)+2\rho] \\ \frac{GH}{|\mathrm{Re}(z_0)+\rho-t|^2} & \mathrm{if}\; t>\mathrm{Re}(z_0)+2\rho\end{cases} \ .
\ee
Then $M\in L_1(\mathbb{R})$ and one quickly checks that
\be
\frac{GH}{|z-t|^2}\leq M(t) \ .
\ee

\paragraph{Condition \ref{Sokhotski:5}}
Since $\varphi(x\pm iy,t)=h(x\pm iy -t)g(t)$, the condition is satisfied if $h(x+iy)\xrightarrow{y\to 0}h(x)$ uniformly. This is easily established with the following lemma.

\begin{lemma} \label{UniformConvergenceOfAnalyticFuns} Let $a>0$, $b\in\mathbb{R}$, and set $D:=\mathbb{S}_a\cap\lbrace z\in\mathbb{C}|\mathrm{Re}(z)> b\rbrace$. Let $f:D\rightarrow \mathbb{C}$ be an analytic function such that $zf(z)$ and $\frac{d}{dz}f(z)$ are bounded. Then $f(x+iy)\xrightarrow{y\to Y}f(x+iY)$ uniformly on $[t,\infty)$ for any $Y\in(-a,a)$ and any $t>b$.
\end{lemma}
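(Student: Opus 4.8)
The plan is to control the vertical increment $f(x+iy)-f(x+iY)$ by integrating the derivative along a vertical segment, and to make the bound uniform in $x$ by exploiting the two hypotheses: boundedness of $\tfrac{d}{dz}f$ handles the ``bulk'' region of moderate $x$, while boundedness of $zf(z)$ handles the ``tail'' region $x\to\infty$ where the derivative estimate alone would give a $y$-independent but not small bound. First I would fix $Y\in(-a,a)$ and $t>b$, pick $0<a'<a$ with $Y\in(-a',a')$ so that the closed strip $\overline{\mathbb{S}}_{a'}\cap\{\mathrm{Re}(z)\ge t\}$ sits compactly inside $D$ in the imaginary direction, and restrict attention to $y$ in a small neighbourhood of $Y$ inside $(-a',a')$.

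The bulk estimate is the elementary one: writing
\be
f(x+iy)-f(x+iY) = \int_Y^y i\,f'(x+i\tau)\,d\tau
\ee
and using $\left|f'(z)\right|\le L$ on $\mathbb{S}_{a'}$ (where $L$ bounds $\tfrac{d}{dz}f$) gives $\left|f(x+iy)-f(x+iY)\right|\le L\,|y-Y|$ for all $x>b$. This is already uniform in $x$ and tends to $0$ as $y\to Y$, so in fact the crude derivative bound \emph{suffices on its own} and the hypothesis on $zf(z)$ is not even needed for this particular statement. (The $zf(z)$ hypothesis is what one would use to additionally conclude decay of $f$ itself, but for mere uniform vertical convergence the Lipschitz-in-the-imaginary-direction bound already does the job.)

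Concretely, the single display above, combined with the observation that $L:=\sup_{z\in\mathbb{S}_{a'}}\left|f'(z)\right|<\infty$ is a finite constant independent of $x$, shows: given $\varepsilon>0$, set $\delta:=\varepsilon/(L+1)$; then for all $y$ with $|y-Y|<\delta$ and all $x\ge t$ we have $\left|f(x+iy)-f(x+iY)\right|\le L\,|y-Y|<\varepsilon$. This is precisely uniform convergence on $[t,\infty)$. The only mild subtlety — the ``main obstacle'', such as it is — is making sure the segment from $x+iY$ to $x+iy$ stays inside $D$ for the range of $y$ considered, which is why one passes to the slightly smaller strip $\mathbb{S}_{a'}$ first; since $Y$ is an interior point of $(-a,a)$ this is automatic for $|y-Y|$ small. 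No appeal to $zf(z)$, Phragmén–Lindelöf, or Morera is required here; it is a one-line mean-value-type bound once the domain bookkeeping is set up.
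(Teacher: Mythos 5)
Your proof is correct, and it is genuinely simpler than the one in the paper. You observe that the hypothesis $\sup_{z\in D}|f'(z)|=:L<\infty$ alone gives the Lipschitz bound $|f(x+iy)-f(x+iY)|=\bigl|\int_Y^y i f'(x+i\tau)\,d\tau\bigr|\le L\,|y-Y|$, uniformly in $x$, since the vertical segment from $x+iY$ to $x+iy$ stays in $D$ whenever $x>b$ and $y,Y\in(-a,a)$ (your detour through a smaller strip $\mathbb{S}_{a'}$ is not even needed, as $f'$ is assumed bounded on all of $D$). The paper instead argues in two regions: for $x\ge x_0=2B/\varepsilon$ it uses the bound on $zf(z)$ to make $|f(x+iy)|+|f(x+iY)|\le 2B/x_0=\varepsilon$, and on the remaining compact interval $[t,x_0]$ it invokes uniform boundedness and equicontinuity, Arzel\`a--Ascoli, and a contradiction argument with subsequences. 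Your one-line mean-value bound replaces all of that, yields uniform convergence on all of $(b,\infty)$ rather than just $[t,\infty)$, and correctly identifies that the $zf(z)$ hypothesis is superfluous for this particular lemma (it is of course used elsewhere in the paper, e.g.\ for the decay and majorisation conditions in Proposition~\ref{Sokhotski}). The only point worth stating explicitly is the one you already flag: the segment of integration lies in $D$, which is immediate from the geometry of the strip.
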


\begin{proof}
Pointwise convergence is clear by continuity. Now we claim that the convergence is uniform. Let $\varepsilon>0$. Set $x_0:=\frac{2B}{\varepsilon}$, where $B>0$ is the bound of $zf(z)$. Without loss of generality, assume $b<t<x_0$. Then for all $x\geq x_0$ one has \begin{align*}\left|f(x+iy)-f(x+iY)\right| 
&\leq \left|f(x+iy)\right|+\left|f(x+iY)\right| \\
&\leq \frac{B}{|x+iy|}+\frac{B}{|x+iY|} \leq \frac{2B}{|x|} \leq\frac{2B}{x_0} =\varepsilon
\end{align*} for all $|y|<a$.

Thus, it remains to be shown that convergence on the compact interval $[t,x_0]$ is uniform. Since $f(z)$ is bounded, the family of functions $f^{[y]}(x):=f(x+iy)$ on the interval $[t,x_0]$ is uniformly bounded. Moreover, boundedness of $\frac{d}{dz}f(z)$ means that $\frac{d}{dx}f^{[y]}(x)$ are uniformly bounded. But this implies that $f^{[y]}(x)$ are equicontinuous. Thus, we can apply the Arzela-Ascoli theorem: for every sequence $y_n\rightarrow Y$, the sequence of functions $f^{[y_n]}(x)$ has a uniformly convergent subsequence. Now assume that $f^{[y]}(x)$ do not converge uniformly on $[t,x_0]$. 
Then there exists a sequence $u_n\rightarrow Y$ and a sequence $x_n$ of points in $[t,x_0]$ such that $\left|f^{[u_n]}(x_n)-f(x_n)\right| \geq\varepsilon$ for all $n$. But then $f^{[u_n]}(x)$ has no uniformly convergent subsequence, which is a contradiction.
\end{proof}

\paragraph{Condition \ref{Sokhotski:6}} This is again obvious, since both $h(z)$ and $g(t)$ are bounded.

\medskip

This completes the proof of Proposition~\ref{SokhotskiConvolutionMain}.

\newcommand\arxiv[2]      {\href{http://arXiv.org/abs/#1}{#2}}
\newcommand\doi[2]        {\href{http://dx.doi.org/#1}{#2}}
\newcommand\httpurl[2]    {\href{http://#1}{#2}}

\end{document}